\theoremstyle{acmplain}
\begin{document}

\title{Bounded-Memory Strategies in Partial-Information Games}

\author{Sougata Bose}
\orcid{0000-0003-3662-3915}
\email{sougata.bose@liverpool.ac.uk}
\affiliation{%
	\institution{University of Liverpool}
	\city{Liverpool}
	\country{UK}
}
\author{Rasmus Ibsen-Jensen}
\orcid{0000-0003-4783-0389}
\email{r.ibsen-jensen@liverpool.ac.uk}
\affiliation{%
	\institution{University of Liverpool}
	\city{Liverpool}
	\country{UK}
}
\author{Patrick Totzke}
\orcid{0001-5274-8190}
\email{totzke@liverpool.ac.uk}

\affiliation{%
    \institution{University of Liverpool}
    \city{Liverpool}
    \country{UK}
}

\renewcommand{\shortauthors}{S.~Bose, R.~Ibsen-Jensen, P.~Totzke}

\begin{abstract}
We study the computational complexity of solving stochastic games with mean-payoff objectives.
Instead of identifying special classes in which simple strategies are sufficient to play $\epsilon$-optimally, or form $\epsilon$-Nash equilibria,
we consider general partial-information multiplayer games and 
ask what can be achieved with (and against) finite-memory strategies up to a {given} bound on the memory. 

We show $\NP$-hardness for approximating zero-sum values,
already with respect to memoryless strategies and for 1-player reachability games.
On the other hand, we provide upper bounds
for solving games of any fixed number of players $k$.
We show that one can decide in polynomial space if, 
for a given $k$-player game, $\eps\ge 0$ and bound $b$,
there exists an $\eps$-Nash equilibrium in which all strategies use at most $b$ memory modes.

For given $\eps>0$, finding an $\epsilon$-Nash equilibrium
with respect to $b$-bounded strategies can be done in $\FNP^{\NP}$.
Similarly for 2-player zero-sum games, finding a $b$-bounded strategy that, against all $b$-bounded opponent strategies,
guarantees an outcome within $\epsilon$ of a given value, 
can be done in $\FNP^{\NP}$. 
Our constructions apply to parity objectives with minimal simplifications.

Our results improve the status quo in several well-known special cases of games.
In particular, for $2$-player zero-sum concurrent mean-payoff games,
one can approximate ordinary zero-sum values (without restricting admissible strategies) in
$\FNP^{\NP}$. 

\end{abstract}

\begin{CCSXML}
<ccs2012>
<concept>
<concept_id>10003752.10010070.10010099.10010100</concept_id>
<concept_desc>Theory of computation~Algorithmic game theory</concept_desc>
<concept_significance>500</concept_significance>
</concept>
<concept>
<concept_id>10003752.10010070.10010099.10010103</concept_id>
<concept_desc>Theory of computation~Exact and approximate computation of equilibria</concept_desc>
<concept_significance>500</concept_significance>
</concept>
</ccs2012>
\end{CCSXML}

\ccsdesc[500]{Theory of computation~Algorithmic game theory}
\ccsdesc[500]{Theory of computation~Exact and approximate computation of equilibria}

\keywords{Finite-memory strategies, Equilibria, Approximation, Imperfect information games}

\maketitle

\newcommand{\mypar}[1]{\subsubsection*{#1}}

\section{Introduction}
\label{sec:intro}
We study stochastic games of infinite duration, where $k$ players jointly move a pebble through a finite directed graph.
In every round, all players independently select, possibly at random, one out of finitely many actions.
The pebble is moved according to a fixed distribution for the current state and chosen action vector, the players receive an associated reward and the next round begins.
In general partial-information games, players do not observe the choices or rewards of other players nor the current state of the pebble. Instead, at the beginning of each round, they receive a signal that may or may not include that information.
This subsumes classical settings such as 
perfect-information games (signals include the current state, rewards, and actions of others), turn-based games (in every round only one player's action affects the outcome),
and non-stochastic games (all distributions are Dirac).
We focus on games with mean-payoff (aka {limit-average}, where 
players aim to maximise their expected long-term average rewards).
We discuss in \cref{ssec:imp-parity} how our constructions
can be applied to parity games as well with minor simplifying adjustments.

We want to solve such games, i.e., compute ($\eps$)-Nash equilibria,
zero-sum values and witnessing strategies.
However, partial-information %
games are very expressive
and not guaranteed to have (Nash, or $\eps$-Nash) equilibria or zero-sum values \cite{SB2016}. 
Moreover, already for rather restricted subclasses that do admit $\eps$-optimal strategies,
finding them is undecidable: this is the case
for single-player, reachability games with a unique signal, aka~partially observable Markov decision processes (POMDPs) \cite{MHC1999,CSZ2022}. 
Here,
finding the best finite-memory strategy (one that is representable by a finite automaton) without providing a bound on its size up front is complete for the recursive-enumerable problems
\cite{CSZ2022}.
For the case of partial-information stochastic parity games, where one player has perfect-information, finding the best finite-memory strategy is EXPTIME-complete \cite{CDNV2014}.
Even for perfect-information games such as the classic ``Big Match''~\cite{G1957,BF1968},
finite-memory strategies can be much worse than general ones \cite{HIN2018}. 
For this reason, and motivated by applications, much work has focused on identifying special cases of games in which simple strategies are sufficient to play $\eps$-optimally, or to guarantee the existence of $\epsilon$-Nash equilibria
\cite{G1957,EM1973,MN1981,FSTV1991,Put1994,MS1999,SS2002,GZ2005,HKMT2011,HPPR2018,BRORV2022,BRV2023}.

In this paper, we study the complexity of solving games \emph{with respect to bounded-memory strategies},
meaning that all players are required to play according to a finite-memory strategy
where the number of distinct memory modes is bounded a priori.
We ask what outcomes players can still guarantee, and how,
if one assumes that only strategies up to the given bound are admissible.
This approach lets us study ``realistic'' scenarios that require bounded representations for strategies.
Moreover, for many classes of games,
values/equilibria with respect to $b$-bounded strategies coincide with ordinary zero-sum values/equilibria.
For example, concurrent reachability games \cite{HKM2009,HKMT2011,FMB2013,HIM2014} admit $\eps$-optimal strategies that are memoryless ($b=1$). Consequently, our upper bounds directly apply to such special cases.

There remain significant challenges for solving games
even with known bounds on the number of states and memory modes in admissible strategies: (1)~optimal strategies/Nash-equilibria might not exist \cite{E1957}, (2)~values/outcomes of equilibria can be irrational~\cite{EY2008} and can therefore at best be approximated, and (3) strategies with rational distributions may require double-exponentially small probabilities to be ($\eps$-)optimal \cite{HKM2009}.
This last point is especially problematic since it means it requires more than polynomial time to guess/write down a good strategy with binary encoded values.

\mypar{Contributions}

We provide new complexity bounds for approximating $\eps$-optimal strategies and $\eps$-equilibria in partial-information games with bounded strategies.
We give an overview here; precise claims are delayed to \cref{sec:results} after introducing notations.

First, we give $\NP$ and $\coNP$ lower bounds for approximating the value achievable by bounded strategies under very weak assumptions.
More precisely, we show that it is \NP-hard to check if in a given 1-player reachability game the value achieved by memoryless strategies exceeds a given threshold.
This holds even if the threshold is unary encoded, and even though the games we construct admit optimal memoryless strategies that do not require randomisation.
As a simple corollary, we obtain \coNP-hardness for $2$-player, zero-sum games.
An $\NP$ lower bound 
for checking the existence of 
a memoryless winning strategy for $2$-player, partial-information, non-stochastic games
was shown by \citeauthor{ABPSTT2021}\cite{ABPSTT2021}.
Our result uses stochasticity but works for any bounded-memory strategies. Moreover, the $\NP$-hardness holds already for the $1$-player case.

As a second and main contribution,
we propose a framework
that yields simple approximation algorithms
low  in the polynomial hierarchy (at level $\le 2$).
These ultimately work by guessing polynomial representations of witnessing strategies
and then approximating outcomes in the resulting Markov chains. 
Inspired by \cite{FMB2013},
we use fixed precision floating-point notations
to polynomially represent and manipulate doubly-exponentially small values in probability distributions.

The key new ingredient is an iterative state-space reduction scheme for Markov chains.
By way of encoding this in $\FOR$, the first-order theory of the reals\cite{BPR2006}, we derive
(1) a $\PSPACE$ algorithm for checking the existence, and computing representations of, $\eps$-equilibria for $\eps\ge0$;
(2) that if suitable equilibria/strategies exist for $\eps>0$, then also ones that are  
polynomially representable and verifiable; %
(3) that the loss of precision incurred by implementing the reduction scheme approximately can be bounded.
The latter two points use a fixed-precision floating-point representation allowing us to represent doubly-exponentially small values in %
strategies using a polynomial number of bits.

Our results have direct consequences for solving several well-known types of 
games 
including partial-information parity and mean-payoff games \cite{HPPR2018},
stay-in-a-set games \cite{SS2002},
quitting games \cite{SV2001} 
and concurrent reachability, B\"uchi and parity games \cite{BBL2021,BORV2021, BBL2022,BBL2023}.
In particular, we show how to approximate values/equilibria for $2$-player, zero-sum concurrent mean-payoff games (not discounted and with no assumptions on strategies - in particular, allowing infinite memory ones as is sometimes required) \cite{MN1981,KMH2008,HKM2009,HKMT2011,O2021}. Our algorithm is in \FNPNP, improving on the state-of-the-art algorithms that use polynomial space based on reductions to the existential theory of the reals \cite{HKMT2011,EY2008}.

\mypar{Paper structure}
We start by introducing notations in \cref{sec:preliminaries}
before stating our results formally in \cref{sec:results}.
In \cref{sec:lower-bound} we give $\NP$- and $\coNP$-lower bounds.
\Cref{sec:state-space-reductions} presents a value-preserving state-space reduction procedure for Markov Chains that forms the basis for our upper bounds.

In \cref{sec:FOR},
we show how to express games and questions about strategies,
in the first-order theory of the reals
and derive $\PSPACE$ upper bounds for checking the existence of $\eps$-equilibria 
and good zero-sum strategies.
In \cref{sec:stationary} we present our main approximation scheme
and the resulting upper bounds.
This relies on an encoding into \FOR\ to derive polynomial bounds on the representations of witnessing strategies 
and an argument that small perturbations of strategies lead to only small changes in values (\cref{ssec:poly-witnesses}). 
In \cref{ssec:approx-mc} we show how to approximate values in
the Markov chains resulting from fixing candidate strategy profiles.
In \cref{ssec:approx-games} we tie the above together into an algorithm to approximate values or equilibria in \FNPNP.
We conclude by discussing
applications and new results for related classes of games
in \cref{sec:applications}.

\section{Notations}
\label{sec:preliminaries}
For a set $X$, we write $X^k$ for its $k$-fold Cartesian product and $a[i]$ for the $i$th component of a vector $a\in X^k$.
Let $X^{*}$ and $X^{\omega}$ denote
the sets of finite and infinite sequences with elements in $X$, respectively.
We write 
$\dist(X)$ for the set of all probability distributions over $X$.

\mypar{Partial-information Games}

A $k$-player game is given by finite sets of states $\states$, actions $\actions$, signals $\signals$, rewards $\colours$, and a transition function $\trans:\states \times \actions^k \to \dist(\colours^k\x\signals^k\x\states)$ that maps the current state and action vector to a probability distribution over triples describing rewards, next signals, and the next state.
The game is played in stages $i\in \{0,1,2,\dots\}$, in each of which every player receives a signal, selects an action and receives a reward.
At stage $i$, the game is in a vertex $\state_i$.
Each player $j$ is sent the signal $\signal_{i}[j]$
and selects an action $\action_{i}[j]$.
Then a triple of rewards, signals, and successor state $(\colour_{i+1},\signal_{i+1},\state_{i+1})$ is drawn randomly according to $\trans(\state_i,\action_i)$
and the game moves to the next stage in vertex $\state_{i+1}$.
This random process goes on forever and describes
an infinite sequence
$\rho=(\state_0, \signal_0,\action_0, \colour_0),(\state_1,\signal_1,\action_1,\colour_1)\dots
\in (\states\x\signals^k\x\actions^k\x\colours^k)^{\omega}$.
We call such an infinite sequence a \emph{play} and let $\plays$
denote the set of all plays.

Note that games as described above are \emph{partial-} (aka.~\emph{imperfect}) information games:
the players do not directly observe the current state, the actions selected by other players nor the rewards obtained (not even their own). All they get to see is the signals sent to them.
We call a game \emph{perfect-information} if, at every stage $i>0$, every player receives as a signal the full action vector $\action_{i-1}$, reward vector $\colour_{i-1}$ and the current state $\state_i$.
A game is \emph{turn-based} (as opposed to \emph{concurrent}) if in any stage, only one player's action influences the next state and signals. That is, for every vertex $\state\in\states$ there is some player $j$ so that $\trans(\state,\action) = \trans(\state,\action')$ whenever $\action[j]=\action'[j]$.

\mypar{Plays and Strategies}
\label{ssec:defs:strategies}
\newcommand{\stratAct}[1][\strategy]{{#1}_{\mathop{act}\nolimits}}
\newcommand{\stratUp}[1] [\strategy]{{#1}_{\mathop{up}\nolimits}}

A \emph{strategy} %
is a function
$\strat:\signals^{*} \to \dist(\actions)$,
that determines, for every sequence of signals, a probability distribution over the action set. 
It is based on a set $M$ of memory modes if it can be described by an
initial memory mode $m_0\in M$ and a
pair of functions
$\stratAct: M\x S \rightarrow \dist(\actions)$ and
$\stratUp: M \x S \rightarrow \dist(M)$
that select actions and update the memory mode, respectively.
That is, for any signal sequence, 
$s=s_0s_1s_2\ldots s_j\in \signals^*$,
$\sigma(s) = \stratAct(m_j,s_j)$ and
$m_{j+1}\eqdef \stratUp(m_j,s_j)$.

A strategy is called \emph{finite-memory} if it is based on a finite memory set $M$
and \emph{memoryless} if it is based on a singleton set $M$.
A strategy is \emph{discrete} (also \emph{pure}) if all its choices are Dirac.
In perfect-information games, we call a strategy 
\emph{stationary} if 
its choices only depend on the current state (and not previous actions or rewards).

\mypar{Strategy profiles, Probability measures}
A \emph{strategy profile} is a family $\sigma=(\sigma[j])_{j\le k}$ of strategies, one for each player.
Together with an initial state $v_0$ and signal $s_0$,
it uniquely induces a probability measure
$\Prob{\sigma}{\state_0,\signal_0} \in \dist(\plays)$
over infinite plays
(see \cite{billingsley-1995-probability} for details). %
We will write $\ExpOf{\sigma}{\state_0,\signal_0}{X}$ for the derived expected value of a random variable $X$ and drop indices when clear from the context.
\mypar{Objectives}
Each player $j\le k$ has an objective, a measurable function $\obj_j: \plays \to \R$
and selects their strategy to maximise the expectation of this objective, against any possible opponent strategies.
A game is \emph{zero-sum} if $\sum_{j\le k} \obj_j(\rho)=0$ for every $\rho\in\plays$.

In this paper, we consider mean-payoff objectives as well as the special case of reachability objectives.
A \emph{mean-payoff} objective
(for player $j\le k$) assigns play
$\rho=(\state_0, \signal_0,\action_0, \colour_0),(\state_1,\signal_1,\action_1,\colour_1)\dots$
the value
\[
    \LMP(\rho)=
    \liminf_{N\to\infty}\frac{1}{N}\sum_{i=0}^{N}\colour_i[j].
\]
The \emph{reachability} objective for a target set $T\subseteq \states$ of states
assigns play
$\rho$ as above the value $1$ if $\state_i\in T$ for some round $i\ge 0$ and $0$ otherwise.

\mypar{Optimality criteria} %
The player in a 1-player game $G$ plays to maximise the expectation of the given objective $\?O$.
Here,
the \emph{value} of the game is 
$
\valueof{G} \eqdef \sup_{\sigma \in \Sigma} \ExpOf{\sigma}{\state_0,\signal_0}{\?O}
$
and
for $\eps\ge 0$, a strategy $\sigma$ is called \emph{$\eps$-optimal} if
$\valueof{G} - \ExpOf{\sigma}{\state_0,\signal_0}{\?O} \le \eps$.

For 2-player zero-sum games $G$,
let $\Sigma$ and $\Pi$ denote the set of strategies for players 1 and 2, respectively.
Then the \emph{lower value} and \emph{upper value} are defined as
\(
\valueoflower{G} \eqdef \sup_{\sigma \in \Sigma} \inf_{\pi \in \Pi}
    \ExpOf{\sigma,\pi}{\state_0,\signal_0}{\?O}
\)
and 
\(
\valueofupper{G} \eqdef \inf_{\pi \in \Pi} \sup_{\sigma \in \Sigma} 
    \ExpOf{\sigma,\pi}{\state_0,\signal_0}{\?O}
\).
By definition, it holds that $\valueoflower{G} \le \valueofupper{G}$.
If the two are equal
then this quantity is called the
\emph{value} of the game, denoted by $\valueof{G}$.
For $\eps \ge 0$, a strategy $\sigma\in\Sigma$ is called \emph{$\eps$-optimal}
if
\(
\valueof{G} - \inf_{\pi \in \Pi} \ExpOf{\sigma,\pi}{\state_0,\signal_0}{\?O}\le \eps
\) (and similar for player~2 strategies $\pi$).

For general $k$-player games $G$
and  any $\epsilon\geq 0$, an \emph{$\epsilon$-Nash equilibrium} (NE) is a strategy profile $(\sigma_j)_{j\le k}$
so that
any unilateral deviation by a player $j$ increases their expected outcome by at most $\eps$.
That is, for any profile $\strategy'$ where $\strategy'[i]=\strategy[i]$ for all $i\neq j$,
it holds that
$\ExpOf{\strategy'}{}{\obj_j} \le \ExpOf{\strategy}{}{\obj_j}+\eps$.

\mypar{Complexity classes}
Function problems
are given by a relation $R\subseteq\{0,1\}^*\x\{0,1\}^*$
where for some polynomial $p$, every $(x,y)\in R$ implies $\abs{y}\le p(\abs{x})$.
For a class $\?X$ of decision problems,
$R$ is in the corresponding function class $\FX$
if and only if the language $\{(x,y)\mid R(x,y)\}$ is in $\?X$,
meaning the question if $R(x,y)$ is true for given $x$ and $y$ is decidable in $\?X$.
This corresponds to the search problem
where for given $x$, one can either
stop and output $y$ so that $R(x,y)$, or correctly
state that no such $y$ exists.
In particular, we are concerned with the classes
\FNPNP, %
at level $2$ of the polynomial hierarchy.

\mypar{Floating-point Representations of Distributions}
\label{ssec:floats}
For any number $u\in\N$, let $\mathbb{Q}(u)$ denote the set of non-negative dyadic rationals with a $u$-bit mantissa
\[
\mathbb{Q}(u)\eqdef \{0\} \cup \{x2^{-i} ~\mid~ x\in \{2^{u-1}, 2^{u-}+1,\dots, 2^{u}-1\},i\in \mathbb{Z}\}.
\]
The $u$-bit floating point representation of $x2^{-i}$ is
$\langle 1^u, \binexp{x}, \binexp{i}\rangle$, where $\binexp{\cdot}$ is the binary expansion.
The floating point representation of $0$ is $\langle 1^u,0\rangle$. 
The \emph{relative distance} between two non-negative 
real numbers $x$ 
and $x'$
as $\rdist(x,x')= \frac{\max(x,x')}{\min(x,x')}-1$ with the
convention that $0/0=1$ and $c/0=+\infty$ for any $c>0$.
Two non-negative reals $x$ and $x'$ are \emph{$(u,i)$-close}
if $\rdist(x,x')\leq \left(\frac{1}{1-2^{-u+1}}\right)^i-1$. 
Let $\+P(u)$ denote the set of finite probability distributions 
$(p_1,p_2,\ldots,p_n)$
so that there exist numbers
$(p'_1,p'_2,\ldots,p'_n)\in\+Q(u)$ with
$p_i=p'_i/\sum_jp'_j$ for all $i=1,\ldots,n$ and so that
$\sum_jp'_j$ is $(u,n)$-close to $1$.
We call the primed vector the ($u$-bit) \emph{\ANR} of the unprimed distribution.

\section{Overview of Our Results}
\label{sec:results}

We consider the problem of approximating $\eps$-Nash equilibria and zero-sum values
with respect to bounded-memory strategies.

\begin{definition}
Consider some $k$-player game
with objectives $(\?O_j)_{j\le k}$ and a set $\adstrat$
of admissible strategy profiles.
Let $\eps\ge 0$.

An $\epsilon$-Nash equilibrium
\emph{with respect to} $\adstrat$
is a strategy profile $(\sigma_j)_{j\le k} \in \adstrat$
so that
any unilateral deviation by a player $j$ 
that yields another profile in $\adstrat$
can at most increase their expected outcome by $\eps$.
That is, for any profile $\sigma'\in \adstrat$ where $\strategy'[i]=\strategy[i]$ for all $i\neq j$,
it holds that
$\ExpOf{\strategy'}{}{\obj_j} \le \ExpOf{\strategy}{}{\obj_j}+\eps$.

In the $2$-player zero-sum case, we say a strategy 
$\sigma$ for player $1$ \emph{$\epsilon$-achieves} value $v\in\R$ with respect to $\adstrat$
if it guarantees expected outcome of at least $v-\eps$ against all admissible opponent strategies.
That is,
if $\inf_{\pi ~\mid~ (\sigma,\pi) \in \adstrat} \ExpOf{}{\sigma,\pi}{\obj_1} \ge  v-\eps$.
\end{definition}

For instance, an $\eps$-Nash equilibrium {with respect to} finite-memory strategies, is a profile consisting of finite-memory strategies, so that
no player can, by switching to another finite-memory strategy, improve their expected payout by more than $\eps$.

\medskip
We are interested in approximating such equilibria and zero-sum strategies
with respect to finite memory strategies with bounded memory.
Formally, we are given a $k$-player partial-information game $G$, some $\eps\in [0,1]$ and a vector $\vec{b}\in \N^k$ of memory bounds, one for each player $i\le k$.
Write $\BSP{i}{\vec{b}}$ for the set of player~$i$ strategies that are based on a set $M$ of size $\card{M}\le \vec{b}[i]$
and let $\BSB{\vec{b}}$ be all strategy profiles wherein each player~$i$ uses a strategy in $\BSP{i}{\vec{b}[i]}$.
For multiplayer games ($k\ge 2$) we ask if there exists an $\eps$-Nash equilibrium with respect to (wrt) $\BSB{\vec{b}}$ and if so, we want to compute one.
For $2$-player zero-sum games, we ask if there exists a strategy that
$\epsilon$-achieves a given value $v$  with respect to (wrt) $\BSB{\vec{b}}$.

Throughout, we assume that the number $k$ of players is fixed;
The inputs $\eps$, $v$ and rewards are given in binary
and the bounds $\vec{b}$ in unary encoding\footnote{This assumption is natural as we want to compute representations of strategies, in particular, which
mixed action to play for any combination of game and memory state.}.
The probability distributions dictated by the transition function $\trans$
are assumed to be given in $u$-bit approximately normalised floating point notation.

\begin{theorem}
    \label{thm:main-1}
    \label{thm:main-existence}
    For every fixed $k\ge 1$ the following is in \PSPACE.
    \begin{enumerate}
    	\item Given a $k$-player game, bounds $\vec{b}\in\N^k$ and $\eps\ge 0$, 
    	check whether an $\eps$-Nash equilibrium
    	exists with respect to $\BSB{\vec{b}}$.
    	\item Given a $2$-player zero-sum game, bounds $\vec{b}\in\N^2$, a number $v\in\R$ and $\eps\ge 0$, check whether there exists a player $1$ strategy that $\eps$-achieves value $v$ 
    	with respect to $\BSB{\vec{b}}$.   
    \end{enumerate}
\end{theorem}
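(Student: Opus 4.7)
The plan is to encode both questions as polynomial-size sentences in the first-order theory of the reals (FOR) with a fixed number of quantifier alternations, and then appeal to the classical PSPACE decision procedure for FOR at any fixed level of the alternation hierarchy~\cite{BPR2006}. Since $k$ is fixed and the memory bounds $\vec{b}$ are given in unary, a strategy profile $\sigma\in\BSB{\vec{b}}$ is representable by polynomially many real variables: for each player $j$ and each pair $(m,s)\in M_j\times \signals$ we introduce variables for the action distribution $\stratAct(m,s)$ and the memory-update distribution $\stratUp(m,s)$, together with an initial memory mode. Well-formedness (non-negativity and summing to $1$) is a quantifier-free polynomial condition of polynomial size.

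Once a profile is fixed, the game collapses to a finite Markov chain on the product space $\states\times M_1\times\cdots\times M_k$, whose transition probabilities are low-degree polynomials in the strategy variables and in the probabilities given by $\trans$. The key step is to express, as a quantifier-free FOR formula $\phi_j(\sigma,w)$ of polynomial size, the statement that under profile $\sigma$ the expected value of $\obj_j$ from the designated initial vertex equals $w$. For this we invoke the state-space reduction scheme of \cref{sec:state-space-reductions}: it characterises the ergodic decomposition, reachability probabilities between bottom strongly connected components, and the induced mean-payoff (or reachability) value at the initial state by a uniform system of polynomial equations and inequalities whose size is polynomial in the input.

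With these ingredients, Part~(2) is the sentence
\[
\exists\,\sigma\in \BSP{1}{\vec{b}[1]}\ \forall\,\pi\in \BSP{2}{\vec{b}[2]}\ \exists\,w\ \bigl(\phi_{1}((\sigma,\pi),w)\ \wedge\ w\ge v-\eps\bigr),
\]
which has two quantifier alternations. Part~(1) is the conjunction, over the $k$ players, of an analogous $\exists\forall\exists$ statement asserting that no unilateral deviation to another $\vec{b}[j]$-bounded strategy improves the deviator's payoff by more than $\eps$; after prenexing this remains a sentence with only two alternations (the innermost existentials merely witness the two values being compared). In both cases the alternation depth depends only on $k$, and the total length is polynomial in the input, so the theorem follows from the PSPACE decidability of fixed-alternation FOR.

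The main obstacle will be the second step: capturing the value of the induced Markov chain by a polynomial-size quantifier-free FOR formula, even though stationary distributions may be irrational and the decomposition into recurrent classes depends discontinuously on which strategy probabilities are strictly positive. This is precisely what the state-space reduction is designed to handle: by replacing topological notions such as recurrence and reachability with polynomial relations that are preserved under the reduction, one avoids an expensive case split over the combinatorial structure of the induced chain and obtains a single uniform polynomial-size encoding of the value predicate.
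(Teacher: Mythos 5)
Your proposal takes essentially the same route as the paper: encode strategies and the state-space reduction of the induced Markov chain as a polynomial-size FOR sentence of bounded alternation depth, then invoke the PSPACE decision procedure for fixed-alternation FOR. The paper's Lemmas~\ref{lem:exFO-strat}, \ref{lem:exFO} and \ref{lem:exFO-eq} play exactly the roles of your three ingredients.

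One point of wording deserves care. You describe Part~(1) as ``the conjunction, over the $k$ players, of an analogous $\exists\forall\exists$ statement''; read literally, that gives each conjunct its own existential over a candidate profile, which asserts only that \emph{for each player separately} some profile resists that player's deviations --- not that a single profile resists all of them. The existential over the candidate profile (and over the $k$ values it achieves) must sit outside the conjunction, after which the universal quantifies the $k$ possible deviating strategies and a single inner existential names the deviator's value, still only two alternations. The paper's formula in Section~\ref{sec:FOR} is written this way. This is almost certainly what you intended, but ``prenexing a conjunction of $\exists\forall\exists$ sentences'' does not by itself merge distinct existentials into a shared one, so the sentence should be built with the shared $\exists$ from the start.
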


Our proof is by an effective polynomial reduction of the problem to
the satisfiability problem of $\FOR$ formula so that 
the size of the constructed formula is only polynomial in the input,
has fixed alternation depth of $2$,
and solutions dictate values and witnessing strategies/equilibria.
At its core, our encoding relies on a state reduction technique for finite Markov chains, that results from fixing a strategy profile. This requires only polynomially many collapses and can therefore be ``implemented'' as a
small system of polynomial equations.

\medskip
We show that equilibria, zero-sum values and witnessing strategies can be approximated, up to an exponential additive error, at level 2 of the polynomial hierarchy.
\begin{restatable}{theorem}{thmfnp}
    \label{thm:main-2}
    \label{thm:main-appr}
    For every fixed $k\ge 1$ the following are in \FNPNP.
    \begin{enumerate}
        \item 
            Given a $k$-player game, bounds $b\in\N^k$ and $\eps \in (0,1/2)$,
            find an $\epsilon$-Nash equilibrium with respect to $\BSB{\vec{b}}$
            (if for some $\eps'\in [0,\eps/3)$,
            an $\eps'$-Nash equilibrium with respect to $\BSB{\vec{b}}$ exists);
        \item 
            Given a $2$-player zero-sum game, bounds $b\in\N^k$, $\eps \in (0,1/2)$ and $v\in \R$, 
            find a strategy that $\eps$-achieves $v$ with respect to $\BSB{\vec{b}}$
            (if for some $\eps'\in [0,\eps/3)$,
            some strategy $\epsilon'$-achieves $v$ with respect to $\BSB{\vec{b}}$).
    \end{enumerate}
\end{restatable}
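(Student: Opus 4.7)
The plan is to combine three ingredients: (i) a polynomial bound on the representation size of witnesses, derived from the \FOR-encoding behind \cref{thm:main-1}; (ii) a Lipschitz-style stability property ensuring that small coordinate-wise perturbations of a strategy profile induce only small changes in expected payoff; and (iii) a polynomial-time approximation scheme for values in the Markov chains obtained by fixing a strategy profile, based on the state-space reduction of \cref{sec:state-space-reductions}.

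\textbf{Step 1 (polynomial witnesses, \cref{ssec:poly-witnesses}).}
Starting from the \FOR-encoding used for \cref{thm:main-1}, apply a quantifier-elimination style argument on the block of existentially quantified strategy-parameters to show that whenever some $\eps'$-witness exists for $\eps'<\eps/3$, another witness $\widetilde\sigma$ also exists whose every action-distribution is a $u$-bit approximately normalised representation (\cref{ssec:floats}) for some $u$ polynomial in the input size. This provides a target object of polynomial length to be guessed, while preserving a witness up to a controlled loss.

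\textbf{Step 2 (value approximation, \cref{ssec:approx-mc}).}
The state-space reduction of \cref{sec:state-space-reductions} expresses the expected payoff under a fixed profile as the value obtained after a polynomial number of collapse steps in the induced Markov chain. Implementing this symbolically in fixed-precision \ANR-arithmetic yields an approximation of the exact payoff to within any additive tolerance $\eps/9$ (say), and simultaneously a stability bound showing that the true payoff is a Lipschitz function of the floating-point entries of the profile, with Lipschitz constant controlled by $u$.

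\textbf{Step 3 (the \FNPNP\ algorithm, \cref{ssec:approx-games}).}
For the zero-sum case, nondeterministically guess a $u$-bit \ANR\ of a player-1 strategy $\widetilde\sigma\in \BSP{1}{\vec b[1]}$, output it, and verify with a single \NP-oracle query that no opponent response in $\BSP{2}{\vec b[2]}$ pushes the approximated payoff below $v-2\eps/3$. The oracle guesses such a $u$-bit response $\pi$ and invokes the approximation of step~2 on the Markov chain induced by $(\widetilde\sigma,\pi)$. Combining the $\eps/9$-tolerance of step~2 with the stability bound shows that the oracle's yes-instances correspond to opponents that genuinely drive the true payoff below $v-\eps$, while by step~1 any true $\eps'$-witness has a $u$-bit neighbour that passes the check. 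The $k$-player Nash case is analogous: guess the whole profile $\widetilde\sigma$ and ask the \NP-oracle for a player $j$ together with a $u$-bit deviation in $\BSP{j}{\vec b[j]}$ that improves $j$'s approximated payoff by more than $2\eps/3$.

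The \textbf{main obstacle} is a tight calibration of the three sources of error --- (a) the gap between an exact witness and its $u$-bit rounding, (b) the Lipschitz change in expected payoff induced by that rounding, and (c) the numerical error in approximating the value of the resulting Markov chain --- so that their sum fits inside the $\eps/3$ slack between $\eps'$ and $\eps$ (applied twice: once on the guessed profile and once on the guessed deviation). This relies crucially on the relative-distance framework of \cref{ssec:floats} and on the fact that the reduction of \cref{sec:state-space-reductions} produces value expressions of polynomially bounded degree whose sensitivity to input perturbations can be controlled uniformly.
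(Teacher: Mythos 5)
Your decomposition (polynomial witnesses via the \FOR-encoding, approximate Markov-chain evaluation, an outer \FNPNP\ wrapper) follows the same architecture as the paper's proof of \cref{thm:main-2}, and steps 1 and 2 map cleanly onto \cref{lem:small-candidates,lem:solan} and \cref{thm:approx-mc} respectively. Step 3, however, has a genuine gap on the \emph{completeness side of the oracle}.

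Your argument addresses two directions: the oracle's ``yes'' answers are sound, and the outer nondeterministic guess has a good $u$-bit candidate (via step 1). But you never justify the third direction needed for correctness of the output: \emph{if the guessed $u$-bit profile $\widetilde\sigma$ is in fact not a valid $\eps$-witness, then the oracle is able to find a $u'$-bit deviation that triggers a ``yes'' answer}. This is essential, because the algorithm outputs $\widetilde\sigma$ whenever the oracle answers ``no'', so the oracle answering ``no'' must certify validity. The problem is that the bad deviation witnessing the failure of $\widetilde\sigma$ may be an arbitrary strategy in $\BSP{j}{\vec b[j]}$ with probability masses that are \emph{too small to round}: the ``Lipschitz/stability'' bound from step 2 (corresponding to \cref{lem:solan}) only controls the change in payoff under a bounded relative perturbation, but an arbitrary strategy can use probabilities smaller than $2^{-2^{u'}}$ (or indeed transcendental), so the nearest $u'$-bit \ANR\ cannot be written down in polynomial space, and the relative-distance argument breaks down before you ever get to apply the stability estimate. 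The paper closes this gap with \cref{lem:best-response}: given the fixed, $u$-bit-represented $\widetilde\sigma$, one encodes player $j$'s best response in \FOR\ (using repeated squaring to express the doubly-exponentially small entries of $\widetilde\sigma$), and \cref{thm:FOR} then bounds the probabilities of an approximate best response to be at most doubly exponentially small, hence $u'$-bit representable. Only after that can \cref{lem:solan} and the rounding machinery of \cref{ssec:floats} be invoked. Without this additional \FOR-based argument on the oracle's side, the algorithm could output a non-witness, so the proposal as written does not yet establish membership in \FNPNP.

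A secondary point: your concrete thresholds (tolerance $\eps/9$, acceptance cutoff $v-2\eps/3$) are not aligned with the three error sources as used in the paper (which budgets $\eps/3$ for the best-response slack and $\eps/3$ for each of the two approximated Markov-chain values, yielding the $\tfrac{7}{3}\eps$ comparison in the oracle), but this is a routine recalibration once the best-response lemma is in place; the missing lemma is the substantive issue.
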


Using the second point and binary search one can $\eps$-approximate
the lower and upper values wrt $\BSB{\vec{b}}$
in $2$-player zero-sum games.
These complexity upper bounds above directly apply to many classes of games that admit memoryless $\eps$-optimal strategies ($\BSB{\vec{1}}$).

\medskip
In addition to the approximation algorithms, we provide an $\NP$- and $\co\NP$-lower bound for approximating zero-sum values for partial-information games.
This already holds for a very restricted class of games, as follows.
Let us call a game \emph{simple} if it is 
a 1-player reachability game so that 
the underlying transition graph is acyclic (except for target states);
the motion from the initial state $\state_0$ is uniformly at random over a set of successor states; and
for all actions $\action$ and states $\state\neq \state_0$
the distributions $\trans(\state,\action)$ are Dirac.

We show that it is hard to check if the value achieved by memoryless strategies exceeds a given threshold.
We state this in the form of a gap problem (cf.~\cite{G2006}).

\begin{theorem}
\label{thm:lower-bound}
The following decision problem is \NP-hard. %
Given $0\le l<u\le 1$ in unary encoding
and a simple game
in which
the zero-sum value is not in $(l,u)$
and in which there exists an optimal memoryless deterministic strategy.
Is the value greater than or equal to $u$?
\end{theorem}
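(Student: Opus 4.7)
The plan is to reduce from \textsc{3-Sat}. Given a 3-CNF formula $\varphi$ with variables $x_1,\dots,x_n$ and clauses $C_1,\dots,C_m$, I build in polynomial time a simple game $G_\varphi$ and unary-encoded thresholds $u=1$ and $l=(m-1)/m$ so that the value of $G_\varphi$ equals $k^\ast/m$, where $k^\ast$ is the maximum number of clauses of $\varphi$ simultaneously satisfiable. Then $G_\varphi$ has value $\ge u$ iff $\varphi$ is satisfiable, and its value always lies outside the open interval $(l,u)$, meeting the preconditions of the gap problem.

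The states are $\{s_0\}\cup\{q_{j,i,b}:j\in[m],\,i\in[n],\,b\in\{0,1\}\}\cup\{\top,\bot\}$, with $\top$ the target and $\bot$ an absorbing non-target sink. The action alphabet is $\{T,F\}$ and the signal alphabet is $\{1,\dots,n\}$. From $s_0$, under any action, the game transitions uniformly at random to $q_{j,1,0}$ for $j\in[m]$ and sends signal ``$1$''; this is the only non-Dirac transition. At each $q_{j,i,b}$ the player sees signal ``$i$'' and plays $a\in\{T,F\}$; let $b'=1$ if $b=1$ or $a$ satisfies the literal of $x_i$ occurring in $C_j$ (and $b'=b$ if no such literal exists), else $b'=0$. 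For $i<n$ the game moves to $q_{j,i+1,b'}$ sending signal ``$i{+}1$''; for $i=n$ it moves to $\top$ if $b'=1$ and to $\bot$ otherwise. The signal sequence along every play is deterministically $1,2,\dots,n$, so partial information hides which clause $j$ was chosen at $s_0$; the transition graph is layered by~$i$ and acyclic apart from self-loops at~$\top$ and~$\bot$.

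Because signal sequences are deterministic, every strategy reduces to a memoryless function $\sigma:\{1,\dots,n\}\to\dist(\{T,F\})$. Writing $p_i=\sigma(i)(T)$, the probability of reaching~$\top$ is
\[
    \tfrac{1}{m}\sum_{j=1}^{m}\Bigl(1-\prod_{k=1}^{3}\bigl(1-q_{j,k}(p)\bigr)\Bigr),
\]
where each $q_{j,k}(p)\in\{p_i,\,1-p_i\}$ according to the polarity of the $k$th literal of $C_j$. This expression is multilinear in $p\in[0,1]^n$, hence maximised at a vertex of the hypercube, i.e.\ by a deterministic memoryless strategy $\tau:\{1,\dots,n\}\to\{T,F\}$. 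Such a $\tau$ is exactly a truth assignment, and under it the reaching probability is $|\{j:\tau\models C_j\}|/m$, so $\valueof{G_\varphi}=k^\ast/m$. If $\varphi\in\textsc{Sat}$ the value equals $1=u$; otherwise $k^\ast\le m-1$ and the value is at most $(m-1)/m=l$. An optimal deterministic memoryless strategy always exists, the value avoids $(l,u)$, and distinguishing the two cases is exactly \textsc{3-Sat}, giving \NP-hardness.

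The main design obstacle is using the partial-information structure to force the player to commit to a single truth assignment shared across all clauses: a history-based strategy must not be able to vary its answer to a variable based on which clause is being verified. Routing every clause through the same fixed sequence of variable-index signals while tracking each clause's ``satisfied so far'' status only in the hidden component $b$ is what achieves this. The remaining multilinearity argument and gap arithmetic are then routine.
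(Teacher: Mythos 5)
Your reduction is essentially identical to the paper's: both construct from a 3-CNF instance a layered reachability game where the initial uniform-random step secretly selects a clause, the player then fixes one variable per round seeing only the round index, and the gap is set to $\bigl((m-1)/m,\,1\bigr)$ with $m$ the number of clauses. The only cosmetic difference is that you justify optimality of a deterministic memoryless strategy via multilinearity of the reach probability, whereas the paper observes that since the signal sequence is deterministic, every strategy is a convex combination of truth assignments; these are the same argument in different words.
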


This lower bound directly applies to threshold problems regarding the values of zero-sum games, approximating zero-sum values and witnessing strategies, and to approximating $\eps$-equilibria in multiplayer games, wrt memoryless strategies ($\BSB{\vec{1}}$).

\begin{figure*}
\subfigure[
$C_i$ contains $X_j$.
]{
\begin{tikzpicture}[
            node distance=0.75cm and 2cm,
    ]
	\node[] (I11) at (0,0) {$i,j$};
	\node[right=of I11] (I12) {$i,j+1$};

        \node[above=of I11] (W1) {$w,j$};
	\node[right=of W1] (W2) {$w,j+1$};

        \path (W1) edge  node[auto,sloped,lbl] {$\true,\false$} (W2);
        \path (I11) edge node[auto,sloped,lbl] {$\true$} (W2);
        \path (I11) edge node[auto,sloped,lbl] {$\false$} (I12);

\end{tikzpicture}
}
\hspace{1.5cm}
\subfigure[
$C_i$ contains $\lnot X_j$.
]{
\begin{tikzpicture}[
            node distance=0.75cm and 2cm,
    ]
	\node[] (I11) at (0,0) {$i,j$};
	\node[right=of I11] (I12) {$i,j+1$};

        \node[above=of I11] (W1) {$w,j$};
	\node[right=of W1] (W2) {$w,j+1$};

        \path (I11) edge node[auto,sloped,lbl] {$\true$} (I12);
        \path (W1) edge  node[auto,sloped,lbl] {$\true,\false$} (W2);
        \path (I11) edge node[auto,sloped,lbl] {$\false$} (W2);

\end{tikzpicture}
}
\hspace{1.5cm}
\subfigure[
$C_i$ contains neither $X_j$ nor $\lnot X_j$
]{
\begin{tikzpicture}[
            node distance=0.75cm and 2cm,
    ]
	\node[] (I11) at (0,0) {$i,j$};
	\node[right=of I11] (I12) {$i,j+1$};
        \node[above=of I11] (W1) {$w,j$};
	\node[right=of W1] (W2) {$w,j+1$};

        \path (W1) edge  node[auto,sloped,lbl] {$\true,\false$} (W2);
        \path (I11) edge node[auto,sloped,lbl] {$\true,\false$} (I12);

\end{tikzpicture}
}
\caption{Stages $0<j\le m$ in game $G_\varphi$. In step $j\leq m$ 
the player receives signal $j$ and
the pebble is moved out of some state $(*,j)$.
}
  \label{fig:lower-bound}
\end{figure*}
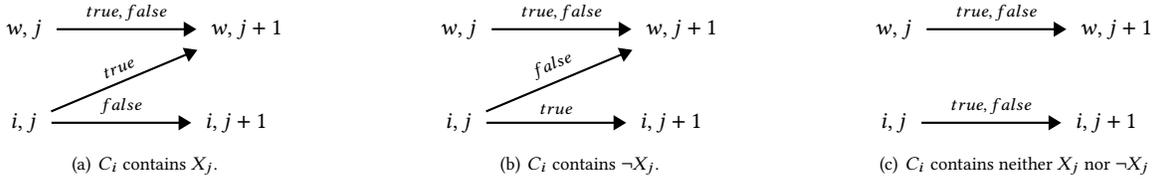

\section{Lower Bound}
\label{sec:lower-bound}
We prove \cref{thm:lower-bound} by reduction from the Boolean satisfiability problem:
an instance of 3SAT with $n$ clauses and $m$ variables 
variables $X_1,X_2,\ldots,X_m$
is a conjunction
\[
\varphi = \bigwedge_{i=1}^n \bigvee_{j=1}^3 L_{i,j}
\]
where each conjunct is called a clause and each 
$L_{i,j}$ is either a positive literal
(a variable in $\{X_k\mid k\le m\}$)
or a negative literal (the negation of a positive literal).
W.l.o.g., assume that no variable appears both positively and negatively in any clause.
An \emph{interpretation} is a mapping $\nu:\{1,\ldots,m\}\to \{\true,\false\}$ assigning a truth value to each variable.
The (\NP-hard) decision problem is if there exists an interpretation $\nu$ under which the given conjunction is true, meaning that for every clause $i$ there is some $j$ such that either
$L_{i,j}=X_k$ and $\nu(k)=\true$
or
$L_{i,j}=\neg X_k$ and $\nu(k)=\false$.
Analogously, it is \coNP-hard to check if, for a given 3SAT instance, no interpretation satisfies the given conjunction.

We proceed to describe a game $G_{\varphi}$ for a given instance $\varphi$.
This game is set up so that in the first round, independently of the player's action,
a random clause is selected to be checked.
In the following $m$ rounds, the player's actions determine an
interpretation, one variable per round and without knowing which clause is being checked.

Formally, for every $0\le j\le m+1$ the game has a state $(w,j)$
as well as states $(i,j)$ for all $0\le i\le n$.
The available actions are $\actions=\{\true,\false\}$. 
The game starts in $\state_0=(0,0)$ and in the first round, under both actions, moves uniformly at random into a state in $\{(i,1) \mid 0\le i\le n\}$, thereby selecting the clause $i$ to be checked.

In round $0<j\le m$, the player picks an action $\true$ or $\false$ to indicate that variable $X_j$ has that value in his chosen interpretation. 
From state $(w,j)$, the pebble moves surely to $(w,j+1)$, regardless of this choice.
Otherwise, from state $(i,j)\neq(w,j)$, there are three cases
depending on whether clause $C_i$ is satisfied by setting the value of $X_j$ as indicated by the player's action (cf.~\cref{fig:lower-bound}).

\begin{enumerate}[label=(\alph*)]
    \item 
If $C_i$ contains $X_j$ positively, then upon action $\true$ the pebble surely moves to $(w,j+1)$ and upon action $\false$, it surely moves to $(i,j+1)$.

\item If $C_i$ contains $X_j$ negatively, then the roles are reversed: upon action $\false$ the pebble surely moves to $(w,j+1)$ and upon action $\true$, it surely moves to $(i,j+1)$.

\item If $C_i$ does not contain $X_j$, then upon both actions the pebble surely
to $(i,j+1)$.
\end{enumerate}
In every round from a state $(*,j)$, the player receives signal $j$.
This ends the description of game $G_{\varphi}$.

The player only knows the current round number $j$ up to $m+1$, but not in which state the pebble resides, i.e., which clause is being checked and whether it is already satisfied. By his actions in rounds $1,\ldots,m$, he picks a distribution over the possible variable interpretations.

Let us write
$W\eqdef\XReach{\{(w,m+1)\}}$
for the set of plays that reach state $(w,m+1)$.
The following formalises the key property of the game $G_\varphi$.

\begin{lemma}
    \label{lem:lower-bound-vars}
Let $\nu$ be an interpretation that satisfies $b$ out of $n$ clauses
and let $\sigma$ be a strategy that plays action $\nu(X_j)$ in round $0<j\le m$.
Then $\Prob{\sigma}{\state_0,0}(W) = b/n$.
\end{lemma}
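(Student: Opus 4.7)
The plan is a straightforward path analysis: after the initial random step the game decomposes into $n$ deterministic sub-plays, one per clause, each independently won iff the chosen interpretation satisfies that clause.

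First, I would observe that from $\state_0 = (0,0)$, the first transition is independent of the player's action and lands in state $(i,1)$ uniformly at random for $i \in \{1,\dots,n\}$ (I am reading the index range as running over clauses, so each outcome has probability $1/n$). Therefore, by the law of total probability,
\[
\Prob{\sigma}{(0,0),0}(W) = \frac{1}{n}\sum_{i=1}^{n}\Prob{\sigma}{(i,1),1}(W).
\]
Hence it suffices to show that $\Prob{\sigma}{(i,1),1}(W) = 1$ if $\nu$ satisfies clause $C_i$, and $0$ otherwise.

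Next, I would unfold the play from $(i,1)$ under $\sigma$. Since $\sigma$ plays the fixed action $\nu(X_j)$ in round $j$ (it only sees the signal $j$, and the prescribed action depends only on $j$) and all non-initial transitions in the game are Dirac, the induced trajectory is deterministic. By the case analysis in \cref{fig:lower-bound}, from $(i,j)$ the pebble moves to $(w,j+1)$ exactly when the action $\nu(X_j)$ satisfies the literal on $X_j$ occurring in $C_i$ — i.e., when $X_j \in C_i$ and $\nu(X_j)=\true$, or $\neg X_j \in C_i$ and $\nu(X_j)=\false$ — and otherwise to $(i,j+1)$. Moreover, once in a state of the form $(w,\cdot)$, the pebble remains in the $w$-column and surely reaches $(w,m+1)$.

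Combining, the trajectory starting from $(i,1)$ reaches $W$ iff there exists some $j \in \{1,\dots,m\}$ for which $\nu(X_j)$ makes the literal of $X_j$ in $C_i$ true, which is exactly the condition that $\nu$ satisfies $C_i$. Summing over the $n$ equally likely starting clauses, the $b$ satisfied clauses each contribute $1/n$ and the remaining $n-b$ contribute $0$, giving
\[
\Prob{\sigma}{(0,0),0}(W) = \frac{b}{n}.
\]
There is no real obstacle; the only point requiring care is matching the three cases of \cref{fig:lower-bound} with the literal structure of $C_i$ and noting the assumption (stated just before the construction) that no variable appears both positively and negatively in the same clause, so the cases are mutually exclusive and the ``satisfied by $X_j$'' event is well-defined.
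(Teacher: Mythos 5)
Your proof is correct and follows exactly the same route as the paper's: condition on the uniformly random clause $(i,1)$ selected in the first step, then observe that the continuation is deterministic and reaches $(w,m+1)$ precisely when $\nu$ satisfies $C_i$. The paper states this in two sentences; your version spells out the law-of-total-probability decomposition, the per-clause determinism from the Dirac transitions, and the reconciliation of the stated index range $\{(i,1)\mid 0\le i\le n\}$ with the intended $n$ clauses — all fine, just more explicit.
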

\begin{proof}
In the first round, nature picks uniformly at random a state $(i,1)$ for some $0<i\le n$.
Then in exactly $b$ out of $n$ many cases (the ones where clause $C_i$ is satisfied by the assignment), the play leads to $(w,m+1)$.
\end{proof}

\begin{proof}[Proof of \cref{thm:lower-bound}]
To show the \NP-hardness, we reduce from 3SAT.
Given an instance $\varphi$ with $n$ clauses and $m$ variables,
let $c$ be the maximal number of clauses satisfied by any interpretation.

Let $G$ be the game $G_\varphi$ where the player receives reward $1$ whenever the pebble is in state $(w,m+1)$ and $0$ otherwise.
This game is \emph{simple}; in particular, it is a 1-player reachability game with target $(w,m+1)$, encoded as a mean-payoff game.

We show that the value in $G$ is $1$ if $\varphi$ is satisfiable and at most $1-1/n$ otherwise. 
Suppose first that $\varphi$ is satisfiable. Then there exists an interpretation that satisfies $c=n$ clauses and the value is $1$, by \cref{lem:lower-bound-vars}.
Otherwise, if $\varphi$ is not satisfiable,
every interpretation satisfies at most $c$ clauses and thus witnesses at most 
value $c/n$.
As the history of signals is the same for all infinite plays,
every (arbitrary) strategy
$\sigma$ defines a distribution over all possible variable interpretations.
The value it witnesses is the linear combination
of the values witnessed by discrete strategies, and so also at most
$c/n\le 1-1/n$. 

Let $l\eqdef 1-1/n$ and $u\eqdef 1$.
By the argument above, the value of the game is not in $(l,u)$
and $\ge u$ if, and only if, $\varphi$ is satisfiable.
\end{proof}

By the argument above, it is \coNP-hard to check if the value of a simple game is less than or equal to a threshold $l$ given in unary.
Similarly, one can introduce a passive second player that receives inverted rewards ($-r$ whenever player~1 gets $r$).
This shows that it is \coNP-hard to check if the value of a $2$-player zero-sum safety game exceeds a threshold $u$.

The games constructed above all admit memoryless optimal strategies.
Therefore, approximating the value wrt memoryless ($\BSB{b=\vec{1}}$) strategies is at best \FNP, even for simple games.

\section{Mean-Payoff Values for Markov Chains}
\label{sec:alg-mc}
\label{sec:state-space-reductions}
We demonstrate how to iteratively collapse Markov chains
with the aim of computing the expected mean-payoff value of a fixed initial state. %
The underlying idea is to simultaneously summarise the expected reward and duration (number of steps) of paths between two nodes
and adjust these summaries consistently during collapses. Once an irreducible chain is produced it is trivial to read off the mean-payoff values.

A Markov chain is a $1$-player game with only a single, unique action $a\in \actions$ for the player in every state 
(the player is just there to collect the rewards and cannot influence the outcome).
To simplify notations, we will assume that a Markov chain $M$
has states $\states=\{1,\dots,n\}$
and for every edge from states $i$ to $j$, let
$\probm(i,j)$ and
$\rew(i,j)\in\R$ 
be the associated probability and reward respectively.
Each edge from $i$ to $j$ also has a duration denoted by 
$\dur(i,j)$. %
We define the \emph{\mpr} of an infinite play $\rho=s_0s_1\ldots$ as the limit of total reward by the total duration of expanding prefixes:

\begin{equation}
\label{def:mpr}    
\MPR(\rho) \eqdef \liminf_{N\to\infty}
            \frac{
                \sum_{i=0}^{N-1} \rew(s_i,s_{i+1})
            }{
                \sum_{i=0}^{N-1} \dur(s_i,s_{i+1})
            }
\end{equation}

Notice that this coincides with the usual \mpo\ $\LMP(\play)$ if all edges have duration $1$.

Let's recall a few notions about Markov chains.
Write $i\to j$ to denote that there is a path from state $i$ to $j$ with non-zero probability.
We say $i$ and $j$ \emph{communicate} if both $i\to j$ and $j\to i$.
A set $S\subseteq\states$ of states is communicating if all its elements are pairwise communicating.
$S$ is closed if $i\to j$ and $i\in S$ implies $j\in S$.
Closed communicating sets (CCSs) are also called {bottom strongly connected components}.
A state $i$ is \emph{recurrent} if it belongs to a CCS and \emph{transient} otherwise.
It is \emph{absorbing} if $\probm(i,i)=1$. That is, an absorbing state describes its own (singleton) CSS.

\medskip
We introduce two operations on Markov chains that will preserve the expected \mpr\
$\ExpOf{}{s}{\MPR}$ from any state $s$.
Proofs of their correctness (\cref{lem:state-elim,lem:loop-elim}) are in \cref{app:MC}.

\begin{definition}[{State Elimination}]
\label{def:state-elim}
Let $M=(\states,\probm)$ be a Markov chain with states $\{1,\dots,n\}$ and probability matrix $\probm$ so that $\probm(n,n)=0$
(state $n$ does not have a self-loop).
The Markov chain $M'=(\states',\probm')$ {results from $M$ by eliminating state $n$}
if it has states $\states'=\states \setminus \{n\}$ and,
for every two states $i,j\neq n$, the values of probabilities, rewards, and durations are updated as follows
(see \cref{fig:state_elim_small}).
\begin{align*}
		\probm'(i,j) &~\eqdef~ \probm(i,j) + \probm(i,n)\cdot \probm(n,j)\\
		\rew'(i,j)   &~\eqdef~ \probm(i,j)\rew(i,j) + (\probm(i,n)\cdot \probm(n,j))(\rew(i,n)+\rew(n,j))\\
		\dur'(i,j)   &~\eqdef~ \probm(i,j)\dur(i,j) + (\probm(i,n)\cdot \probm(n,j))(\dur(i,n)+\dur(n,j))		
\end{align*}
\end{definition}

\begin{figure}[t]
\centering
\begin{tikzpicture}[
            node distance=0.75cm and 1cm,
    ]
	\node[state] (4) at (0,0) {$i$};
	\node[state,below left=of 4] (2) {$n$};
	\node[state, below right=of 2] (3) {$j$};
	
        \path (4) edge
            node[lbl,swap,pos=0.2] {$\probm(n,i)$} 
            node[lbl,swap,pos=0.5] {$\rew(n,i)$} 
            node[lbl,swap,pos=0.8] {$\dur(n,i)$}
        (2);

        \path (4) edge 
            node[lbl,pos=0.3] {$\probm(n,j)$}
            node[lbl,pos=0.5] {$\rew(n,j)$}
            node[lbl,pos=0.7] {$\dur(n,j)$}
        (3);
        
        \path (2) edge 
            node[lbl,swap,pos=0.2] {$\probm(n,j)$}
            node[lbl,swap,pos=0.5] {$\rew(n,j)$}
            node[lbl,swap,pos=0.8] {$\dur(n,j)$}
        (3);
	
        \node (mid) at (2.25,-1) {$\leadsto$};
	
	\node[state] (4') at (4,0) {$i$};
        \node[state,draw=none,below left=of 4'] (2') {};
	\node[state, below right=of 2'] (3') {$j$};
	
        \path (4') edge 
            node[lbl,pos=0.3] {$\probm'(n,j)$}
            node[lbl,pos=0.5] {$\rew'  (n,j)$}
            node[lbl,pos=0.7] {$\dur'  (n,j)$}
        (3');
\end{tikzpicture}
\caption{Elimination of state $n$:
every length-two path via state $n$ in $M$ (on the left) is removed and the corresponding direct edge re-weighted in $M'$ (right).
The expected reward and duration of going from $i$ to $j$ remains the same.}
\label{fig:state_elim_small}
\end{figure}
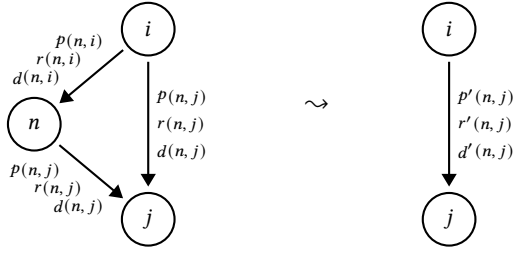

\begin{restatable}{lemma}{stateelim}
	\label{lem:state-elim}
	Suppose $M'$ results from $M$ by eliminating state $n$.
        Then 
        $\ExpOf{M}{s}{\MPR}=\ExpOf{M'}{s}{\MPR}$
        for every state $s\neq n$ .
\end{restatable}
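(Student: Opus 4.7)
My plan is to prove the identity by setting up a measure-preserving projection between plays of $M$ and $M'$ and then reducing to a stationary computation on closed communicating sets.

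The first step is to define $\pi:\plays(M)\to\plays(M')$ by deleting every occurrence of state $n$ from a play. Since $\probm(n,n)=0$, no play stays at $n$ forever, and between any two consecutive occurrences of $n$ there is at least one non-$n$ state, so $\pi(\rho)$ is an infinite sequence of non-$n$ states for every $\rho$ starting from $s\neq n$. The key calculation is that for $i,j\neq n$,
\[
\Prob{M}{i}(\text{next non-}n\text{ state is }j)\;=\;\probm(i,j)+\probm(i,n)\probm(n,j)\;=\;\probm'(i,j),
\]
again using $\probm(n,n)=0$. Iterating via the strong Markov property at successive visits to non-$n$ states then gives $\pi_{*}\Prob{M}{s}=\Prob{M'}{s}$ at the level of cylinder sets.

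Next, because $\MPR$ is tail-measurable, under $\Prob{M}{s}$ almost every play eventually lives in some closed communicating set $C$, and the ergodic theorem then gives a deterministic value of $\MPR$ on $C$ equal to $\sum_{i,j\in C}\mu_C(i)\probm(i,j)\rew(i,j)$ divided by $\sum_{i,j\in C}\mu_C(i)\probm(i,j)\dur(i,j)$, where $\mu_C$ is the unique stationary distribution on $C$. The same picture holds in $M'$ with stationary distribution $\mu'_{C'}$ on $C'=C\setminus\{n\}$. A short calculation using the balance equations shows $\mu'_{C'}(i)=\mu_C(i)/(1-\mu_C(n))$ for $i\in C'$; then unfolding the definitions of $\probm'$, $\rew'$ and $\dur'$ and using the stationarity identity $\mu_C(n)=\sum_i\mu_C(i)\probm(i,n)$, the contributions of edges incident to $n$ in $M$ are absorbed exactly into the direct and length-two contributions on edges of $M'$ between pairs of non-$n$ states, and the common factor $1/(1-\mu_C(n))$ cancels between numerator and denominator. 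Finally, since $\pi$ maps the event ``eventually enters CCS $C$'' bijectively onto ``eventually enters $C\setminus\{n\}$'' with equal probability, integrating over the entry CCS yields the claimed equality of expected MPRs.

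The main obstacle is the algebraic bookkeeping indicated in the previous paragraph: one has to unfold $\rew'(i,j)$ and $\dur'(i,j)$, split the double sums over $C\times C$ into the four cases according to whether $i$ or $j$ equals $n$, and confirm that the $(i=n)$ and $(j=n)$ slices are precisely what gets re-attached to the $(i,j)\in C'\times C'$ slice through the definition of the eliminated edges. The argument is conceptually routine but cannot be skipped, because it is exactly where the specific formulas from Definition~\ref{def:state-elim} get used.
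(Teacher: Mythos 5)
Your proof is correct, but it takes a genuinely different route from the paper's. Where the paper introduces the abstract notion of a \emph{summary} chain (one-to-one CCS correspondence, equal reach probabilities, equal expected cycle reward and duration), proves via an auxiliary chain $M''$ with helper states $[s,n,t]$, $[s,t]$ and an ``edge collapse'' lemma that state elimination produces a summary, and then invokes the return-time identity $\ExpOf{}{b}{\MPR}=\ExpOf{}{b}{\XRS{b}}/\ExpOf{}{b}{\XDS{b}}$, you instead go directly through the ergodic decomposition: you push the trajectory measure forward under the deletion map $\pi$, match reach probabilities to CCSs, and on each CCS compute the a.s.\ value of $\MPR$ from the stationary distribution. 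The key technical step in your version --- verifying $\mu'_{C'}(i)=\mu_C(i)/(1-\mu_C(n))$ from the balance equations and then checking that the double sums telescope so that the $i=n$ and $j=n$ slices of $\sum_{i,j\in C}\mu_C(i)\probm(i,j)\rew(i,j)$ reappear exactly as the $\probm(i,n)\probm(n,j)(\rew(i,n)+\rew(n,j))$ contributions to the eliminated edges --- is the stationary-distribution counterpart of the paper's cycle-value comparison; the two are equivalent via Kac's formula $\ExpOf{}{s}{\XRS{s}}=\tfrac{1}{\mu(s)}\sum_{i,j}\mu(i)\probm(i,j)\rew(i,j)$, but you never need the auxiliary chain. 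I checked the algebra you sketched and it closes: using $\sum_{i\in C'}\mu_C(i)\probm(i,n)=\mu_C(n)$, $\sum_{j\in C'}\probm(n,j)=1$ and $\probm(n,n)=0$, the bracket indeed collapses to $\sum_{i,j\in C}\mu_C(i)\probm(i,j)\rew(i,j)$, and the common factor $1/(1-\mu_C(n))$ cancels between numerator and denominator. Your approach is cleaner for a reader who knows ergodic theory for finite Markov chains; the paper's summary framework has the advantage of being reused verbatim for loop elimination and of avoiding any mention of stationary distributions, which matters slightly later when everything gets translated into $\FOR$ and floating-point arithmetic. One small point you should make explicit in a full write-up: you rely on the bijection between CCSs of $M$ and of $M'$ (communication is preserved because, with $\probm(n,n)=0$, every path through $n$ can be shortened to one avoiding $n$ and conversely every length-two detour can be reinserted), and on the fact that $\{n\}$ itself cannot be a CCS.
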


\begin{figure}[t]
\centering
\begin{tikzpicture}[
            node distance=1.5cm and 0.4cm,
    ]
	\node[state] (2) at (0,0) {$n$};
	\node[state,below left=of 2] (4) {j};
	\node[state, below right=of 2] (3) {$i$};
	
        \draw (2) edge[loop left,looseness=15] node[lbl, align=left] {
            $\probm(n,n)$\\
            $\rew(n,n)$\\
            $\dur(n,n)$
        } (2);
        \path (2) edge
            node[lbl,pos=0.5] {$\probm(n,i)$} 
            node[lbl,pos=0.7] {$\rew(n,i)$} 
            node[lbl,pos=0.9] {$\dur(n,i)$}
        (3);
        
        \path (2) edge 
            node[lbl,swap,pos=0.5] {$\probm(n,j)$}
            node[lbl,swap,pos=0.7] {$\rew(n,j)$}
            node[lbl,swap,pos=0.9] {$\dur(n,j)$}
        (4);
	
        \node (mid) at (2.10,0) {$\leadsto$};
	
	\node[state] (30) at (4.5,0) {$n$};
	\node[state,below left=of 30] (33) {$j$};
	\node[state, below right=of 30] (32) {$i$};
	
        \path (30) edge 
            node[lbl,pos=0.5] {$\probm'(n,i)$}
            node[lbl,pos=0.7] {$\rew'  (n,i)$}
            node[lbl,pos=0.9] {$\dur'  (n,i)$}
        (32);
        
        \path (30) edge 
            node[lbl,swap,pos=0.5] {$\probm'(n,j)$}
            node[lbl,swap,pos=0.7] {$\rew'  (n,j)$}
            node[lbl,swap,pos=0.9] {$\dur'  (n,j)$}
        (33);
\end{tikzpicture}

\caption{Elimination of loop $n$ in $M$ (left).
In $M'$ (right), the corresponding edge has probability $0$ and all other edges from $n$
are re-weighted to match the expected duration and expected reward of paths that start by iterating
the loop.
}
\label{fig:loop_elim_small}
\end{figure}
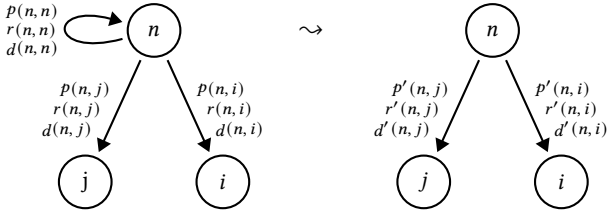

\begin{definition}[{Loop Elimination}]
\label{def:loop-elim}
Let $M=(\states,\probm)$ be a Markov chain with states $\{1,\dots,n\}$ and probability matrix $\probm$ so that $0<\probm(n,n)<1$
(state $n$ is not absorbing and has a self-loop).
Eliminating loop $n$ results in a Markov chain
$M'=(\states',\probm')$ with $\states'=\states$
where for any edge from state $i\neq n$ to $j$,
the probability, reward, and duration are as in $M$.
For edges going out of state $n$
they are $\probm'(n,n)=\rew'(n,n)=\dur'(n,n)=0$ and for all $j\neq n$,
	\begin{align*}
	\probm'(n,j) &~\eqdef~ \frac{\probm(n,j)}{1-\probm(n,n)} \\
	\rew'(n,j) &~\eqdef~ \frac{1}{1-\probm(n,n)}\rew(n,n) + \frac{\probm(n,j)}{1-\probm(n,n)} \rew(n,j) 		\\
	\dur'(n,j) &~\eqdef~ \frac{1}{1-\probm(n,n)}\dur(n,n) + \frac{\probm(n,j)}{1-\probm(n,n)} \dur(n,j) .	
	\end{align*}
\end{definition}
The intuition behind this definition is that iterating the loop $n\to n$ and then going to $j\neq n$ is replaced by a direct step that preserves the probability of ending in that particular state $j$,
as well as the expected sum of rewards and durations on the way.
Recall that the expected number of times the loop is used is ${1}/{1-\probm(n,n)}$.

\begin{restatable}{lemma}{loopelim}
	\label{lem:loop-elim}
	Suppose $M'$ results from $M$ by eliminating loop $n$.
        Then 
        $\ExpOf{M}{s}{\MPR}=\ExpOf{M'}{s}{\MPR}$
        for every state $s\le n$ .
\end{restatable}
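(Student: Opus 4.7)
The plan is to reduce to the case of irreducible Markov chains via decomposition into closed communicating sets (CCSs), and then use the ergodic characterisation of the mean-payoff reward to compare the two chains inside each CCS. Since $\ExpOf{M}{s}{\MPR}$ is a convex combination of the local MPRs of the CCSs reachable from $s$, weighted by the reaching probabilities, it suffices to show that loop elimination preserves (i) which CCSs are reached from $s$ and with what probability, and (ii) the local MPR inside each CCS.

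For (i), observe that the only altered transitions are those out of $n$: the outgoing distribution from $n$ in $M'$ is exactly the conditional distribution in $M$ of the first non-$n$ state visited after entering $n$, given that eventual exit occurs (which it does almost surely, since $\probm(n,n)<1$). Consequently, if $n$ is transient in $M$ it remains transient in $M'$ and the reachable CCSs (none of which contain $n$) are structurally identical. If $n$ lies in a CCS $C$ of $M$, then $C$ stays closed and communicating in $M'$, since every path in $M$ between two states of $C$ can be chosen to avoid the self-loop at $n$ and its edges are unchanged or merely re-weighted in $M'$.

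For (ii), fix a CCS $C$ containing $n$ with stationary distribution $\pi$ in $M$. A direct check of the balance equations shows that $\pi'$, defined by $\pi'(i)\propto\pi(i)$ for $i\neq n$ and $\pi'(n)\propto (1-\probm(n,n))\pi(n)$, is stationary for $M'$ on $C$: removing the self-loop deflates the stationary mass at $n$ by precisely the factor $1-\probm(n,n)$, which is the reciprocal of the expected number of consecutive visits at $n$ in $M$. By the ergodic theorem for finite irreducible chains, the MPR on $C$ is almost surely the ratio of the stationary-weighted reward sum to the stationary-weighted duration sum, and analogously in $M'$ with the primed quantities. We split each sum according to whether $i=n$, substitute the definitions of $\probm'$, $\rew'$ and $\dur'$, and verify that the self-loop contributions at $n$ in $M$ are exactly reconstituted, after summing over $j\neq n$, by the extra $\rew(n,n)$ and $\dur(n,n)$ terms built into $\rew'(n,j)$ and $\dur'(n,j)$; the common normalisation of $\pi'$ relative to $\pi$ then cancels between numerator and denominator.

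The main obstacle will be the bookkeeping in this last computation, since the self-loop contribution at $n$ in $M$ is smeared across all outgoing edges of $n$ in $M'$. Verifying the cancellation requires carefully tracking how the factor $(1-\probm(n,n))$ appearing in $\pi'(n)$ interacts with the $1/(1-\probm(n,n))$ factors hidden inside $\probm'$, $\rew'$ and $\dur'$. Once the per-CCS equality is established, the general case for an arbitrary state $s\le n$ follows by averaging over which CCS is reached from $s$, using that these reaching probabilities are preserved by point~(i).
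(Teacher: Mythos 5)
Your proposal takes a genuinely different route from the paper. The paper introduces an abstract notion of a \emph{summary} of a Markov chain (preservation of the CCS structure, of the CCS-reaching probabilities, and of the expected cycle-return rewards and durations), proves that summaries preserve $\MPR$, and then reduces loop elimination to two applications of that machinery: it first unrolls the self-loop at $n$ into a \emph{countably infinite} intermediate chain $M''$ with auxiliary states $[n,\ell,j]$ (one for each number $\ell$ of loop iterations), shows $M''$ summarises $M$, and then collapses the auxiliary states by the edge-collapse lemma to obtain $M'$. Your proof avoids both the summary abstraction and the infinite unfolding: you decompose into CCSs, observe directly that the system of linear equations for the CCS-reaching probabilities has the same solution in $M$ and $M'$ (the equation at $n$ just gets scaled by $1-\probm(n,n)$), and then compare $\MPR$ per CCS via the ergodic/stationary-distribution formula with the explicit change of measure $\pi'(n)\propto(1-\probm(n,n))\pi(n)$, $\pi'(i)\propto\pi(i)$ for $i\ne n$. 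This is a perfectly legitimate and arguably more elementary strategy. What the paper's route buys is reuse: the same summary machinery handles both state and loop elimination, and the cycle-return form $\ExpOf{}{s}{\XRS{s}}/\ExpOf{}{s}{\XDS{s}}$ is what the rest of the paper uses. Your route buys a finite, self-contained argument with no detour through an infinite chain.

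Two things are currently missing and are precisely where the work lives. First, you invoke an ``ergodic theorem'' giving $\MPR$ on a CCS as the ratio of the $\pi$-weighted expected reward to the $\pi$-weighted expected duration; this is true, but it needs a renewal-type argument (the paper proves its analogue as \cref{lem:mc:MPR-eq-MCV}, handling both numerator and denominator and the division), and you should either cite a suitable theorem that covers the ratio or give the argument. Second, and more importantly, you explicitly defer the ``bookkeeping'' that the self-loop's contribution to the $\pi$-weighted sums is exactly redistributed over the outgoing edges of $n$. This is not routine: the cancellation only works out once you commit to a precise convention for how $\rew'$ and $\dur'$ relate to edge probabilities (e.g.\ conditional per-edge reward vs.\ probability-weighted reward) and then track the interaction between the $1-\probm(n,n)$ factor in $\pi'(n)$ and the $\frac{1}{1-\probm(n,n)}$ factors in $\probm'(n,\cdot)$, $\rew'(n,\cdot)$, $\dur'(n,\cdot)$. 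As written the proposal identifies the right cancellation mechanism but does not carry it out, so it is a plan with the crux left open rather than a finished proof.
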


The two elimination operations 
can be used to determine \mpr values in finite Markov chains.
Indeed, suppose we want to determine $\ExpOf{M}{s}{\MPR}$. %
Exhaustively eliminating all loops and states except $s$ results
in a Markov chain $M'$ where every original CCS is collapsed into a single absorbing state.
By \cref{lem:state-elim,lem:loop-elim},
the expected \mpr\ of state $s$ is the same in $M$ and $M'$.
Now, if $M'$ contains only a single absorbing state $s$, then trivially
$\ExpOf{M'}{s}{\MPR} = \rew'(s,s) / \dur'(s,s)$.
Otherwise, if $s$ is the only transient state in $M'$
then $\ExpOf{M'}{s}{\MPR} =\sum_{i\neq s} \probm'(s,i) \left(\rew'(i,i) / \dur'(i,i)\right)$.

Notice that the number of elimination steps is at most $2\abs{\states}$.
We will implement this procedure %
symbolically in \FOR\ in the next section,
and approximately, in \cref{sec:approx}, to prove our main results.

\section{Expressibility in \FOR}
\label{sec:FOR}
We fix a $k$-player game $G$, bounds  $\vec{b}\in \N^k$ and let 
\[N\eqdef \card{\states}\cdot\card{\signals}^k\cdot\prod_{i=1}^k \vec{b}[i].\]
We show how to  express 
$\eps$-Nash equilibria wrt $\BSB{\vec{b}}$ (in the multiplayer case) 
and
strategies that $\epsilon$-achieve some value $v$ wrt $\BSB{\vec{b}}$ (in the $2$-player zero-sum case)
in $\FOR$. 
The steps to do this are as follows. We will elaborate on each step in subsequent paragraphs.
\begin{enumerate}	
\item We existentially guess 
a strategy profile %
(in the multiplayer case), 
or 
a strategy for player $1$
(in the $2$-player, zero-sum case),
such that the probability distributions used in the strategies are encoded in the variables. 
\item For any player $i$, we have a formula that depends on variables encoding a strategy profile, and the formula encodes the outcome for the player $i$ in the Markov chain induced by the strategy profile.
\item We encode that the strategies guessed in step $1$
satisfy the desired criteria. 
\end{enumerate}

\paragraph{Step 1: Guessing strategies and strategy profiles\label{par:guess}}
Encoding a strategy in $\BSP{i}{\vec{b}}$ for any player $i$ 
(and thus strategy profiles in $\BSB{\vec{b}}$), 
in $\FOR$ is straightforward. 
Indeed, a strategy in $\BSP{i}{\vec{b}}$ consists of 
an action function $\stratAct: M\x S \rightarrow \dist(\actions)$ and 
a memory update function $\stratUp: M \x S \rightarrow \dist(M)$.
For the action function, we have for each signal, each memory state and each of the player's possible actions under that signal, 
a variable. 
Specifically, if the actions with signal $s$ and memory $m$ are $a_1,\dots,a_{\card{\actions}}$, 
we have variables $q_{1}^{s,m,i},\dots,q_{\card{\actions}}^{s,m,i}$. 
To express that they should form a probability distribution, 
we have inequalities $0\leq q_{j}^{s,m,i}\leq 1$, for each $j\in\{1,\dots,\card{\actions}\}$ and 
$\sum_{j=1}^{\card{\actions}}q_{j}^{s,m,i}= 1$. 

Similarly, for the memory update function $\stratUp$, we just, for each signal $s$, each memory $m$, and each action vector $a$ in that state have variables $b_{1}^{s,m,a,i},\dots,b_{\vec{b}[i]}^{s,m,a,i}$. 
To express that they should form a probability distribution, 
we have inequalities $0\leq b_{j}^{s,m,a,i}\leq 1$ for each $j\in\{1,\dots,\vec{b}[i]\}$ and $\sum_{j=1}^{\vec{b}[i]}b_{j}^{s,m,a,i}= 1$. 
This gives us the following lemma.

\begin{lemma}
	\label{lem:exFO-strat}
	For any $k$-player game $G$ and $i\le k$, %
        one can construct a $\FOR$ formula $\text{Strat}_i(\vec{x_i})$, with $|\vec{x_i}|= (\card{\signals}\cdot\vec{b}[i])(\card{\actions}+\vec{b}[i])$ variables, 
	such that there is a one-to-one correspondence between satisfying assignments of  $\vec{x_i}$
	 and strategies in $\BSP{i}{\vec{b}}$
        for player~$i$.
\end{lemma}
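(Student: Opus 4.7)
My plan follows the encoding already sketched in the preceding paragraph, with the only subtlety being the bookkeeping needed to match the variable count and to have a genuine bijection. I would first fix, once and for all, a canonical memory set $M=\{1,\dots,\vec{b}[i]\}$ with a designated initial mode (say $1$), together with canonical orderings of $\signals$ and $\actions$. For each pair $(s,m)\in \signals\times M$ I would introduce two disjoint blocks of real variables: the block $q^{s,m,i}_1,\dots,q^{s,m,i}_{\card{\actions}}$ representing the action distribution $\stratAct(m,s)\in\dist(\actions)$, and the block $b^{s,m,i}_1,\dots,b^{s,m,i}_{\vec{b}[i]}$ representing the memory update distribution $\stratUp(m,s)\in\dist(M)$. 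Counting immediately gives $\card{\signals}\cdot\vec{b}[i]$ pairs, each contributing $\card{\actions}+\vec{b}[i]$ variables, which matches the announced bound $(\card{\signals}\cdot\vec{b}[i])(\card{\actions}+\vec{b}[i])$.

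Next, I would define $\text{Strat}_i(\vec{x_i})$ as the quantifier-free conjunction, ranging over all $(s,m)$, of the simplex constraints
\[
\bigwedge_{j=1}^{\card{\actions}} 0 \le q^{s,m,i}_j \ \wedge\ \sum_{j=1}^{\card{\actions}} q^{s,m,i}_j = 1 \ \wedge\ \bigwedge_{j=1}^{\vec{b}[i]} 0 \le b^{s,m,i}_j \ \wedge\ \sum_{j=1}^{\vec{b}[i]} b^{s,m,i}_j = 1.
\]
Every atom is a degree-one polynomial (in)equality over the reals, so the formula lies in the quantifier-free fragment of $\FOR$. Its size is polynomial in the description of $G$ and in $\vec{b}[i]$, which is what we need since $\vec{b}$ is given in unary.

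Finally, I would verify the one-to-one correspondence in both directions. From any assignment satisfying $\text{Strat}_i$, the coefficients $q^{s,m,i}_j$ and $b^{s,m,i}_j$ read off two well-defined families of probability distributions on $\actions$ and on $M$ respectively; together with the fixed initial mode $1\in M$, they yield a unique strategy in $\BSP{i}{\vec{b}}$. Conversely, any strategy $\sigma\in\BSP{i}{\vec{b}}$ is based on some memory set of cardinality at most $\vec{b}[i]$; by padding with unreachable modes if necessary and renaming, we may assume its memory set is exactly $M$ with initial mode $1$, and then the transition and action probabilities of $\sigma$ provide the unique satisfying assignment. I expect no serious obstacle here; the only point that needs care is that $\BSP{i}{\vec{b}}$ must be interpreted as syntactic strategies over the fixed memory set (modulo this convention the correspondence is genuinely bijective, not merely surjective), an identification already implicit in the use of $\BSP{i}{\vec{b}}$ throughout the paper.
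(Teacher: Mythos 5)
Your proof is correct and follows essentially the same construction the paper uses. One small point worth flagging: the paper's prose in Step~1 writes the memory-update variables as $b_j^{s,m,a,i}$ for ``each action vector $a$'', which would blow up the variable count well beyond $(\card{\signals}\cdot\vec{b}[i])(\card{\actions}+\vec{b}[i])$ and also contradicts the paper's own definition $\stratUp\colon M\times S\to\dist(M)$, which does not take the action as input. Your version — variables $b_j^{s,m,i}$ indexed only by $(s,m)$ — is the one consistent with the strategy definition and with the variable count asserted in the lemma, so your proof is if anything tidier than the paper's wording. Your caveat about interpreting $\BSP{i}{\vec{b}}$ as syntactic strategies over a fixed canonical memory set of size exactly $\vec{b}[i]$ is also a legitimate observation: as literally stated, $\BSP{i}{\vec{b}}$ allows memory sets of size $\le \vec{b}[i]$, so the correspondence is a bijection only modulo padding and renaming, a convention the paper leaves implicit.
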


\paragraph*{Step 2: Evaluating the induced Markov chain}
Once the strategy profile is fixed, we obtain an induced Markov chain. 
The following lemma shows that there is an $\FOR$ formula that uses variables to identify a strategy profile and 
encodes the value obtained by the strategy profile  
by encoding the loop and state elimination procedure
from \cref{sec:state-space-reductions}. 
The proof, showing how this formula is constructed is in \cref{app:exFOR}.
In essence, the loop/state elimination procedures are iteratively implemented in FO(R) by introducing lots (but polynomially many) existentially quantified variables to represent intermediate Markov chains in the reductions.

\begin{restatable}{lemma}{lemexFO}
\label{lem:exFO}
For any $k$-player game $G$, and a player $i$,
one can construct
a formula $\text{Val}_i(\vec{x}_1,\dots, \vec{x}_k, y)$, 
such that
$\vec{x}_j$ encodes strategy $\sigma_j$ of a strategy profile $\sigma\in \BSB{\vec{b}}$, 
and the formula uses
a set of Boolean combination of $100N^3+40N$ polynomials 
with at most $8N^3+3N$ variables, each of degree at most $2k$, 
and coefficients coming from $G$ 
and $y$ encode the outcome of $\sigma$ for player $i$.
\end{restatable}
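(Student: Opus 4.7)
The plan is to encode in \FOR\ the state-reduction procedure of \cref{sec:state-space-reductions} applied to the Markov chain $M_\sigma$ induced by the guessed strategy profile. This chain has state space $\states \times \signals^k \times M_1 \times \cdots \times M_k$ of size $N$, and every edge probability factors as the game's transition probability from $G$ times one action probability and one memory-update probability per player. Each edge probability of $M_\sigma$ is therefore a polynomial of degree exactly $2k$ in the strategy variables $\vec{x}_1, \dots, \vec{x}_k$ with coefficients from $G$; the edge rewards and durations inherit this same polynomial structure.

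First I would introduce $3N^2$ fresh variables $p^{(0)}_{ij}, r^{(0)}_{ij}, d^{(0)}_{ij}$ encoding the edge data of $M_\sigma$, constrained by $3N^2$ polynomial equations of degree $2k$ in the strategy variables. Then, for each step $\ell = 1, \dots, 2N$, I would introduce another $3N^2$ fresh variables $p^{(\ell)}_{ij}, r^{(\ell)}_{ij}, d^{(\ell)}_{ij}$ together with $O(N^2)$ polynomial constraints linking chain $\ell{-}1$ to chain $\ell$. Loop elimination is written in constraint form, e.g., $(1 - p^{(\ell-1)}_{jj})\,p^{(\ell)}_{ji} = p^{(\ell-1)}_{ji}$, to avoid division; state elimination follows \cref{def:state-elim} directly. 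I would fix the schedule of eliminations a priori (for $j = N, N{-}1, \dots, 2$: loop at $j$, then state $j$), so no existential choice of index is required; a small case split lets the chain pass through unchanged when the step's pre-condition fails. In the single-player case $k=1$, tracking expected total reward $E^{(\ell)}_{ij} \eqdef p^{(\ell)}_{ij}\,r^{(\ell)}_{ij}$ in place of $r$ brings state elimination down to degree $2$, meeting the bound $2k=2$; for $k \geq 2$ the degree $3$ of direct state elimination is already covered by $2k \geq 4$.

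Finally, $O(N)$ further polynomials extract the output $y$ from the fully reduced chain $M^{(2N)}$: either $y \cdot d^{(2N)}_{s_0 s_0} = r^{(2N)}_{s_0 s_0}$ if $s_0$ has become absorbing, or $y$ is a probability-weighted average of the returns $r^{(2N)}_{jj}/d^{(2N)}_{jj}$ over the absorbing successors of $s_0$, again encoded with cleared denominators. By \cref{lem:state-elim,lem:loop-elim} this $y$ equals the expected $\MPR$ of $s_0$ in $M_\sigma$; instantiating rewards to those of player $i$ yields the outcome of $\sigma$ for that player. The counts work out to $3N^2(2N+1) + O(N) \leq 8N^3 + 3N$ variables and $3N^2 + 2N \cdot O(N^2) + O(N) \leq 100N^3 + 40N$ polynomials of degree at most $2k$, within the stated bounds.

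The main obstacle I anticipate is verifying that the fixed elimination schedule produces, after $2N$ steps, a Markov chain of the expected form (namely $s_0$ plus absorbing sinks representing the collapsed CCSs of $M_\sigma$), even in the presence of degenerate cases such as states that were already absorbing or steps where the loop/state pre-condition is vacuous. The invariant that after processing index $j$ all states $j' > j$ are either eliminated or absorbing can be checked inductively against \cref{def:state-elim,def:loop-elim}; handling these degenerate cases with the right copy-through constraints is where most of the bookkeeping lies, but it does not change the asymptotic counts.
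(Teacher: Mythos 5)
Your proposal follows essentially the same route as the paper's proof: both unfold the loop/state elimination procedure of Section~5 over a fixed $2N$-round schedule, introducing fresh probability, reward, and duration variables at each round so that every constraint stays low-degree, and both tally the variables and polynomials to land within the stated $8N^3+3N$ and $100N^3+40N$ bounds. The piece you flag as "the main obstacle" — carrying the chain through rounds whose loop or state pre-condition fails — is precisely what the paper handles by adding explicit Boolean indicator variables $x_{S,j}$ and $x_{S,T,j}$ tracking which states and edges still exist at round $j$, and conditioning every update on these; your "small case split" is that mechanism, just not written out. One remark on your $k=1$ fix: tracking $E_{ij}=p_{ij}r_{ij}$ does bring state elimination to degree $2$, but the loop-elimination update for $E'_{nj}$ still has a degree-$3$ constraint once denominators are cleared, so the degree bound is really $\max(2k,3)$; this is a constant-level slip in the lemma statement that affects neither the variable counts nor the complexity.
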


\paragraph*{Step 3: Finding $\epsilon$-achieving strategy/$\epsilon$-Nash equilibria}
The last step requires unfolding the definition of $\epsilon$-achieving strategy/$\epsilon$-Nash equilibria in a formula. 
We first do this for the $2$-player zero-sum case where we are interested in strategies that $\epsilon$-achieves a given value $v$, for $\epsilon\ge 0$.
The formula guesses the $\epsilon$-achieving strategy, and checks for 
each of the opponent's strategies, that the outcome is $\ge v-\epsilon$.  
\begin{equation*}
\begin{split}
         \exists \vec{x_1} \forall \vec{x_2} \forall y.
         \quad
        &\text{Strat}_1(\vec{x_1}) \land \text{Strat}_2(\vec{x_2}) \land
        \text{Val}_1(\vec{x_1},\vec{x_2},y)\\
        &\implies (y\ge v -\epsilon)
\end{split}
\end{equation*}
Note that one can use existentially quantified auxiliary variables to implement repeated squaring and thereby encode the (binary) numbers $v$ and $\eps$ succinctly.

For the case of $\eps$-Nash equilibria wrt $\BSB{\vec{b}}$, 
we guess the strategy profile $\sigma$ and check against all possible deviations by a single player. 
\begin{equation*}
	\begin{split}
                \exists &\vec{x_1},\dots , \vec{x_k}, y_1,\dots ,y_k
          \forall \vec{x'_1},\dots , \vec{x'_k},y'.\\
		 &\left(\bigwedge_{i=1}^k \text{Strat}_i(\vec{x_i})\right)  \wedge \left(\bigwedge_{i=1}^k \text{Val}_i(\vec{x_1},\dots, \vec{x_k},y_i)\right) \wedge\\
		& \bigwedge_{j=1}^k \left(\text{Strat}_j(\vec{x'_j}) \implies
		(\text{Val}_j(\vec{x_1},\dots,\vec{x'_j},\dots, \vec{x_k},y') \wedge y_j+\epsilon \ge y')\right)
	  \end{split}
\end{equation*}

\begin{restatable}{lemma}{lemexFOeq}
	\label{lem:exFO-eq}
Given a $k$-player game, and $v$, $\epsilon\ge 0$, 
one can construct an \FOR\ formula
that encodes a strategy that $\eps$-achieves $v$/ is a $\eps$-Nash equilibrium. Furthermore, the formula uses at most
$(k+1)+k(|S|\prod_{i=1}^k\vec{b}[i])(|A|+\sum_{i=1}^{k}\vec{b}[i])+ 8N^3+3N$ variables,
and at most $(100N^3+40N)2k$ polynomials.
\end{restatable}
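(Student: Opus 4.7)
The plan is to compose the sub-formulas provided by Lemmas~\ref{lem:exFO-strat} and~\ref{lem:exFO} into the two \FOR\ formulas already displayed immediately above the statement, verify correctness from the sub-formula semantics, and then tally the size.

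\textbf{Assembly and correctness.}
For the 2-player zero-sum case I would take the displayed $\exists\vec{x_1}\forall\vec{x_2}\forall y$ formula verbatim. By Lemma~\ref{lem:exFO-strat}, $\text{Strat}_i(\vec{x_i})$ puts $\vec{x_i}$ in bijection with player-$i$ strategies in $\BSP{i}{\vec{b}}$; by Lemma~\ref{lem:exFO}, $\text{Val}_1(\vec{x_1},\vec{x_2},y)$ pins $y$ to the expected outcome for player~$1$ under the induced profile. The outer $\exists\forall$ structure then literally asserts that some player-$1$ strategy $\epsilon$-achieves $v$ with respect to $\BSB{\vec{b}}$. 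For the multi-player case I would take the displayed $\epsilon$-NE formula verbatim: its outer existential block guesses a profile $(\vec{x_1},\ldots,\vec{x_k})$ together with outcome witnesses $y_1,\ldots,y_k$, and its universal block quantifies a single-player deviation $\vec{x'_j}$ with a fresh outcome $y'$. The first two conjunctions pin down the $y_i$ as the induced outcomes, and the third asserts, for every player $j$, that $y_j+\epsilon\ge y'$ whenever $\vec{x'_j}$ is a valid strategy and $y'$ its deviation outcome --- exactly the $\epsilon$-NE condition with respect to $\BSB{\vec{b}}$.

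\textbf{Handling $v$ and $\epsilon$.}
Since $v$ and $\epsilon$ are given in binary, I would introduce polynomially many existentially quantified auxiliary variables and polynomial equalities implementing repeated squaring, so that the comparisons $y\ge v-\epsilon$ and $y_j+\epsilon\ge y'$ become atomic polynomial inequalities whose contribution fits inside the claimed asymptotic bound.

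\textbf{Counting.}
The variable and polynomial counts then follow by aggregation. The $k+1$ outcome variables $y_1,\ldots,y_k,y'$ give the leading $k+1$. The strategy and deviation vectors together contribute $2\sum_{i=1}^k|\vec{x_i}|$ variables, which one can absorb into $k(|\signals|\prod_i\vec{b}[i])(|\actions|+\sum_i\vec{b}[i])$ using Lemma~\ref{lem:exFO-strat}. Each of the at most $2k$ invocations of $\text{Val}_i$ contributes the $8N^3+3N$ fresh internal variables and $100N^3+40N$ polynomials of Lemma~\ref{lem:exFO}, giving the claimed $(100N^3+40N)\cdot 2k$ polynomial bound of degree at most $2k$.

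\textbf{Anticipated obstacle.}
The only real obstacle is bookkeeping: renaming the internal variables of the distinct $\text{Val}$-invocations apart so that they are independent, guarding each deviation conjunct with $\text{Strat}_j(\vec{x'_j})$ so that malformed deviation vectors cannot spuriously falsify the implication, and verifying that the auxiliary variables used to encode $v,\epsilon$ in binary are absorbed into the stated bound rather than dominating it.
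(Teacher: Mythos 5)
Your proposal matches the paper's (implicit) argument: the paper gives no explicit proof of this lemma, instead displaying the two \FOR\ formulas immediately above it and remarking only that the variable count aggregates the value-variables, the strategy-variables, and the internal variables of $\text{Val}_i$, which is exactly the assembly and tallying you carry out. One small caution, which you yourself half-flag under ``anticipated obstacle'': your counting correctly notes that each of the up to $2k$ invocations of $\text{Val}_i$ needs its own $8N^3+3N$ fresh internal variables (renamed apart), which would give a term of roughly $2k(8N^3+3N)$, whereas the lemma as stated carries $8N^3+3N$ only once; this appears to be looseness in the paper's stated constant rather than a gap in your approach, and it has no effect on the downstream use of the lemma since every bound remains polynomial in the input.
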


The terms in the number of variables correspond to variables for guessing 
values, strategies and the variables used to describe the values in \cref{lem:exFO}.

\begin{proof}[Proof of \cref{thm:main-1}]
	From \cref{lem:exFO-eq}, we can express strategies encoding an $\eps$-Nash equilibrium using an $\FOR$ formula. Note that this uses $2$ alternations of quantifiers. 
	Checking if a formula with bounded quantifier alternation 
	is in the first-order theory of the reals can be done in $\PSPACE$ as shown by \citeauthor{BPR2006}\cite[Remark 13.10]{BPR2006}.
\end{proof}

The following will be used later applied to the formulae provided by \cref{lem:exFO-eq}. For the proof, we assume familiarity with \FOR, the first order theory of the reals \cite{BPR2006}.
\begin{restatable}{lemma}{thmFORbounds}
\label{thm:FOR}
Consider a satisfiable \FOR\ expression, 
\[\exists x_1^1,\dots,x_{n_1}^1\forall x_1^2,\dots,x_{n_2}^2\exists\dots x_1^k\dots,x_{n_k}^k. \varphi(x_1^1,\dots,x_{n_1}^1,x_1^2\dots,x_{n_k}^k)\]

where $\varphi$ consists of boolean combinations over 
$s$ polynomial inequalities, 
where each polynomial is of degree at most $d$, has integer coefficients and the bit-size of each coefficient is at most $\tau$.
Then there exists $x_1',\dots,x_{n_1}'$, such that, 

$\forall x_1^2,\dots,x_{n_2}^2\exists\dots \exists x_1^k\dots,x_{n_k}^k. \varphi(x_1',\dots,x_{n_1}',x_1^2\dots,x_{n_k}^k)$

is satisfiable and such that for each $i$,  either $2^{-\tau (2d)^{\prod_{i=1}^k O(n_i)}}\leq |x_i'|\leq 2^{\tau (2d)^{\prod_{i=1}^k O(n_i)}}$ or $x_i'=0$.
\end{restatable}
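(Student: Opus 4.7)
The plan is to reduce this to standard witness-size bounds for first-order formulae over the reals, applied block by block from the innermost quantifier block outward. Throughout, we rely on the quantitative complexity bounds for quantifier elimination and sample-point extraction developed in \cite{BPR2006}, which is already our tool of choice in this paper.

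The first step is to observe that the witness bound we seek is exactly a statement about the existence of a point in a particular semi-algebraic set in $\R^{n_1}$, namely the set
\[ S \eqdef \{ \vec x \in \R^{n_1} \mid \forall \vec y_2 \exists \vec y_3 \dots Q \vec y_k.\, \varphi(\vec x, \vec y_2, \dots, \vec y_k) \}. \]
By hypothesis $S$ is non-empty, and it is defined by a formula with $k-1$ quantifier alternations over a Boolean combination of $s$ polynomials of degree $\le d$ and coefficient bit-size $\le \tau$. So the claim reduces to two ingredients: (i) a quantifier-elimination procedure that produces an equivalent quantifier-free description of $S$ with well-controlled parameters, and (ii) a sample-point theorem that, given such a quantifier-free description, exhibits a point in $S$ with coordinates of bounded bit-complexity.

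For (i), I would apply quantifier elimination block-wise from the innermost out, using the standard singly-exponential-per-block bound from \cite[Algorithm 14.5 / Theorem 14.16]{BPR2006}. Eliminating a block of $n_i$ variables multiplies the number of polynomials and their degree by a factor singly exponential in $n_i$, and multiplies the bit-size of coefficients by a comparable factor. Unrolling this composition across $k-1$ blocks, the resulting quantifier-free formula for $S$ has its number of polynomials, degrees, and coefficient bit-sizes bounded by expressions of the form $(sd)^{O(n_2)\cdot O(n_3)\cdots O(n_k)}$ and $\tau \cdot d^{O(n_2)\cdots O(n_k)}$. For (ii), I then invoke the sample-point bound \cite[Theorem 13.11 / Algorithm 13.5]{BPR2006}: a non-empty semi-algebraic subset of $\R^{n_1}$ defined by $s'$ polynomials of degree $\le d'$ with coefficient bit-size $\le \tau'$ contains a point whose coordinates are either $0$ or lie in the interval $[2^{-\tau' (d')^{O(n_1)}},\, 2^{\tau' (d')^{O(n_1)}}]$. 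Plugging in the parameters from (i), the outermost exponent becomes $\tau \cdot (2d)^{\prod_i O(n_i)}$ as required.

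The main technical obstacle is the bookkeeping of how the complexity parameters $(s,d,\tau)$ evolve under the composition of QE steps, and ensuring that the final nested exponent collapses cleanly into the form $\tau (2d)^{\prod_i O(n_i)}$ rather than a more complicated tower. A secondary point of care is that the standard QE bounds in \cite{BPR2006} are typically phrased for integer coefficients, but since our polynomials arising from Lemmas \ref{lem:exFO-strat}--\ref{lem:exFO-eq} have rational coefficients coming from $G$, this is handled by clearing denominators, which only multiplies $\tau$ by a polynomial factor and leaves the final bound intact. Once these pieces are lined up, selecting $(x_1',\dots,x_{n_1}')$ to be the coordinates of the sample point yields the witness claimed, and substituting back into the original formula leaves a satisfiable formula in the remaining quantifier blocks by construction of $S$.
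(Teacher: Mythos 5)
Your high-level structure matches the paper's: apply quantifier elimination (via \cite[Thm.~14.16]{BPR2006}) to the inner $k-1$ blocks to get a quantifier-free description of the set of admissible $(x_1,\dots,x_{n_1})$, then extract a sample point of bounded bit-complexity. The paper also applies Theorem~14.16 in one shot rather than block-by-block, but that is only a presentational difference.

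Where the proposal has a gap is in step (ii). You invoke a ``sample-point bound'' (citing Thm.~13.11 / Alg.~13.5) stated as \emph{directly} producing a point whose nonzero coordinates lie in $[2^{-\tau'(d')^{O(n_1)}}, 2^{\tau'(d')^{O(n_1)}}]$. No theorem in \cite{BPR2006} is stated in that form. What the sampling algorithm actually outputs is a \emph{univariate representation} of each coordinate: polynomials $f$, $q$, $p_i$ so that $x_i' = p_i(t)/q(t)$ where $t$ is a root of $f$, with controlled degrees and coefficient bit-sizes. Converting this representation into explicit magnitude bounds on $|x_i'|$ is a genuine additional step, not just bookkeeping. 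The paper carries it out by first applying \cite[Lem.~15]{BPR2006} to produce a single square-free univariate polynomial $p_i'$ with $p_i'(x_i')=0$ (with controlled degree and coefficient size), and then applying Cauchy's root bound to $p_i'$ to get the upper bound on $|x_i'|$, and the coefficient-reversal trick (roots of $p_i'$ of magnitude $\ge 1/U$ correspond to roots of the reversed polynomial of magnitude $\le U$) to get the matching lower bound. That root-bound argument is precisely what produces the two-sided $2^{\pm\tau(2d)^{\prod O(n_i)}}$ interval in the statement, so it cannot be absorbed into ``bookkeeping of $(s,d,\tau)$''; it needs to be made explicit. The paper also flags that the sampling theorem as stated ``only gives approximations'' and one must appeal to the complexity analysis of the algorithm to get exact univariate representations, another point your step (ii) elides.

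Finally, a minor point on step (i): the bounds that come out of Theorem~14.16 are already stated for the full multi-block elimination, so there is no need to unroll block-by-block; doing so is harmless but introduces extra bookkeeping with no benefit. And your remark about clearing denominators is fine but beside the point: the lemma as stated already hypothesizes integer coefficients.
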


\begin{proof}
	We apply 
	\cite[Thm~14.16]{BPR2006}
	to 
	\begin{align*}
		\phi(x_1^1,\dots,x_{n_1}^1)\eqdef& 
	\forall x_1^2,\dots,x_{n_2}^2\\
	&\exists\dots x_1^k\dots,x_{n_k}^k: \varphi(x_1^1,\dots,x_{n_1}^1,x_1^2\dots,x_{n_k}^k)
	\end{align*}
	which gives us a quantifier-free expression of  $\phi$. This expression of $\phi$ uses polynomials of degree at most $d'\eqdef d^{\prod_{i=2}^k O(n_i)}$ with integer coefficients, each of bit-size at most $\tau'\eqdef \tau d^{\prod_{i=1}^k O(n_i)}$.
	We then apply 
	\cite[Thm~13.10]{BPR2006}
	to $\phi$, giving us a ``univariate representation'' of each of the numbers we will use for $x_i'$ 
	(the theorem itself only states that we get approximations, but the complexity analysis of the algorithm states that it is equally fast to get exact solutions, using some additional steps). 
	For each $i$, this univariate representation for $x_i$ consists of a fraction of two polynomials $p_i(x)/q(x)$ and an additional polynomial $f$ ($q$ and $f$ are coprime). The number $x_i$ is then $p_i(t)/q(t)$ where $t$ is a root of $f$. These polynomials each is of degree at most $d''\eqdef (2d'+6)^{n_1}\leq (2d)^{\prod_{i=1}^k O(n_i)}$ (we get to ignore the $6$ because of the $O$'s. We can't ignore the 2 in case $d$ was 1) with integer coefficients of bit-size at most $\tau''\eqdef \tau d^{\prod_{i=1}^k O(n_i)}$. 
	For each $i$, we apply
	\cite[Lem.~15]{BPR2006}
	to $f$, $p_i$ and $q$, giving us a square-free univariate polynomial $p'_i$ such that $p'_i(x_i)=0$. This polynomial has degree at most $d'''\eqdef 2d''=(2d)^{\prod_{i=1}^k O(n_i)}$ and the bitsize representation of the integer coefficients is at most $\tau'''\eqdef 2d''\tau''+7d''\log d''=\tau (2d)^{\prod_{i=1}^k O(n_i)}$.
	
	For any univariate polynomial $p(x)=a_n x^{n-1}+a_{n-1} x^{n-2}+\dots+a_1 x+a_0$, then $r\neq 0$ is a root iff $1/r\neq 0$ is a root of $a_n 1/(x^{n-1})+a_{n-1} (1/x^{n-2})+\dots+a_1 1/x+a_0$ or equally $p'(x)\eqdef a_n +a_{n-1} x+\dots+a_1 x^{n-2}+a_0 x^{n-1}$. Therefore, if $U$ is an upper bound on the absolute largest root in $p$, then $1/U$ is a lower bound for how small the absolute smallest non-zero root can be in $p'$. 
	We can bound $U\le 1+\max\left(\abs{\frac{a_{n-1}}{a_{n}}},\abs{\frac{a_{n-2}}{a_{n}}},\dots \abs{\frac{a_{0}}{a_{n}}}\right)$
	\cite{Cau1828}
	and so
	get that $2^{-\tau'''}\leq |x_i'|\leq 2^{\tau'''}$, because the largest integer represented with $x$ bits is $2^{x}-1$.
\end{proof}

\section{Approximation Algorithms}
\label{sec:stationary}
In this section, we present the proof of %
\cref{thm:main-2}. 
Our approach is to repeat the steps leading to \FOR\ representations in the previous section but use floating point representations and associated lossy arithmetic.
To do this, we need two ingredients:
(1) we need to guess candidate profiles in polynomial space;
and (2) we need to compute approximations of the mean-payoff values in the induced Markov chains in polynomial time.

We focus first on the existence of small candidate profiles.
\subsection{Floating point representations}

\begin{definition}
A strategy profile
$\sigma=(\sigma[j])_{j\le k}
\in \BSB{\vec{b}}$
is \emph{$u$-bit representable} if all its distributions
(selecting mixed actions and memory updates for all players, memory modes and signals)
are in $\+P(u)$.
The profile $\sigma$ is \emph{represented} in $u$-bits if it is $u$-bit representable
and all exponents are written in binary using $u$ many bits.

Let
\begin{align*}
	p(\sigma,\vec{m},s,\vec{m'},\vec{a})\eqdef
	\prod_{j=1}^k&\stratAct[{\sigma[j]}](\vec{m}[j],s[j])(\vec{a}[j])\cdot\\
	&\stratUp[{\sigma[j]}](\vec{m}[j],s[j])(m'[j]) 
\end{align*}
denote the probability 
of that starting with memory states $\vec{m}$ and having previously received signal vector $s$, the players jointly pick the action profile $\vec{a}$,
and update the memory vector to $\vec{m'}$.

For any two profiles $\sigma,\sigma'\in\BSB{\vec{b}}$, the \emph{distance}  between the strategy profiles is
\[\rdist(\sigma,\sigma')~\eqdef~\max_{\vec{m},s,\vec{m'},\vec{a}}\rdist(p(\sigma,\vec{m},s,\vec{m'}\vec{a}),p(\sigma',\vec{m},s,\vec{m'},\vec{a})).\
\] %
\end{definition}

\subsection{Polynomial witnesses}
\label{ssec:poly-witnesses}
We now show
the existence of suitably small profiles to witness approximate equilibria and zero-sum values.
Our claim that small profiles exist is stated in \cref{lem:small-candidates}. This relies on the following two lemmas,
proofs of which can be found in the appendix.

For the remainder of the section, fix a $k$-player game $G$ with $n$ states and where $C\in\N$ is the largest absolute value of any reward. Also, fix some vector $\vec{b}\in\N^k$ of memory bounds.
Let $N\eqdef n\cdot|S|^k\cdot \prod_j \vec{b}[j]$.

\medskip
The first lemma states that if there are witnesses at all,
then there are also ones where all probabilities are at most double exponentially small.

\begin{lemma}[Lower-bounded probabilities]
    \label{lem:small-probs}
    There is $u_0\in \N$, polynomial in the size of the game and $\epsilon$, so that the following is true
    for all $u\ge u_0$.

    If there exists a profile $\sigma \in \BSB{\vec{b}}$
    that witnesses values $\vec{v} \in\R^k$
    then there is such a profile $\sigma$ 
    where all probabilities
    (for any signal, action and memory mode)
    are at least
    ${1}/{2^{2^u}}$.
\end{lemma}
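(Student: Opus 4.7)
The plan is to combine the \FOR-encoding of witness existence with the size bounds for satisfying assignments of \FOR\ formulas (\cref{thm:FOR}), and then perturb to eliminate remaining zero probabilities.

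First, the existence of a profile in $\BSB{\vec{b}}$ that witnesses values $\vec{v}$ can be expressed as an existential formula $\exists \vec{x}.\, \phi(\vec{x})$ in the first-order theory of the reals, where $\vec{x}$ encodes the strategy parameters and $\phi$ is built as in \cref{lem:exFO-eq} with $\vec{v}$ substituted as constants. The formula uses polynomially many variables and polynomials of polynomially bounded bit-length. Assuming a witnessing profile exists, \cref{thm:FOR} then gives a satisfying assignment $\hat\sigma$ in which every strategy probability is either exactly $0$ or at least $2^{-T}$, with $T$ polynomial in the input.

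Next I would perturb $\hat\sigma$ to obtain $\sigma'$ by mixing every action distribution and every memory-update distribution with a uniform distribution using a small weight $\delta > 0$. That is, every output distribution of $\hat\sigma$ is replaced by $(1-\delta)$ times itself plus $\delta$ times the uniform distribution over the relevant finite set of choices. Taking $\delta$ of order $|\actions| \cdot \max_j \vec{b}[j] / 2^{2^u}$ ensures that every probability in $\sigma'$ is at least $1/2^{2^u}$.

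The main obstacle is showing that $\sigma'$ still witnesses $\vec{v}$. The expected payoffs of $\sigma'$ can be computed as rational functions of the strategy parameters via the Markov-chain reductions in \cref{sec:state-space-reductions}. Because all non-zero probabilities of $\hat\sigma$ are bounded below by $2^{-T}$ and the new edges introduced by the mixing have probability at least $\delta / |\actions|$, the denominators appearing during loop and state elimination remain bounded away from $0$ by an amount singly-exponential in $T$. Consequently, the value function is Lipschitz on a neighborhood of $\hat\sigma$ with Lipschitz constant at most $2^{\mathrm{poly}(T)}$, so for $u_0$ chosen polynomial in the input and in $\log(1/\epsilon)$, the $\delta$-perturbation changes each expected payoff by less than the slack built into the witness notion (the $\epsilon$ in $\epsilon$-Nash equilibrium or $\epsilon$-achievement), and hence $\sigma'$ remains a witness of $\vec{v}$.
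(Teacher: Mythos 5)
Your first two steps — encoding the existence of a witnessing profile in \FOR\ and invoking \cref{thm:FOR} to bound the non-zero solution values — match the paper's intended argument exactly (the paper's proof is literally ``by \cref{lem:exFO-eq,thm:FOR}''). Note, however, that \cref{thm:FOR} only yields variables that are \emph{either zero or} lower-bounded by a double-exponentially small quantity. The paper is content with this: zero probabilities pose no obstacle in the downstream use (\cref{lem:small-candidates}), because $0\in\+Q(u)$ and the relative-distance hypothesis of \cref{lem:solan} is trivially met by a $0\mapsto 0$ rounding. So the literal phrase ``all probabilities \dots\ are at least $1/2^{2^u}$'' in the lemma is best read as ``all non-zero probabilities,'' and no further work is needed.

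Your third and fourth steps — uniformly mixing $\hat\sigma$ with weight $\delta$ and then arguing via a Lipschitz bound that the values barely move — are not in the paper, and they contain a genuine gap. The expected mean-payoff is \emph{not} continuous at strategy profiles that use zero probabilities: adding probability mass to a transition that previously had probability $0$ can change the reachability/CCS structure of the induced Markov chain, and the value can jump discontinuously. For a minimal example, take a chain where state $1$ is absorbing with reward $0$ and state $2$ is absorbing with reward $1$; if one perturbs $\probm(1,1)=1$ to $1-\delta$ and $\probm(1,2)=0$ to $\delta$, the value from state $1$ jumps from $0$ to $1$ for every $\delta>0$. No finite Lipschitz constant bounds this. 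Precisely because of this, \cref{lem:solan} is stated with a \emph{multiplicative} distance $\rdist$ that is $+\infty$ between $0$ and any positive number, so it cannot be used to justify your mixing step. Furthermore, your claim that the post-mixing denominators in loop/state elimination are bounded away from $0$ by a quantity ``singly-exponential in $T$'' is incorrect: the newly introduced probabilities are of order $\delta/|\actions|\approx 1/2^{2^u}$, i.e.\ \emph{double}-exponentially small, so the resulting Lipschitz constant (even where it exists) is double-exponentially large and does not cancel $\delta$. In short, drop the perturbation step: the statement should be read with ``$0$ or $\geq 1/2^{2^u}$,'' and then your first two steps already complete the proof.
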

\begin{proof}
    By \cref{lem:exFO-eq,thm:FOR}.
\end{proof}

Secondly, we show that small perturbations of probabilities
only lead to small changes in (mean-payoff) values for all players.

\begin{restatable}[small perturbations]{lemma}{solan}
\label{lem:solan}

Consider any two profiles $\sigma,\sigma'\in \BSB{\vec{b}}$ and let
$\vec{v},\vec{v'} \in\R^k$ denote the values they witness.

If $\rdist(\sigma,\sigma')<(4N)^{-1}$
then $\vec{v'} \in [\vec{v}-\gamma,\vec{v}+\gamma]$,
where $\gamma=9\cdot N\cdot \rdist(\sigma,\sigma')C$.
\end{restatable}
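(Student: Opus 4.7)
The plan is to reason directly about the Markov chains $M_\sigma$ and $M_{\sigma'}$ induced on the product state space of size $N$ by fixing the two profiles. Each transition probability in $M_\sigma$ has the form $p(\sigma,\vec m,s,\vec m',\vec a)$, a product of $2k$ individual strategy probabilities (one action distribution and one memory-update distribution per player). Setting $\delta \eqdef \rdist(\sigma,\sigma') < (4N)^{-1}$, an elementary computation gives $\rdist\bigl(p(\sigma,\cdot),p(\sigma',\cdot)\bigr) \le (1+\delta)^{2k}-1$, so corresponding transition probabilities of $M_\sigma$ and $M_{\sigma'}$ differ by relative distance $\Delta = O(k\delta)$, which remains safely bounded below $1$ by the assumption on $\delta$.

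Next I would write the mean-payoff value for each player $j$ using the standard decomposition over bottom strongly connected components of $M_\sigma$: $\vec v[j] = \sum_B \Pr_\sigma(\text{reach } B)\cdot v_B[j]$, where $v_B[j] = \sum_{s\in B} \pi_B(s)\, r_j(s)$ is the stationary expected reward of player $j$ inside BSCC $B$, and $\pi_B$ its stationary distribution. Both the reach probabilities and the distributions $\pi_B$ are rational functions of the transition probabilities of $M_\sigma$, and hence of $\sigma$. The key quantitative step is to bound their sensitivities under the perturbation from $M_\sigma$ to $M_{\sigma'}$: classical condition-number estimates for absorbing chains (for reach probabilities) and for stationary distributions (via the Kemeny--Snell fundamental matrix, or equivalently the group inverse of $I-P$) give perturbation bounds that are linear in $N$ and in $\Delta$, provided $\Delta$ is small enough. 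Multiplying by the reward bound $C$, summing over at most $N$ BSCCs, and assembling the constants should yield the stated $\gamma = 9 N \delta C$ (with the $9$ absorbing the small constant factors from the product-of-$2k$-distributions perturbation and from the sensitivity constants).

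The main obstacle I anticipate is establishing a sensitivity bound that is truly \emph{linear} in $N$ rather than exponential. A naive argument that iteratively tracks error propagation through the state/loop elimination procedure of \cref{sec:state-space-reductions} would compound relative errors multiplicatively over up to $2N$ reduction steps, yielding an exponential blow-up. The hypothesis $\rdist(\sigma,\sigma') < (4N)^{-1}$ is precisely what is needed to keep $I-P$ (restricted to the transient part) uniformly well-conditioned along the perturbation, so that a global sensitivity analysis of hitting probabilities and stationary distributions applies rather than a step-by-step one. A clean way to execute this in the proof is probably to separate the argument into two lemmas---one bounding perturbations of BSCC-reach probabilities, one bounding perturbations of within-BSCC stationary averages---and then combine them by the triangle inequality, which keeps constants transparent and avoids any exponential accumulation.
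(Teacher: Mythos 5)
Your plan differs from the paper's proof in one important way: the paper does not re-derive a perturbation bound for Markov chains from scratch. After reducing to the two induced chains $M$ and $M'$ on the $N$-state product space, it proves a short claim (their Claim 7.4) that transition probabilities differ \emph{multiplicatively} by $\le \rdist(\sigma,\sigma')$ and rewards differ \emph{additively} by $\le C\cdot\rdist(\sigma,\sigma')$, and then simply invokes Solan's perturbation theorem for concurrent limit-average games (Solan 2003, Thm.~5), which already yields a bound of the form $\frac{4N\delta}{1-2N\delta}C + C\delta$; plugging in $\delta < (4N)^{-1}$ gives the stated $9N\delta C$. This is much cleaner than the route you propose, and it is also why the constant $9$ comes out exactly.

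There are two concrete issues with your proposal. First, the $(1+\delta)^{2k}-1$ step is a misreading of the definitions: $\rdist(\sigma,\sigma')$ is defined \emph{directly} as the maximum relative distance of the joint probabilities $p(\sigma,\cdot)$, which are already the $2k$-fold products, so there is no extra factor-of-$2k$ blowup to account for (the paper's Claim 7.4 shows $\rdist(P,P')\le\rdist(\sigma,\sigma')$ with no $k$ dependence). Also, the Markov chain transition probabilities are sums over action profiles $a$ of $p(\cdot,a)\cdot\trans(v,a)(\cdot)$, not bare products, so the argument has to pass through a convexity-of-relative-distance step, which the paper does carefully and you elide.

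Second, and more seriously, the core of your plan — that ``classical condition-number estimates'' (Kemeny--Snell fundamental matrix, group inverse of $I-P$) give perturbation bounds \emph{linear} in $N$ under the hypothesis $\delta < (4N)^{-1}$ — is asserted, not established, and is not true in the generality you invoke. Standard additive perturbation bounds for hitting probabilities and stationary distributions are governed by condition numbers of the chain which can be exponentially large in $N$ (e.g.\ a transient state with a self-loop of probability $1-2^{-N}$ makes $(I-P)^{-1}$ have entries of size $2^{N}$). The hypothesis $\delta < (4N)^{-1}$ bounds the \emph{size} of the perturbation, not the conditioning of the chain. What saves the day is precisely that the perturbation is \emph{multiplicative} (small entries are perturbed proportionally, not absolutely), and exploiting this is exactly what Solan's theorem does. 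To execute your plan you would need to redo that multiplicative-perturbation argument, at which point you would essentially be re-proving the cited result; your two-lemma split (reach probabilities, then within-BSCC averages) does not on its own circumvent the potentially ill-conditioned transient part.

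Your diagnosis of the ``naive'' failure mode — compounding relative errors over $2N$ elimination steps — is correct and well taken, and the BSCC decomposition you write down is the right skeleton. But as written the proposal leaves the quantitative heart of the lemma unproved.
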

\begin{proof}
	After fixing the game's strategies to be $\sigma$ it turns into a Markov chain $M$ and similar let $M'$ be the Markov chain obtained by fixing the strategies to be $\sigma'$. The state space of $M$ and $M'$ is the product of the states of the original game, the memory states for each player and the set of signals vectors, i.e, at most $N$ states. Let $\overline{r}$ be the reward function in $M$ and let $\overline{r}'$ be the one in $M'$.
	
        We will apply \cite[Thm.~5]{Sol2003}.%
	This %
        applies generally to two-player, zero-sum concurrent \LAG s and bounds the change 
	in (limit average) values if one changes the reward function additively and the transition function multiplicatively. Here, we apply it to the special case of Markov chains.
	This is because we are interested in what happens if we change the players strategies slightly (which corresponds to changing the transition and cost functions slightly as we will see).
	The following claim states that the reward functions and transition functions differ additively and multiplicatively respectively in the two chains.	
	\begin{restatable}[]{claim}{diff}
	\label{claim:tran-diff}
	Let $(\state,\vec{m},s)\to (\state',\vec{m'},s')$ be an edge of the Markov chains $M$ and $M'$. 
	\begin{itemize}
		\item The difference in rewards in the edge $(\state,\vec{m},s) \to (\state',\vec{m'},s')$ in $M$ and $M'$
		is at most $C\cdot \delta(\sigma,\sigma')$.
		\item 	Let $P$ and $P'$ be the probability to go from $(\state,\vec{m},s)$ to $(\state',\vec{m'},s')$ in $M$ and $M'$ respectively. 
		Then, $\delta(P,P')\leq \delta(\sigma,\sigma')$.
	\end{itemize}
	
\end{restatable}
		
        The proof of \cref{claim:tran-diff} can be found in \cref{app:solan}.
        
        We can now apply \cite[Thm~5]{Sol2003},
        using the claim above and that the maximum absolute reward in $M$ and $M'$ are bounded by $C$, and the number of states in $M$ and $M'$ are
        $N$. %
        We get that the value of $M$ differs from the value of $M'$ by at most $Y\eqdef\frac{4\cdot N\cdot \delta(\sigma,\sigma')}{1-2\cdot N\cdot \delta(\sigma,\sigma')}C+C\cdot \delta(\sigma,\sigma')$.
	Since we assume that  $\delta(\sigma,\sigma')<(4N)^{-1}$, we have that $Y<9\cdot N\cdot \delta(\sigma,\sigma')C$. 
\end{proof}

\begin{lemma}[small candidates]
    \label{lem:small-candidates}
    There exists $u\in \N$,  polynomial in the size of the game and $\eps$, so that the following is true. 

    \begin{enumerate}
        \item 
            If there exists a $\eps$-NE $\sigma$ wrt $\BSB{\vec{b}}$
            then there exists a $2\eps$-NE $\sigma'$ wrt $\BSB{\vec{b}}$
            that is $u$-bit representable .
        \item 
            If there exists a strategy $\sigma$ for player $j$ that $\eps$-achieves ${v}$ wrt $\BSB{\vec{b}}$
            then there exists a strategy $\sigma'$ for player~$j$ that $2\eps$-achieves ${v}$ wrt $\BSB{\vec{b}}$ and is $u$-bit representable.
    \end{enumerate}
\end{lemma}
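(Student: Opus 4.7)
The plan is to combine \cref{lem:small-probs} with \cref{lem:solan} by a rounding argument. First I invoke \cref{lem:small-probs} to obtain a witnessing profile $\sigma$ whose nonzero atomic probabilities (those selected by the $\stratAct$ and $\stratUp$ components of each player) are all at least $1/2^{2^{u_0}}$. Next I construct the candidate $\sigma'$ by rounding each atomic probability to the nearest value in $\mathbb{Q}(u)$ and repackaging the result into the \ANR\ form of $\+P(u)$. The target is a precision $u$, polynomial in the game size and in $\log(NC/\eps)$, such that (i) a $u$-bit exponent field still accommodates values as small as $1/2^{2^{u_0}}$, for which $u \ge u_0$ suffices, and (ii) each atomic probability is perturbed by only a tiny relative amount.

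For (ii), rounding to $u$ bits contributes relative error at most $2^{-u+1}$ per atomic value, and the $(u,\cdot)$-close renormalisation baked into $\+P(u)$ contributes a further factor of the same order times the size of each local distribution. Since each joint transition weight $p(\sigma,\vec m,s,\vec{m'},\vec a)$ is a product of $2k$ such atomic values, the induced distance obeys $\rdist(\sigma,\sigma') \le 2^{-u+1}\cdot \mathrm{poly}(k,\card{\actions},\max_i \vec{b}[i])$. Choosing $u$ polynomial in the input and in $\log(NC/\eps)$ thus ensures $\rdist(\sigma,\sigma') \le \min\{(4N)^{-1},\,\eps/(18NC)\}$, so \cref{lem:solan} applies and yields a perturbation bound $\gamma \le \eps/2$ on every player's expected payoff when the whole profile shifts from $\sigma$ to $\sigma'$, and likewise when any single player's component is replaced.

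With $\gamma \le \eps/2$ in hand, both parts of the lemma follow by a triangle-style argument. For the NE case, given any deviation $\tau'$ from $\sigma'$ by player $j$ using replacement strategy $\rho$, let $\tau$ be the deviation from $\sigma$ obtained by installing the same $\rho$ for player $j$. Then $\tau$ and $\tau'$ agree on player $j$ and differ on the remaining players in exactly the way $\sigma$ and $\sigma'$ do, so $\rdist(\tau,\tau') \le \rdist(\sigma,\sigma')$. Applying \cref{lem:solan} to both pairs and chaining with the $\eps$-NE property of $\sigma$ gives
\[
\ExpOf{\tau'}{}{\obj_j} \le \ExpOf{\tau}{}{\obj_j} + \gamma \le \ExpOf{\sigma}{}{\obj_j} + \eps + \gamma \le \ExpOf{\sigma'}{}{\obj_j} + \eps + 2\gamma \le \ExpOf{\sigma'}{}{\obj_j} + 2\eps.
\]
The zero-sum case is analogous: the worst-case opponent response to $\sigma'$ is within $\gamma$ of the worst-case response to $\sigma$ (by \cref{lem:solan} applied to the pair of an opponent strategy and its rounded counterpart), and the latter is within $\eps$ of $v$ by assumption, so $\sigma'$ $2\eps$-achieves $v$.

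I expect the main obstacle to be the quantitative bookkeeping around the \ANR\ form of $\+P(u)$. One must verify that rounding each atomic probability to $\mathbb{Q}(u)$ and then renormalising keeps the tuple inside $\+P(u)$; that every resulting joint weight is within the claimed relative tolerance of its original after both per-coordinate rounding and the $(u,\cdot)$-close normalisation; and that the precision $u$ chosen to absorb the double-exponentially small probabilities of \cref{lem:small-probs} remains polynomial in the input size rather than blowing up the mantissa.
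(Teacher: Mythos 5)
Your approach matches the paper's, which proves the lemma in a single line by citing Lemma~\ref{lem:small-probs} and Lemma~\ref{lem:solan}; you correctly flesh out the intended rounding argument. One claim is inexact: you assert $\rdist(\tau,\tau') \le \rdist(\sigma,\sigma')$ because "$\tau$ and $\tau'$ agree on player $j$", but this implication is false in general. Writing $p(\sigma) = \bigl(\prod_{i\neq j} a_i\bigr)\, a_j$ and $p(\sigma') = \bigl(\prod_{i\neq j} a'_i\bigr)\, a'_j$, dropping the $j$-th factors from the ratio $p(\sigma)/p(\sigma')$ to obtain $p(\tau)/p(\tau')$ does not monotonically shrink the deviation from $1$: if $a_j < a'_j$ while $p(\sigma) > p(\sigma')$, the ratio for $\tau,\tau'$ strictly exceeds that for $\sigma,\sigma'$. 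This does not sink the argument, because what you actually need is a common upper bound on both $\rdist(\sigma,\sigma')$ and $\rdist(\tau,\tau')$, and you already have one: both $p$-ratios are products of at most $2k$ atomic factors, each perturbed by relative error at most roughly $2^{-u+1}$, giving $\rdist(\tau,\tau') \le (1+2^{-u+1}\cdot\mathrm{poly})^{2k}-1$ just as for $\rdist(\sigma,\sigma')$. Tuning $u$ so that this common bound forces $\gamma \le \eps/2$ in Lemma~\ref{lem:solan} for both pairs, your chain of inequalities closes with $\eps + 2\gamma \le 2\eps$ as desired. The zero-sum paragraph would also read more clearly if phrased as applying Lemma~\ref{lem:solan} to the profiles $(\sigma,\pi)$ and $(\sigma',\pi)$ for each fixed opponent strategy $\pi$ (here $\pi$ is not rounded, so its factors cancel in the $\rdist$ computation, and the argument is even simpler than the NE case).
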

\begin{proof}
    By \cref{lem:small-probs,lem:solan}.
\end{proof}

\begin{lemma}[Best response]\label{lem:best-response}
Given a game $G$, an $\eps>0$, an $u$-bit represented strategy profile and a player $i$, there is a $u'$ polynomial in the size of the game, $\eps$ and $u$, and a $u'$-bit represented strategy $\sigma'$ for player $i$ that $\eps$-achieves $v$   wrt $\BSB{\vec{b}}$ if there is a strategy for player $i$ that witnesses value $v$   wrt $\BSB{\vec{b}}$
\end{lemma}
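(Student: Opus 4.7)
The plan is to specialise the arguments behind Lemma~\ref{lem:small-candidates} to the situation where every opponent's strategy is already fixed at its $u$-bit representation. Once those strategies are substituted into the game, the problem becomes a partial-information one-player optimisation for player~$i$, and one can repeat the three-step recipe of Section~\ref{sec:FOR}: guess player~$i$'s strategy, encode the induced value via the Markov chain reduction of Section~\ref{sec:state-space-reductions}, and check the desired inequality.

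Concretely, I would first substitute the $u$-bit rational numbers appearing in the given profile (for all players $j\neq i$) into the formulae supplied by Lemmas~\ref{lem:exFO-strat} and~\ref{lem:exFO}, leaving only player~$i$'s strategy variables $\vec{x}_i$ free together with the auxiliary existential variables of the state- and loop-elimination encoding. This yields a purely existential \FOR\ sentence of the form
\[
  \exists \vec{x}_i,\; \vec{z},\; y. \quad \text{Strat}_i(\vec{x}_i) \wedge \text{Val}_i(\vec{x}_i,\vec{z},y) \wedge y \ge v,
\]
where $\vec{z}$ collects the auxiliary variables. Because each substituted opponent probability is a dyadic rational of bit-size polynomial in $u$, the polynomials in this formula have degree bounded as in Lemma~\ref{lem:exFO} and integer coefficients of bit-size polynomial in $u$ and in the size of~$G$. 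Under the hypothesis that some player~$i$ strategy achieves value $v$ against the fixed profile, this sentence is satisfiable.

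Next, applying Theorem~\ref{thm:FOR} to this (one-block) existential formula produces a witness $\sigma^{*}$ whose non-zero probabilities lie in $[2^{-T},2^{T}]$ for some $T$ that is polynomial in the input and in $u$. Rounding each probability of $\sigma^{*}$ to the nearest $u'$-bit dyadic float and applying the normalisation scheme of Section~\ref{ssec:floats} yields a $u'$-bit representable strategy $\sigma'$ with relative distance $\rdist(\sigma',\sigma^{*})$ bounded by $O(N\cdot 2^{-u'})$; choosing $u'$ polynomial in $u$, in the input size, and in $\log(1/\eps)$ makes this bound strictly smaller than both $(4N)^{-1}$ and $\eps/(9NC)$. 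Lemma~\ref{lem:solan}, applied to the Markov chains induced by $\sigma'$ and $\sigma^{*}$ against the fixed opponents, then guarantees that player~$i$'s expected value drops by at most $\eps$. Consequently $\sigma'$ is $u'$-bit represented and $\eps$-achieves~$v$.

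The main delicate point is verifying that after substituting the fixed $u$-bit opponent strategies into $\text{Val}_i$, the coefficients stay polynomially bounded in bit-size and the degree does not blow up, so that Theorem~\ref{thm:FOR} still provides a polynomial exponent $T$. As in Lemma~\ref{lem:exFO}, this is handled by keeping every intermediate Markov-chain quantity as a fresh existentially quantified variable constrained by a degree-two relation (rather than unfolding it symbolically), so each substitution only replaces a variable by a polynomial-bit-size rational and the overall formula size and quantifier depth remain polynomial.
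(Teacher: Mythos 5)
There is a genuine gap in your approach. Your key formula
\[
  \exists \vec{x}_i,\; \vec{z},\; y. \quad \text{Strat}_i(\vec{x}_i) \wedge \text{Val}_i(\vec{x}_i,\vec{z},y) \wedge y \ge v
\]
contains $v$ as a coefficient, but $v$ is not part of the input to the lemma: it is the (arbitrary, possibly irrational) value witnessed by some unknown bounded-memory strategy, and the lemma demands a single $u'$ that is polynomial \emph{only} in the game, $\eps$ and $u$ and that works simultaneously for all such $v$. If $v$ is irrational, the formula cannot be written; if $v$ is rational but needs many bits, the coefficient bit-size entering \cref{thm:FOR} grows with $v$ and so does your $T$ (and hence $u'$), breaking the promised bound. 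This is exactly why the paper does not encode ``achieves at least $v$'' but instead encodes ``$\sigma^*$ is a best response'' with an extra $\forall$-block over competing strategies $\sigma''$: that formula never mentions $v$, so \cref{thm:FOR} yields a bound on $\sigma^*$'s probabilities that is independent of $v$, and the best response automatically dominates every achievable value. Your hope of getting away with a purely existential sentence is precisely what fails; you need the universal block.

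A second, more technical issue: you assert that ``each substituted opponent probability is a dyadic rational of bit-size polynomial in $u$.'' That is false. A $u$-bit represented probability has the form $x\cdot 2^{-i}$ with $x$ and $i$ each $u$-bit integers, so $i$ can be as large as $2^{u}-1$ and the resulting rational has up to $2^{u}$ bits in its denominator. Plugging it in as a literal coefficient blows up the formula size exponentially. The paper handles this by \emph{not} substituting the rational but instead introducing auxiliary variables $v_1 = 1/2,\ v_{\ell+1}=v_\ell\cdot v_\ell$ and expressing $2^{-i}$ as a product of at most $u$ of the $v_\ell$'s (repeated squaring). Your proposal needs this device as well; merely ``keeping intermediate Markov-chain quantities as fresh variables'' does not address the representation of the fixed opponent probabilities themselves.

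The remainder of your plan (rounding the witness from \cref{thm:FOR} to $u'$ bits and invoking \cref{lem:solan} to bound the value loss) follows the paper and is fine once these two issues are repaired.
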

\begin{proof}
For a strategy $\sigma_i$ for player $i$, let \[\sigma[\sigma_i]=(\sigma[1],\sigma[2],\dots,\sigma[i-1],\sigma_i,\sigma[i+1],\dots,\sigma[k]).\]
Given a $u$-bit represented strategy profile $\sigma$ we can write a \FOR\ expression for the strategy $\sigma^*$ that maximizes the outcome  for player $i$ of playing $\sigma[\sigma^*]$ in $G$. 

Specifically, that expression could be \begin{align*}
&\exists \sigma^*,y^*\forall \sigma'',y'': \text{Strat}_i(\sigma^*)\wedge \text{Val}_i(\sigma[\sigma^*],y^*)\wedge\\ &((\text{Strat}_i(\sigma'')\wedge \text{Val}_i(\sigma[\sigma'],y''))\Rightarrow y^*\geq y'') ,
\end{align*}
assuming we can express the strategy profile $\sigma$ in a polynomial-sized \FOR\ expression. 
We do that as follows: For each probability $p$ used by $\sigma$, it is expressed in $u$-bit representation, $x\cdot 2^i$, where $x$ and $i$ are a $u$-bit numbers (and $i$ is not positive, because it is a probability). We can simply write $x$ down, but we need to be more careful with $2^i$. We can write down the expressions \[
v_1=1/2\wedge v_2=v_1\cdot v_1\wedge v_3=v_2\cdot v_2\wedge \dots \wedge v_u=v_{u-1}\cdot v_{u-1}
\]
Observe that $v_{\ell}=2^{-2^{\ell}}$. Let $i_j$ be the $j$-th bit set to 1 in the binary expression of $i$ (clearly $j\leq u$) and there are $K$ such bits set for some $K$. We then see that \[
2^i=v_{i_1}\cdot v_{i_2}\cdot \dots\cdot  v_{i_K} .
\]
We can therefore write $p$ as $x\cdot v_{i_1}\cdot v_{i_2}\cdot \dots\cdot  v_{i_K}$ (using another $K$ variables, we can also make these products binary) and we get the full \FOR\ expression by simply doing this for each of the probabilities.

Similar to \cref{lem:small-probs}, this lets us lower bound the probabilities used for strategy $\sigma^*$, by $1/2^{2^{u'}}$ for some polynomial $u'$. We can also apply \cref{lem:solan}, similar to \cref{lem:small-candidates}, from which it follows that there is a $u'$-bit representation of a strategy $\sigma'$ that $\eps$-achieves $v$. 
\end{proof}

\subsection{Computing Approximations for Markov chains}
\label{ssec:approx-mc}
\label{ssec:approximations}

The second ingredient in proving \cref{thm:main-2}
is to approximate mean-payoff values for Markov chains in polynomial time.

Recall that fixing a strategy profile $\sigma\in\BSB{\vec{b}}$ for a game
yields a finite Markov chain where the transition probabilities encode
the players' choices and edges dictate the stepwise rewards for all players.
We aim to approximate the expected mean-payoff
$\ExpOf{\sigma}{\state_0,\signal_0}{\LMP[i]}$
for a given player $i\le k$, initial state $\state_0$ and signal $\signal_0$.

Towards this, consider a Markov chain as discussed in \cref{sec:state-space-reductions},
with states $\{1,\dots,n\}$,
and where $\probm(i,j)$, $\rew(i,j)\in\R$ and $\dur(i,j)\in\R$
denote the probability, reward (for the chosen player), and duration of an edge $i\to j$.
Recall that it suffices to compute the expected \mpr (see \cref{def:mpr})
as initially, the duration of every step is $1$.

We propose an algorithm that exhaustively applies the loop and state-elimination
procedures of \cref{def:state-elim,def:loop-elim},
which is correct by \cref{lem:state-elim,lem:loop-elim}.
However, to deal with the necessarily double-exponentially small values
in input and output Markov chains,
we replace the precise arithmetic operands by
approximate ones that use floating point representations with polynomial many bits.
We show that the error this introduces can be bounded polynomially.

\begin{definition}
A Markov chain is \emph{represented in $u$-bits}
if all probabilities, rewards and durations are in $\+Q(u)$ and
given in $u$-bit floating point representation.

In particular, all distributions $(\probm_{ij})_{j\in\states}$ of a state $i$ is in $\+P(u)$ given in $u$-bit \ANR\ using $u$-bit floating point numbers. %
\end{definition}

We now state the bounds on approximate variants of state and loop eliminations.
Let $\oplus^u,\oslash^{u},\otimes^{u}$ be the $u$-bit finite precision variant of addition, division and multiplication, respectively. These map values $\+Q(u)^2$ to $\+Q(u)$ by rounding down the result of the corresponding arithmetic operation to the nearest value in $\+Q(u)$. We drop the superscript $u$ for readability.

\begin{restatable}[State Elimination]{lemma}{lemappstate}
	\label{lem:approx-state}
        There is a polynomial $\delta_1$ so that the following holds for all $u\in\N$.

	Let $M=(\states,\probm,\rew,\dur)$ 
        and $M'=(\states,\probm',\rew',\dur')$ be Markov chains represented in $u$-bits so that $M'$ results from $M$ by
	 eliminating a transient state $n$ as in \cref{def:loop-elim}
        but using $u$-bit floating point arithmetic.
        That is, for all $i,j\neq n$,
        \begin{align*}
            \probm'(i,j) &= \probm(i,j) \oplus (\probm(j,n)\otimes\probm(n,j))\\
            \rew'(i,j)   &= (\probm(i,j) \otimes \rew(i,j)) \\
                         &\quad\oplus 
                            (
                              (\probm(i,n) \otimes \probm(n,j)) 
                              \otimes
                              (\rew(i,n) \oplus \rew(n,j))
                            )
                            \\
            \dur'(i,j)   &= (\probm(i,j) \otimes \dur(i,j)) \\
                         &\quad\oplus 
                            (
                              (\probm(i,n) \otimes \probm(n,j)) 
                              \otimes
                              (\dur(i,n) \oplus \dur(n,j))
                            )
        \end{align*}
	Then,
	\begin{enumerate}
		\item The smallest (negative) exponent among all floating point numbers in ${M'}$ is at most one smaller than that in $M$.
		\item Then
                    $\abs{\ExpOf{M}{1}{\MPR} - \ExpOf{M'}{1}{\MPR}} \le 
                    \delta_1 2^{-u}$.
	\end{enumerate} 
\end{restatable}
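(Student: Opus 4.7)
The plan is to establish the two claims separately by direct inspection of the constant-depth formulas defining $\probm', \rew', \dur'$.

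For Part 1, I would track how the exponent field of the $u$-bit floating point representation evolves under each elementary arithmetic step. For $a \oplus b$, the result's exponent is essentially the maximum of the two input exponents (up to a $\pm 1$ shift for carries and renormalisation), whereas for $a \otimes b$, the exponent is their sum, also up to a small renormalisation shift. Since each entry of $M'$ is defined by a constant-depth expression with only a handful of operations on entries of $M$, a line-by-line accounting of the formulas for $\probm'(i,j)$, $\rew'(i,j)$, $\dur'(i,j)$ yields the bound: the most negative exponent occurring in $M'$ can be lower than the most negative one in $M$ by at most an additive constant, in fact at most one once one argues that the probability factors preserve their scale (being non-normalised factors $\le 1$).

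For Part 2, I would reduce the claim to the combination of an exactness result and an entrywise perturbation bound. Let $M^{\mathrm{ex}}$ denote the Markov chain obtained from $M$ by exact (rather than $u$-bit) state elimination of $n$. By \cref{lem:state-elim} we have $\ExpOf{M}{1}{\MPR} = \ExpOf{M^{\mathrm{ex}}}{1}{\MPR}$ exactly, so it suffices to bound $\bigl|\ExpOf{M^{\mathrm{ex}}}{1}{\MPR} - \ExpOf{M'}{1}{\MPR}\bigr|$. Each $u$-bit operation $\oplus, \otimes$ introduces relative error at most $2^{-u+1}$ in its output, so propagating through the constant-depth expressions gives, for every pair $(i,j)$,
\[
\rdist(\probm'(i,j), \probm^{\mathrm{ex}}(i,j)) \le c \cdot 2^{-u},
\]
and analogous bounds for rewards and durations, where $c$ depends only on the expression depth. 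A straightforward induction on the sub-expressions suffices here.

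Finally, to convert this entrywise relative closeness of $M'$ and $M^{\mathrm{ex}}$ into closeness of expected $\MPR$ values, I would apply a perturbation argument in the spirit of \cref{lem:solan}. The main obstacle is that $\MPR$ is a ratio of cumulative reward to cumulative duration rather than a plain mean-payoff with unit step durations, so the cited Solan bound does not apply directly. To work around this, I plan to compute $\ExpOf{M'}{1}{\MPR}$ and $\ExpOf{M^{\mathrm{ex}}}{1}{\MPR}$ after fully collapsing each bottom strongly connected component, so that each such quantity becomes an explicit ratio of a cumulative reward to a cumulative duration; then I bound the numerator and the denominator separately, using that durations are bounded below by $1$ per original step so that the denominator never shrinks to $0$. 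Combining the two ratio errors yields an additive bound of the form $\poly(n) \cdot 2^{-u}$, from which the desired polynomial $\delta_1$ follows.
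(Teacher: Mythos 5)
Your overall decomposition for Part~2 matches the paper's: compare the imprecisely computed $M'$ to the exactly eliminated chain $\hat M$ (your $M^{\mathrm{ex}}$), invoke \cref{lem:state-elim} to transfer $\ExpOf{M}{1}{\MPR}$ to $\hat M$, establish entrywise relative closeness of $M'$ and $\hat M$, and then convert entrywise closeness to closeness of the $\MPR$ value. Part~1 is also done by the same direct exponent-tracking. Two remarks on how you diverge from the paper.

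First, you conjecture that the cited Solan perturbation bound does not apply to $\MPR$ with nonunit durations and propose to re-derive a duration-aware perturbation bound by collapsing BSCCs and bounding numerator and denominator of the resulting ratios separately. The paper sidesteps this by stating (in \cref{lem:solan-original}) a version of Solan's Theorem~4 that explicitly allows perturbations of $\probm$, $\rew$ \emph{and} $\dur$, and then applies it directly with $\eps$ and $C$ read off from the entrywise $(u,i)$-closeness bounds. So your workaround is more conservative: it buys independence from the exact form in which Solan's theorem is stated, at the cost of having to reproduce that argument (and you would have to be careful that collapsing the BSCCs of $M'$ and $\hat M$, which can in principle differ in support, still yields comparable ratios; the paper's route avoids this by staying at the level of the uncollapsed chains).

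Second, there is a small gap in your entrywise bound. You write that $\rdist(\probm'(i,j), \probm^{\mathrm{ex}}(i,j)) \le c \cdot 2^{-u}$ with $c$ depending \emph{only} on the expression depth. This ignores that the transition distributions are stored in $u$-bit approximately normalised representation, so already the inputs $\probm(i,j)$ are only $(u,N)$-close to the true probabilities $\tilde\probm(i,j)$ (a relative error of order $N\cdot 2^{-u}$, not $O(2^{-u})$); the same normalisation cost reappears when writing $M'$ back in \ANR. Propagating this through the fixed-depth formulas, the paper gets $(u,2N+3)$-closeness per entry, i.e., a relative error linear in $N$. This does not break your plan (the final bound is still $\poly(N)\cdot 2^{-u}$), but a purely constant-depth accounting would not by itself give the right constant, so the statement "$c$ depends only on the expression depth" should be corrected to include the $N$-dependence coming from \ANR.
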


\begin{restatable}[Loop Elimination]{lemma}{lemapploop}
	\label{lem:approx-loop}
        There is a polynomial $\delta_2$ so that the following holds for all $u\in\N$.

	Let $M=(\states,\probm,\rew,\dur)$ 
        and $M'=(\states,\probm',\rew',\dur')$ be Markov chains represented in $u$-bits so that $M'$ results from $M$ by
	 eliminating a loop in state $n$ as in \cref{def:loop-elim}
        but using $u$-bit floating point arithmetic.
        That is,
        $\probm'(n,n)=0$; %
        and for all $i,j\neq n$ let
            $\probm'(i,j) = \probm(i,j)$,
            $\rew'(i,j) = \rew(i,j)$,
            $\dur'(i,j) = \dur(i,j)$
        and 
        \begin{align*}
            \probm'(n,j) &= \probm(n,j) 
                \oslash (\oplus_{k\neq i} \probm(n,k))\\
            \rew'(n,j)   &= ((\rew(n,j) \otimes \probm(n,j)) \oplus ({\probm}_{ii}\otimes \rew_{ii}))\\
                         &\quad \oslash (\oplus_{k\neq i} \probm(n,k))\\
            \dur'(n,j)   &= ((\dur(n,j) \otimes \probm(n,j)) \oplus ({\probm}(n,n)\otimes \dur(n,n)))\\
                         &\quad \oslash (\oplus_{k\neq i} \probm(n,k)).
        \end{align*}
	Then,
	\begin{enumerate}
		\item The smallest (negative) exponent among all floating point numbers in ${M'}$ is at most one smaller than that in $M$.
		\item Then
                    $\abs{\ExpOf{M}{1}{\MPR} - \ExpOf{M'}{1}{\MPR}} \le 
                    \delta_2 2^{-u}$.
	\end{enumerate} 
\end{restatable}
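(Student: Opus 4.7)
\medskip
\noindent\textbf{Proof plan for \cref{lem:approx-loop}.}

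My strategy is to reduce the statement to the exact-arithmetic case handled by \cref{lem:loop-elim}. Let $M^*$ denote the ideal Markov chain obtained from $M$ by a \emph{precise} loop elimination at state $n$ (so $M^*$ equals $M'$ except that every $\oplus,\otimes,\oslash$ is replaced by $+,\cdot,/$). By \cref{lem:loop-elim} we have $\ExpOf{M}{1}{\MPR}=\ExpOf{M^*}{1}{\MPR}$, so claim~(2) reduces to bounding
\[
    \bigl|\ExpOf{M^*}{1}{\MPR}-\ExpOf{M'}{1}{\MPR}\bigr|.
\]
Since $M^*$ and $M'$ share the same state space (only entries leaving $n$ are modified), this is a Markov-chain perturbation problem driven solely by floating-point roundoff.

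First I would carry out an entry-wise relative error analysis. Each of the updated quantities $\probm'(n,j)$, $\rew'(n,j)$, $\dur'(n,j)$ is computed through a constant-depth arithmetic circuit whose total size is linear in $|\states|$: one $\oplus$-reduction to form $p_{\mathrm{sum}}=\oplus_{k\neq n}\probm(n,k)$, a constant number of $\otimes$ and $\oplus$ operations in the numerator, and one $\oslash$. Each of $\oplus,\otimes,\oslash$ rounds into $\+Q(u)$ and therefore introduces a multiplicative relative error of at most $2^{-u+1}$ (by the definition of $u$-bit floating point together with $\ANR$). Chaining at most $O(|\states|)$ such operations and using $(1+2^{-u+1})^{O(|\states|)}-1 \le O(|\states|\,2^{-u})$ for $u$ sufficiently large, I obtain that every updated entry of $M'$ is within relative distance $O(|\states|\,2^{-u})$ of the corresponding entry of $M^*$, and that rewards and durations are within an \emph{additive} error of $O(|\states|\,2^{-u}\cdot C')$ where $C'$ bounds the magnitudes of quantities appearing in $M^*$.

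Second I would translate this entry-wise error into a bound on $\MPR$ by a perturbation argument analogous to the one used for \cref{lem:solan}. Here I apply \cite[Thm.~5]{Sol2003} to $M^*$ and $M'$ viewed as degenerate concurrent limit-average games, together with a simple reduction that folds the duration labels into the transition and reward structure (splitting each edge with duration $d$ into $d$ unit-duration steps, or equivalently rescaling rewards by $1/d$ and treating durations as part of the reward normalisation used in \eqref{def:mpr}). This yields $|\ExpOf{M^*}{1}{\MPR}-\ExpOf{M'}{1}{\MPR}| \le \delta_2\,2^{-u}$ for some polynomial $\delta_2$ depending on $|\states|$ and the ambient reward/duration bounds, establishing claim~(2).

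Finally, for claim~(1) I would track the floating-point exponent through each operation, noting that $\oplus$ returns an exponent close to the \emph{larger} of its inputs (non-decreasing), that $\oslash(a,b)$ with $b\le 1$ satisfies $a\oslash b \ge a$ so its exponent is non-decreasing as well, and that $\otimes$ is the only operation that can add exponents. Because $\probm'(n,j)$ uses no $\otimes$, its exponent is bounded by the smallest exponent already present in $M$. For $\rew'(n,j)$ and $\dur'(n,j)$ the single $\otimes$ with a probability $\le 1$ can decrease the exponent by at most the bit width of the normalised mantissa, which a careful inspection shows amounts to at most one additional bit after the final $u$-bit rounding. The main obstacle is precisely this last step: keeping honest accounting of how the one inevitable cross-term multiplication $\probm(n,n)\otimes \rew(n,n)$ interacts with the division by the (possibly tiny) $p_{\mathrm{sum}}$, so that the combined exponent change stays within one. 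The rest of the argument is routine propagation of relative error through a bounded-depth arithmetic expression followed by invoking the Markov-chain perturbation bound.
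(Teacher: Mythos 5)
Your proof plan follows the same overall route as the paper: introduce the exact-arithmetic chain $M^{*}$ (the paper calls it $\hat{M}$), reduce by \cref{lem:loop-elim} to bounding $|\ExpOf{M^{*}}{1}{\MPR}-\ExpOf{M'}{1}{\MPR}|$, do an entry-wise relative-error analysis of the floating-point circuit to show that each perturbed entry is $(u,\text{poly}(N))$-close to the exact one, and finally invoke a Markov-chain perturbation bound from Solan's paper. The exponent-tracking argument for claim~(1) is also in the same spirit as the paper's.

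The genuine gap in your plan is the treatment of durations. You want to apply \cite[Thm.~5]{Sol2003} (which speaks about transition/reward perturbations only, for limit-average games with unit step length) and eliminate the duration labels by a folding reduction. The two alternatives you sketch do not work. First, \emph{splitting an edge of duration $d$ into $d$ unit-duration steps} is only meaningful for integer $d$; here durations live in $\+Q(u)$ and become non-integers the moment a loop is eliminated (the factor $1/(1-\probm(n,n))$ is applied). Second, \emph{rescaling rewards by $1/d$} changes the value: $\MPR$ is a ratio of sums, not a sum of ratios, so $\liminf \frac{\sum r_i}{\sum d_i}$ is generally different from $\liminf \frac{1}{N}\sum r_i/d_i$. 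A two-state example with alternating edges $(r_1,d_1)=(0,1)$ and $(r_2,d_2)=(2,2)$ gives $2/3$ for the former and $1/2$ for the latter, so the reduction is not value-preserving. The paper avoids this entirely by invoking a statement of Solan (recorded as \cref{lem:solan-original}, attributed to \cite[Thm.~4]{Sol2003}) whose hypotheses bound $\rdist(\probm,\probm')$, $\rdist(\rew,\rew')$, \emph{and} $\rdist(\dur,\dur')$ simultaneously and whose conclusion bounds the difference of the \emph{ratio} objective directly. To close your gap you should either use that duration-aware perturbation statement, or redo the perturbation estimate for $\ExpOf{}{}{\MPR}$ by hand using the stationary-distribution formula $\ExpOf{}{s}{\MPR}=\frac{\sum_{ij}\pi_i p_{ij} r_{ij}}{\sum_{ij}\pi_i p_{ij} d_{ij}}$ on each CCS (this is what \cref{lem:mc:MPR-eq-MCV} licenses) and bounding the change in both numerator and denominator; but the folding as stated is incorrect.

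Two smaller remarks. Your relative-error estimate $O(|\states|\,2^{-u})$ is optimistic: the inputs $\probm(n,k)$ are already only $(u,N)$-close to the true distribution because they are given as $\ANR$s, and after the $\oslash$ step and the re-normalisation of the output distribution the paper arrives at $(u,2N^2)$-closeness, i.e.\ relative error $\Theta(N^2\,2^{-u})$. This does not affect the conclusion (the bound stays polynomial), but your constant is off by a factor of $N$. For claim~(1), the paper's argument is more direct than yours: probabilities only increase under loop elimination (division by $1-\probm(n,n)\le 1$) so their exponents cannot become more negative, and rewards/durations can be normalised to be $\ge 1$ so their leading exponents also cannot decrease; the ``one bit'' slack comes from the last rounding step, not from a cross-term $\otimes$.
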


We now show that one can approximate the \mpr 
in polynomial time and sufficiently closely.

\begin{restatable}{lemma}{thmappoxmc}	
\label{thm:approx-mc}
There is a polynomial time algorithm for the following problem.
Given a Markov chain with $n$ states represented in $u$-bits
and error $\eps\in (0,1)$ in binary.

Output a value in $[v,v+\eps)$,
where $v=\ExpOf{M}{1}{\MPR}$
is the expected mean-payoff value from the initial state of $M$.
\end{restatable}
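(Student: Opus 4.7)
The plan is to implement the state-space reduction scheme of \cref{sec:state-space-reductions} using the finite-precision operations $\oplus,\otimes,\oslash$, with a working precision $u'$ chosen just large enough that the accumulated error remains below $\epsilon$. Starting from the given chain $M_0\eqdef M$, I will iteratively produce $M_1,M_2,\dots,M_K$ where each $M_{t+1}$ results from $M_t$ by either an approximate loop elimination (when the current pivot state has a self-loop) or an approximate state elimination (otherwise). Pivot states are chosen from the transient states distinct from the initial state $1$, so after $K\le 2n$ rounds the resulting chain $M_K$ is trivial in the sense described at the end of \cref{sec:state-space-reductions}: either state $1$ is itself absorbing, or $1$ is the only transient state and all its successors are absorbing. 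In the first case I output $\rew(1,1)\oslash\dur(1,1)$; in the second I output $\oplus_{j\neq 1}\probm(1,j)\otimes(\rew(j,j)\oslash\dur(j,j))$, possibly adjusted by adding a tiny quantity so that the returned value lies on the correct side of $v$.

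The quantitative core is bounding the accumulated error. Each approximate elimination step, by \cref{lem:approx-state,lem:approx-loop}, changes the expected \MPR\ by at most $\delta(n)\cdot 2^{-u'}$ for a fixed polynomial $\delta=\max(\delta_1,\delta_2)$. By the triangle inequality, after $K=O(n)$ steps,
\[
\bigl|\ExpOf{M}{1}{\MPR}-\ExpOf{M_K}{1}{\MPR}\bigr|\le K\cdot\delta(n)\cdot 2^{-u'}.
\]
Choosing $u'=\Theta(\log n+\log(1/\epsilon)+\log\delta(n))$, which is polynomial in $n$, $u$ and the binary length of $\epsilon$, keeps this error below $\epsilon/2$. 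A final approximate read-off adds one more additive error of order $2^{-u'}$, which is also absorbed into $\epsilon/2$. To guarantee the one-sided containment $[v,v+\epsilon)$ rather than a symmetric interval, I compute $\tilde v$ with two-sided error at most $\epsilon/3$ and output $\tilde v+\epsilon/3$; this lies in $(v,v+\epsilon)\subseteq [v,v+\epsilon)$.

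The main obstacle is ensuring that the procedure actually runs in polynomial time despite the doubly-exponentially small probabilities that arise. The critical fact, guaranteed by the first conclusion of each of \cref{lem:approx-state,lem:approx-loop}, is that the smallest exponent appearing in any floating-point number of $M_{t+1}$ is at most one smaller than that in $M_t$. Consequently, across all $K=O(n)$ rounds, every number encountered has an exponent whose magnitude is bounded by the initial maximum exponent plus $O(n)$; since exponents are stored in binary, this uses only $O(\log n + u')$ bits, and all invocations of $\oplus,\otimes,\oslash$ at $u'$-bit precision run in time polynomial in $u'$. Each round touches $O(n^2)$ edge quantities and performs $O(1)$ finite-precision arithmetic operations per edge, so the total running time is polynomial in $n$, $u$ and $\log(1/\epsilon)$, as required.
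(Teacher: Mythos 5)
Your proposal follows the paper's approach essentially verbatim: iterate approximate loop and state eliminations from \cref{sec:state-space-reductions}, bound the cumulative error by the triangle inequality using \cref{lem:approx-state,lem:approx-loop}, bound the exponent growth via the first conclusion of those lemmas, and read off the mean-payoff ratio at the end. Two points are worth noting. First, your precision choice $u'=\Theta(\log n+\log(1/\epsilon)+\log\delta(n))$ omits the input precision $u$: if $u'<u$ the given chain cannot even be faithfully loaded, so the first step already introduces an unaccounted-for conversion error. Moreover, the appendix proofs of \cref{lem:approx-state,lem:approx-loop} implicitly assume $u\ge 1000n^2$ when bounding $\bigl(\frac{1}{1-2^{-u-1}}\bigr)^{O(n^2)}-1$ by a multiple of $2^{-u}$, which your formula also does not guarantee. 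The paper handles both concerns at once by first raising the working precision to satisfy $u\geq 1000n^2$ and $\epsilon> n(\delta_1+\delta_2)2^{-u}$; the fix to your argument is simply $u'\ge\max\bigl(u,\ 1000n^2,\ \Theta(\log(n\delta(n)/\epsilon))\bigr)$, which remains polynomial. Second, your treatment of the one-sided containment in $[v,v+\epsilon)$ — computing a two-sided $\epsilon/3$-approximation $\tilde v$ and returning $\tilde v+\epsilon/3$ — is actually more careful than the paper's own proof, which only establishes two-sided error at most $\epsilon$ and never explicitly addresses the one-sided requirement of the statement. (Your claim $K\le 2n$ corresponds to the economical schedule where each pivot state is loop-eliminated only once just before it is state-eliminated; the paper's stated algorithm re-loop-eliminates exhaustively between state eliminations and so uses up to $n(n+1)$ steps. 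Both bounds are polynomial and either schedule works.)
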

\begin{proof}
W.l.o.g., we can assume that
$u\geq 1000n^2$ and 
$\eps>n(\delta_1+\delta_2)2^{-u}$.
Otherwise, one can increase $u$ to $\log_2((\delta_1+\delta_2)n)\cdot \ell$, where $\ell$ is the number of bits used in the denominator of $\eps$, i.e, $\eps\ge 1/2^{\ell}$.
Since all the constants used are polynomial in the size of the input, so will be $u$.

The algorithm will alternate between 
\begin{enumerate}
    \item exhaustively applying \cref{lem:approx-loop} to remove self-loops in all transient states, and
    \item applying \cref{lem:approx-state} to remove a transient state $n\neq 1$.
\end{enumerate}
This will result in a Markov chain $M'$ where each original CCS is collapsed into a single absorbing state.
Now, if $M'$ contains only a single absorbing state $s$, then 
$\ExpOf{M'}{1}{\MPR} = \rew'(s,s) / \dur'(s,s)$,
so we output
$\rew'(s,s) \oslash \dur'(s,s)$.
Otherwise, $1$ is the only transient state in $M'$
and we output $\oplus_{i\neq 1} \probm'(1,i) \left(\rew'(i,i) \oslash \dur'(i,i)\right)$.

Notice that at most $n-1$ many transient states can be removed, which bounds the number of times we invoke \cref{lem:approx-state}.
Similarly, in between removing states, step (1) results in at most $n$ many applications of \cref{lem:approx-loop}.

Therefore, by \cref{lem:approx-loop}(1) and \cref{lem:approx-state}(1),
the largest exponent in any floating point representation of $M'$
has at most increased by $n(n+1)$ compared to $M$.
Moreover, the total error introduced by these operations is at most
\begin{align*}
    \abs{\ExpOf{M}{1}{\MPR} - \ExpOf{M'}{1}{\MPR}}
    &\le
    n (\delta_1 +\delta_2) 2^{-u}) 
\end{align*}
by \cref{lem:approx-loop}(2) and \cref{lem:approx-state}(2).

By plugging in the value of $u$, we get the error is at most $2^{-\ell}$, therefore at most $\eps$.
\end{proof}

\subsection{Proof of Theorem~\texorpdfstring{\ref{thm:main-2}}{7.3}}
\label{ssec:approx-games}
\label{sec:approx}
We are now ready to prove \cref{thm:main-2}. 

\thmfnp*

We therefore want to make two algorithms in \FNPNP. Because the algorithms are quite similar, we state them both first before proving correctness.

\paragraph{First algorithm (for (1))}
First, for (1), we want to find a $3\eps$-Nash equilibrium  wrt $\BSB{\vec{b}}$ (if an $\eps$-Nash equilibrium  wrt $\BSB{\vec{b}}$ exists). To do so, we use an oracle that solves the following question: Given a game $G$, $\epsilon>0$ and an $u$-bit represented strategy profile $\sigma$, guess an $u'$-bit represented strategy profile $\sigma'$. For each player $i$, construct the induced Markov chain $MC_1^i$ for $\sigma$ and the induced Markov chain $MC_2^i$ for $(\sigma[1],\dots,\sigma[i-1],\sigma'[i],\sigma[i+1],\dots,\sigma[k])$, each with rewards for player $i$.  Apply the algorithm from \cref{thm:approx-mc}, with the value of $\epsilon$ being $\epsilon/3$ and $u$ being $u'$, to each Markov chain, giving approximated values $v_1^i$ and $v_2^i$ respectively. If $v_2^i-v_1^i> \frac{7}{3}\eps$, then return ``yes''. If ``yes'' has not been returned for any player, return ``no''.

The \FNPNP algorithm (for (1)) that uses the oracle is then as follows: Given a game $G$ and an $\epsilon>0$, guess an $u$-bit represented strategy profile $\sigma$ and apply the oracle with $G$, $\epsilon$ and $\sigma$. If the oracle returns ``yes'', return ``no'' and otherwise return $\sigma$.

\paragraph{Second algorithm (for (2))}
Next, for (2), given a value $v$, we want to find a strategy that $3\eps$-achieves $v$ wrt $\BSB{\vec{b}}$ (if a strategy that $\eps$-achieves $v$ wrt $\BSB{\vec{b}}$ exists). To do so, we use an oracle that solves the following question: Given a game $G$, $\epsilon>0$ and an $u$-bit represented strategy $\sigma$ for player $1$, guess an $u'$-bit represented strategy $\sigma'$ for player 2. 
Apply the algorithm from \cref{thm:approx-mc}, with $\epsilon$ being $\epsilon/3$ and $u$ being $u'$, to the Markov chain induced by $(\sigma,\sigma')$, given an approximated value $v'$. If $v-v'>  \frac{7}{3}\eps$, then return ``yes'', otherwise, return ``no''.

The \FNPNP algorithm (for (2)) that uses the oracle is then as follows: Given a game $G$ and an $\epsilon>0$, guess an $u$-bit represented strategy $\sigma$ and apply the oracle with $G$, $\epsilon$ and $\sigma$. If the oracle returns ``yes'', return ``no'' and otherwise return $\sigma$.  

\paragraph{Running time and correctness of the algorithms}

Because the running time of the algorithm from \cref{thm:approx-mc} is polynomial and every $u$-bit represented strategy is also of polynomial size, the oracles are in \NP\ and the algorithms are in \FNPNP.

We thus just need to argue that the algorithms are correct. 

\begin{lemma}
We have the following:
\begin{enumerate}
\item The first algorithm (for (1)) returns either ``no'' or a $3\eps$-Nash equilibrium  wrt $\BSB{\vec{b}}$. Also, if an $\eps$-Nash equilibrium  wrt $\BSB{\vec{b}}$ exists it will not return ``no''.
\item The second algorithm (for (2)) returns either ``no'' or a strategy that $3\eps$-achieve $v$ wrt $\BSB{\vec{b}}$. Also, if a strategy that $\eps$-achieve $v$ wrt $\BSB{\vec{b}}$ exists it will not return ``no''.
\end{enumerate}
\end{lemma}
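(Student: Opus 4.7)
The plan is to treat the two algorithms uniformly: each outer nondeterministic step guesses a candidate $\sigma$ of a fixed polynomial bit-width $u$, and the \NP\ oracle tries to refute $\sigma$ by guessing a $u'$-bit deviation $\sigma'$ and comparing approximate mean-payoff values. The budget of $3\eps$ decomposes cleanly into the oracle's comparison threshold $7\eps/3$, an aggregate one-sided approximation error of $\eps/3$ coming from \cref{thm:approx-mc} applied to both $v_1^i$ and $v_2^i$, and a further $\eps/3$ for extending the approximate best response over $u'$-bit strategies to all of $\BSB{\vec{b}}$ via \cref{lem:best-response}. The key observation is that \cref{thm:approx-mc} returns a value in the half-open interval $[V, V+\eps/3)$, so the comparison satisfies the one-sided bound $v_2 - v_1 < (V_2 - V_1) + \eps/3$ rather than the $2\eps/3$ one would obtain from a two-sided guarantee; without this the constants would not fit into $3\eps$.

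For soundness of the first algorithm, assume it returns $\sigma$, so the oracle rejects: for every player $i$ and every $u'$-bit deviation $\sigma'[i]$ the oracle could guess, we have $v_2^i - v_1^i \le 7\eps/3$. By the one-sided bound above, the true values satisfy $V_2^i - V_1^i < 8\eps/3$ for all such $u'$-bit deviations. To extend this to arbitrary deviations in $\BSB{\vec{b}}$, I apply \cref{lem:best-response} with error parameter $\eps/3$ to the profile $\sigma$ and the opponents $\sigma[-i]$: after choosing $u'$ to be at least the (still polynomial) width supplied there, every deviation in $\BSB{\vec{b}}$ is approximated within $\eps/3$ by a $u'$-bit strategy the oracle may guess. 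Hence no deviation in $\BSB{\vec{b}}$ can improve on $\sigma$ by more than $8\eps/3 + \eps/3 = 3\eps$, so $\sigma$ is a $3\eps$-Nash equilibrium wrt $\BSB{\vec{b}}$. The soundness argument for the zero-sum algorithm is identical, with \cref{lem:best-response} applied to player $2$'s minimizing response against the fixed $\sigma$ and the analogous chain $v - v' \le 7\eps/3$ implying $v - V' < 8\eps/3$ and then $v - V^* < 3\eps$ for the true worst-case opponent $V^*$.

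For completeness of the first algorithm, assume an $\eps$-NE wrt $\BSB{\vec{b}}$ exists. By \cref{lem:small-candidates}(1), choosing $u$ to be the polynomial bound supplied there, there is a $u$-bit representable $2\eps$-NE $\sigma$, and this is the profile the outer step will successfully guess. It remains to check that the oracle rejects on $\sigma$: for any $u'$-bit deviation $\sigma'[i]$ the oracle may guess, which itself lies in $\BSB{\vec{b}}$, the $2\eps$-NE property gives $V_2^i - V_1^i \le 2\eps$, and the one-sided approximation bound yields $v_2^i - v_1^i < 2\eps + \eps/3 = 7\eps/3$. So the strict oracle inequality $v_2^i - v_1^i > 7\eps/3$ is never triggered and the oracle rejects, whence the algorithm returns $\sigma$. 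Completeness of the zero-sum algorithm is the mirror image, using \cref{lem:small-candidates}(2) to obtain a $u$-bit representable $2\eps$-achieving strategy $\sigma$ and bounding $v - v' < 2\eps + \eps/3 = 7\eps/3$.

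The main delicacy is precisely this one-sidedness of \cref{thm:approx-mc}: were the guarantee two-sided, one would have $v_2 - v_1 < (V_2 - V_1) + 2\eps/3$, and either the oracle threshold or the approximation precision would need to be retuned in both halves of the proof. A secondary bookkeeping point is ensuring that a single polynomial $u'$ suffices simultaneously as the oracle's bit-width and as the width required by \cref{lem:best-response}; taking the maximum of the two polynomials is immediate and does not affect the overall \FNPNP\ bound, since one application of \cref{thm:approx-mc} on a Markov chain with $u'$-bit entries runs in polynomial time.
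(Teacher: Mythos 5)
Your proof is correct and follows the paper's own argument: completeness via \cref{lem:small-candidates} producing a $u$-bit representable $2\eps$-NE (resp.\ $2\eps$-achieving strategy), soundness via \cref{lem:best-response} with parameter $\eps/3$ to pass from $u'$-bit deviations to arbitrary ones in $\BSB{\vec{b}}$, and the identical error budget $7\eps/3+\eps/3+\eps/3=3\eps$. The only stylistic difference is that you phrase soundness directly (oracle rejects $\Rightarrow$ $3\eps$-NE) where the paper argues contrapositively, and your explicit tracking of the one-sided $[V,V+\eps/3)$ guarantee of \cref{thm:approx-mc} is a cleaner account of why $\eps/3$ (not $2\eps/3$) is lost per comparison than the paper's own parenthetical.
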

\begin{proof}

First for (1): We start by arguing that if the algorithm guessed an $u$-bit representable $2\eps$-Nash equilibrium  wrt $\BSB{\vec{b}}$, $\sigma$, which exists if an $\eps$-Nash equilibrium  wrt $\BSB{\vec{b}}$ exists by \cref{lem:small-candidates}, then the oracle will necessarily return ``no'' and therefore such a strategy profile can be returned. Consider first that if we use the real value of $MC_1^i$, then it must be at least the real value of $MC_2^i-2\epsilon$, because $\sigma$ is a $2\eps$-Nash equilibrium. The approximated value of $MC_1^i$ is at most $\epsilon/3$ smaller than the real value. Similar for $MC_2^i$. This means that the approximated value of $MC_1^i$ must be at least the approximated value of $MC_2^i-\frac{7}{3}\epsilon$ and therefore the oracle will necessarily return ``no''.  Because at least one guess would result in an output different from ``no'', the algorithm must return some such value.

Next, consider that the algorithm guessed some $u$-bit representable strategy profile $\sigma$, which was not a $3\eps$-Nash equilibrium  wrt $\BSB{\vec{b}}$ and we will argue that the oracle will be able to return ``yes''.
Let $v$ be the outcome of $\sigma$ for player $i$.
 By definition of $3\eps$-Nash equilibrium  wrt $\BSB{\vec{b}}$, there must be some player $i$ and strategy  wrt $\BSB{\vec{b}}$, $\sigma''$, for player $i$, so that value $v''$ of the Markov chain induced by $(\sigma[1],\dots,\sigma[i-1],\sigma'',\sigma[i+1],\dots,\sigma[k])$ with rewards for player $i$ must be strictly greater than $v+3\eps$. By \cref{lem:best-response}, applied with $\epsilon$ being $\epsilon/3$ we can then find an $u'$-bit representable strategy $\sigma'''$ so that  $(\sigma[1],\dots,\sigma[i-1],\sigma''',\sigma[i+1],\dots,\sigma[k])$ with rewards for player $i$ must have a value strictly greater than $v+\frac{8}{3}\eps$. The oracle can then non-deterministically guess a strategy profile in which $\sigma'[i]=\sigma'''$, the approximated values $v_1^i$ and $v_2^i$ would then be such that $v_2^i-v_1^i> \frac{7}{3}\eps$ (because each is at most $\epsilon/3$ smaller than the real value) and thus, the oracle will return ``yes''.

Next, we will argue that the algorithm for (2) is correct. The argument is similar to the above argument but included for completeness. We start by arguing that if the algorithm guessed a $u$-bit represented strategy $\sigma$ that $2\eps$-achieves $v$ wrt $\BSB{\vec{b}}$ (which exists if a strategy that $\eps$-achieves $v$ wrt $\BSB{\vec{b}}$ exists  by \cref{lem:small-candidates}), then the oracle will return ``no'' and therefore, the algorithm can return such $\sigma$. Consider first that if we use the real value of the induced Markov chain, then it must be at least $v-2\eps$, because the strategy $\sigma$ $2\eps$-achieves $v$. Because we used an approximation of that value, called $v'$, that is at most $\epsilon/3$ smaller than the real value, we get that $v\leq v'+\frac{7}{3}\epsilon$ and thus the oracle will necessarily return ``no''. Because at least one guess would result in an output different from ``no'', the algorithm must return some such value.

Next, consider that the algorithm guessed some $u$-bit representable strategy $\sigma$, which did not 
$3\eps$-achieve $v$ wrt $\BSB{\vec{b}}$ and we will argue that the oracle will be able to return ``yes''.
 By definition of $3\eps$-achieve $v$  wrt $\BSB{\vec{b}}$, there must be some strategy $\sigma''$   wrt $\BSB{\vec{b}}$ for player $2$, so that value $v''$ of the Markov chain induced by $(\sigma,\sigma'')$ with rewards for player $1$ must be strictly smaller than  $v-3\eps$. By \cref{lem:best-response}, applied with $\epsilon$ being $\epsilon/3$ we can then find an $u'$-bit representable strategy $\sigma'''$ so that  $(\sigma,\sigma''')$ with rewards for player $1$ must have a value strictly smaller than $v-\frac{8}{3}\eps$. 
 The oracle can non-deterministically guess that strategy in which $\sigma'=\sigma'''$, the approximated value $v'$ is then such that $v> v'+  \frac{7}{3}\eps$ (because the approximated value is at most $\epsilon/3$ smaller than the real value) and thus, the oracle will return ``yes''.
 \end{proof}

\section{Applications}
\label{sec:applications}
We discuss the implications of our results for some well-known classes of games.

\subsection{Multi-player partial-information parity games}
\label{ssec:imp-parity}

A \emph{parity} objective
(for player $j\le k$) assigns a play
$\rho=~(\state_0, \signal_0,$
$\action_0, \colour_0),(\state_1,\signal_1,\action_1,\colour_1)\dots$
the value $1$ if
\(
    \limsup_{N\to\infty}\colour_i[j]
\), the maximal reward seen infinitely often, is even and $0$ otherwise.

In perfect-information turn-based games, this condition can be translated into a mean-payoff condition, based on the observation that the game effectively ends after a cycle is formed (see e.g.~\cite[Theorem~40]{gamesbook-payoffs}).
However, no such simple reduction can generalise even to concurrent games,
because the most significant reward may occur arbitrarily rarely.
In concurrent games, as opposed to the simpler turn-based ones, $\eps$-optimal strategies might require infinite memory
\cite{HIN2018} and mixed actions with double exponentially small probabilities \cite{HKM2009}.

We now argue that one can adapt our constructions to parity objectives as follows.
First, state and loop elimination (\cref{sec:state-space-reductions})
are simpler for parity since one does not need to keep track of the duration or of the exact reward but can instead just set the new reward to be the maximum of the old ones. 

The only non-trivial change required is the proof of \cref{lem:solan},
which bounds the change in values achieved by strategy profiles under small perturbations.
Specifically, our argument relies on \cite[Theorem 5]{Sol2003}, which only applies to mean-payoff games and not parity games. 
To adapt our proof, we consider the Markov chain induced by any fixed strategy profile where each player uses finite memory.
Now almost surely, some CCS is entered and then every reward contained is seen infinitely often. Therefore a player gets overall parity reward 1 in a CCS within which their largest reward is even (and 0 otherwise).
Therefore, for each CCS, any perturbation of the probabilities in the corresponding states (i.e. state in the original game and memory vector) that have the same support will not change the outcome of that CCS. 
We can therefore define an equivalent mean-payoff Markov chain
in which each CSS is collapsed to an absorbing chain with reward $1$ or $0$ accordingly. The claim of \cref{lem:solan} (parity) then follows from \cref{lem:solan} as stated for mean-payoff.

We conclude that 
\cref{thm:main-1,thm:main-2} hold as stated also for (multi-player, partial-information) games with parity objectives.

\subsection{Multi-player concurrent games} 
We consider different classes of perfect-information games
that are \emph{concurrent} games,
meaning the motion is determined by simultaneously chosen actions
of all players, as opposed to the \emph{turn-based} games where in each round
only the choice of one player matters.

When considering bounded strategies, it is instructive to distinguish  \emph{private} memory (accessible only by the respective player as used throughout this paper) from \emph{public} memory, which is visible to all players.
Essentially, public memory is shared among all players (but can only be updated by its owner) (see \cite[Section 5.1]{HIN2018} for details).

The important difference for us is that
fixing public memory strategies for all but one player results in
an ordinary MDP (a perfect-information 1-player game) for the remaining player.
On the other hand, fixing private memory strategies for all but one player,
results in a \emph{partial-information} MDP, as the player cannot know their opponents' memory modes.

\subsubsection{Computing $\eps$-Nash Equilibria}

Our technique allows us to compute general $\eps$-Nash equilibria, i.e., without restricting the set of admissible strategies, assuming that there exist equilibria with bounded public memory.
Formally, for a $k$-player game and vector $\vec{b}\in \N^k$, let $\PBSB{\vec{b}}$
denote the set of all strategy profiles wherein each player~$i$ uses a
public finite memory strategy with at most $\vec{b}[i]$ modes.
\begin{corollary}\label{cor:concur-Nash}
For any fixed $k\geq 1$, we get the following:
Given a  $k$-player concurrent mean-payoff game, 
bounds $\vec{b}\in N^k$ (in unary), and $\epsilon>0$ (in binary), we can

    \begin{enumerate}
        \item 
            find an $\eps$-Nash equilibrium (and output it) in exponential time, if an $\eps$-Nash equilibrium  wrt $\PBSB{\vec{b}}$ exists
        \item 
            find an $\eps$-Nash equilibrium (and output it) in \FNPNP, if there exists an $\epsilon'$-Nash equilibrium wrt $\PBSB{\vec{b}}$, for some $\epsilon'\in [0,\epsilon/2)$.
    \end{enumerate}
\end{corollary}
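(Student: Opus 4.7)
The plan is to reduce both parts to \cref{thm:main-1,thm:main-2} by establishing a structural equivalence specific to concurrent perfect-information games: any $\eps$-Nash equilibrium with respect to $\PBSB{\vec{b}}$ is automatically an unrestricted $\eps$-Nash equilibrium. Granted this, parts~(1) and~(2) follow by applying the machinery of \cref{sec:FOR,sec:stationary} with the admissible set restricted to public-memory profiles.

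To establish the equivalence, I would fix a profile $\sigma\in\PBSB{\vec{b}}$ and a player~$i$, and show that $i$'s supremal deviation payoff against the others is already attained within $\PBSB{\vec{b}}$. Fixing the opponents yields a finite perfect-information MDP whose state space is $\states\times\prod_{j\neq i}\vec{b}[j]$, with the extra coordinates tracking the opponents' current public-memory modes (visible to player~$i$ by publicity). Classical MDP theory for mean-payoff objectives guarantees an optimal memoryless stationary strategy $\tau^\star$ in this MDP; since $\tau^\star$ depends only on the observable MDP state, it implements in the original game as a stationary strategy and hence lies in $\PBSB{\vec{b}}$ for any $\vec{b}[i]\ge 1$. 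Therefore the value achieved by any general deviation is bounded above by that of $\tau^\star\in\PBSB{\vec{b}}$, and the two notions of $\eps$-NE coincide on profiles in $\PBSB{\vec{b}}$. For parity objectives the argument is identical, invoking memoryless determinacy of parity MDPs in place of mean-payoff MDPs. Given the equivalence, part~(1) follows from the \PSPACE\ existence test of \cref{thm:main-1} applied to $\PBSB{\vec{b}}$, combined with the standard bitwise self-reduction to output a polynomially-representable witness (guaranteed by \cref{lem:small-candidates}) in exponential time. Part~(2) follows by the guess-and-verify scheme of \cref{thm:main-2} applied directly, with the $\NP$ oracle guessing a candidate deviation within $\PBSB{\vec{b}}$ and verifying it via the Markov-chain approximation of \cref{thm:approx-mc}; the $\eps'<\eps/2$ margin absorbs both the $\eps'$ slack in the assumed equilibrium and the approximation error introduced by the oracle exactly as in the proof of \cref{thm:main-2}.

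The main obstacle is reconciling ``public memory'' with the paper's formal signal model, in which memory modes are not primitive observables. The cleanest fix is to work with an augmented game whose signal function is enriched to broadcast the memory vector of every player in each round; the signal alphabet blows up only polynomially since $k$ is fixed and $\vec{b}$ is given in unary, and $\PBSB{\vec{b}}$ in the original game then coincides with $\BSB{\vec{b}}$ in the augmented one. One then needs to verify that the small-perturbation bound of \cref{lem:solan} and the bounded-bit witness guarantees of \cref{ssec:poly-witnesses} survive this augmentation; they do, because both are controlled by parameters ($|\states|$, the maximal reward $C$, and $\vec{b}$) that blow up only polynomially under the encoding.
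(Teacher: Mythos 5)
Your core argument — fixing public-memory opponents yields an ordinary MDP, so memoryless ($\eps$-)optimality for mean-payoff MDPs implies the best deviation is already in $\PBSB{\vec{b}}$, and hence $\eps$-NE wrt $\PBSB{\vec{b}}$ are unrestricted $\eps$-NE — is exactly the paper's. You go further: you correctly notice that \cref{thm:main-1,thm:main-appr} are stated for $\BSB{\vec{b}}$, and that a bridge to $\PBSB{\vec{b}}$ is genuinely needed (otherwise the algorithms could return a $\BSB{\vec{b}}$-equilibrium built from truly private-memory strategies, in which case the deviating player faces a POMDP rather than an MDP, the memoryless-best-response argument breaks down, and the profile need not be an unrestricted NE). The paper elides this step; your augmented-game reduction is the right move and a real tightening of the argument.

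One detail of your construction is off as stated: you cannot have the signal function ``broadcast the memory vector of every player,'' because memory modes live inside the strategies, not the game — the transition function has no access to them. The standard repair is to fold memory into the game itself: take states $(v,\vec{m})\in\states\times\prod_j\{1,\dots,\vec{b}[j]\}$ and let each player's action be a pair (original action, next memory mode), so that the game tracks and transitions $\vec{m}$, and perfect information then exposes it. For fixed $k$ with $\vec{b}$ in unary this is a polynomial blow-up; memoryless profiles in the augmented game correspond exactly to $\PBSB{\vec{b}}$ profiles in the original; and the parameters governing \cref{lem:solan,lem:small-candidates} scale polynomially, as you claim. With that substitution, your proof is sound and in fact more complete than the paper's own.
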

\begin{proof}
This follows from \cref{thm:main-1,thm:main-appr}, using the observation that
any $\eps$-Nash equilibrium in $\PBSB{\vec{b}}$ is also an $\eps$-Nash equilibrium without restriction, ie., where unilateral deviations to any strategy are considered, because of the following.
Each player $i$, no matter which bounded and public memory strategy each of the other players picked, player $i$ cannot deviate to a bounded and public memory strategy ensuring more than $\epsilon$ than what player $i$ received before, by definition of $\epsilon$-Nash equilibria with respect to $\PBSB{\vec{b}}$.
Finite MDPs with mean-payoff objectives have memoryless $\eps$-optimal strategies \cite{Put1994}.
Therefore, player $i$ cannot get more than what can be ensured by a memoryless strategy which is a special case of bounded and public memory strategies.
\end{proof}

It is an open question whether $\epsilon$-Nash equilibria always exist for concurrent mean-payoff games.
We point to two special cases: stay-in-a-set games \cite{SS2002}
and
quitting games \cite{SV2001,SV2003}.
For both, it was known that finite (and as we will argue: public) memory $\eps$-Nash equilibria exist,
but so far, no algorithm was known to compute them.

\subsubsection{Stay-In-a-set games}

These are multi-player concurrent games where each player has a separate safety objective,
that is, wants the play to remain in a given subset of the states.
\citeauthor{SS2002} \cite{SS2002} showed that
stay-in-a-set games always have finite memory $\eps$-Nash equilibria, but the players might need to remember who has already lost.
Such strategies are finite memory public strategies (with the number of memories equal to $2^k$, where $k$ is the number of players). 
For a fixed number of players $k$, our technique can be used to
compute $\eps$-Nash equilibria with strategies using public memory and optimise the necessary memory bounds. That is, we can find public-memory equilibria with the least amount of memory.

\begin{corollary}
    For every $k\ge 1$ the following is in \FNPNP.
    Given a $k$-player stay-in-a-set game,
    find an $\epsilon$-Nash equilibria.
\end{corollary}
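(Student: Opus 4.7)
The plan is to combine the existence result of Secchi--Sudderth~\cite{SS2002} with the approximation algorithm from \cref{cor:concur-Nash}(2). First I would recast a stay-in-a-set game as a multi-player concurrent mean-payoff game by the standard trick: for each player $i$ with safe set $T_i\subseteq\states$, assign reward $1$ in states of $T_i$ and redirect every transition leaving $T_i$ to a fresh absorbing sink with reward $0$. Under this encoding, the mean-payoff of a play equals $1$ precisely when the play stays in $T_i$ forever and $0$ otherwise, so an $\eps$-Nash equilibrium in the resulting mean-payoff game is an $\eps$-Nash equilibrium in the original stay-in-a-set game and vice versa.

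Next I would invoke~\cite{SS2002}, which shows that for every $\eps>0$ there exists an $\eps$-Nash equilibrium in which each player's only extra information is the set of players who have already left their safe set. This is naturally implemented as a public finite-memory strategy with $2^k$ memory modes, tracking exactly which subset of players has already ``lost''. Since $k$ is fixed, the bound $\vec{b}\eqdef(2^k,\ldots,2^k)$ is a constant vector and can be written in unary in constant size. Hence for the input $\eps$ we have, in particular, an $\eps/3$-Nash equilibrium with respect to $\PBSB{\vec{b}}$, satisfying the hypothesis of \cref{cor:concur-Nash}(2) with $\eps'=\eps/3<\eps/2$.

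Applying \cref{cor:concur-Nash}(2) to the encoded mean-payoff game with bounds $\vec{b}$ and error parameter $\eps$ yields an algorithm in $\FNPNP$ that outputs an $\eps$-Nash equilibrium of the encoded game, which (by the equivalence above) is also an $\eps$-Nash equilibrium of the stay-in-a-set game. Since the encoding of safe sets into a mean-payoff game is a polynomial-time preprocessing step and the memory bound is constant for fixed $k$, the whole procedure remains in $\FNPNP$.

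The only delicate step is ensuring that the Secchi--Sudderth construction really produces a \emph{public} memory strategy in our formal sense; this is where I would need to be careful. The memory they use is the indicator of ``who is still alive'', which is a function of past history observable to every player (each player sees when an opponent leaves the safe set via the state component), so the update and the content of the memory are public in the sense of \cite[Section~5.1]{HIN2018}. Once this is spelled out, the bound $\vec{b}=(2^k,\ldots,2^k)$ is justified and the corollary follows immediately from \cref{cor:concur-Nash}(2).
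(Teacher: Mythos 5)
Your overall plan — invoke Secchi–Sudderth for the existence of a $2^k$-memory public $\eps$-NE and then feed that bound into the paper's $\FNPNP$ machinery — is the same as the paper's, but there is a genuine gap in the reduction to a mean-payoff game.

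The encoding you propose, ``redirect every transition leaving $T_i$ to a fresh absorbing sink with reward $0$,'' is not well-defined in the multi-player setting: each player $j$ has their own safe set $T_j$, and a single transition might simultaneously leave $T_1$ while remaining inside $T_2,\dots,T_k$. The dynamics of the game are shared, so you cannot redirect transitions on a per-player basis; if you redirect as soon as \emph{any} player's set is left, the game ends for everyone and the remaining players' objectives are no longer captured. What you actually need is to make the history-dependent reward (``has player $j$ ever left $T_j$?'') Markovian without terminating the game, which forces a subset construction $\states\times 2^{\{1,\dots,k\}}$ recording which players have already left their set; player $j$ then receives reward $1$ precisely when $j$ is not in the recorded set. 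The paper does exactly this.

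The second issue is that you then invoke \cref{cor:concur-Nash}(2), which is stated for \emph{perfect-information} concurrent games, but the product game cannot be left perfect-information without changing the set of admissible strategies: if the extra $2^k$ component is visible, players already ``have'' the public memory in the state, so memoryless strategies in the product game correspond to $2^k$-memory strategies in the original game, and the bound $\vec{b}=(2^k,\dots,2^k)$ you pass in then corresponds to strictly more memory in the original game. The paper side-steps this by hiding the $2^k$ component (each product state $(s,\vec{m})$ emits only the signal $s$), making the construction a \emph{partial-information} game, and then applying \cref{thm:main-2} directly with $b=2^k$ rather than going through \cref{cor:concur-Nash}(2). You would need to either mirror that partial-information construction, or justify carefully why the fully-observable product game with a matching memory bound yields equilibria in one-to-one correspondence with those of the original stay-in-a-set game; as written, the proposal does neither.
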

\begin{proof}
    One first encodes the safety conditions as a mean-payoff condition,
    so that a player receives reward $1$ as long as their set is never left,
    and $0$ forever once it is.
    Using a subset construction, this results in a game with
    states $\states\x 2^k$ (a polynomial blow-up since $k$ is fixed).
    In order to keep a one-to-one correspondence of equilibria it is necessary
    to construct a partial-information game here.
    In particular, in the original game, it is sufficient for players
    to remember who has already lost.
    Our reduction encodes this information into the states, and so it must be hidden from players (by having any state $(s,\vec{m})$ give the signal $s$) because otherwise, memoryless strategies suffice, even if they did not in the original game.
    The claim now follows by \cref{thm:main-2}, bounding the number of memory modes for each player by $b=2^k$, similar to the proof of \cref{cor:concur-Nash}, 
    because, if each player is playing a public and finite-memory strategy, one just needs $\vec{m}$ in memory (requiring $2^k$ memory states) to have all information about the state of the game. 
\end{proof}

\subsubsection{Quitting games}
Quitting games \cite{SV2001,SV2003} are concurrent mean-payoff games
in which there is only one non-absorbing state and the players each have two actions in that state: Continue or Stop.
The play goes to an absorbing state (effectively stops) in the first round when at least one player plays their stop action.

Under mild assumptions on the payoffs\footnote{
They ask that (i) Each player prefers to be the only quitter rather than having the game continue forever, and (ii) if a player quits, he prefers to be the only quitter.},
\citeauthor{SV2001} \cite{SV2001} 
show the existence of ``cyclic'' $\eps$-NE,
in which players' behaviour repeat after some number of stages.
Any such equilibrium can be represented as one consisting of public finite memory strategies. 
Specifically, we need to have a state of memory corresponding to each stage in the ``cyclic'' $\eps$-NE, from stage one up to the first repetition, where we just in each stage move to the next memory state and then loop back at the end.
While they do not provide a bound directly, they prove the existence of such `cyclic'' $\eps$-NE and their proof implies a bound: The proof of \cite[Prop. 2.3]{SV2001} uses a  partitioning of $[-R,R]^k$ (if each reward is bounded absolutely by $R$) into regions, so that any two elements in one region differ by no more than $\eps^2$ (in one-norm). For fixed $k$, this gives on the order of $\?O\left(\frac{R^k}{\epsilon^{2k}}\right)$ many regions. Each region is then mapped to some (perhaps other) region and the cyclic $\eps$-Nash equilibrium is then formed from the sequence obtained by the regions encountered by repeatedly following the mapping. The length of such a sequence can therefore not exceed the number of encountered regions, bounded as above.

Our algorithm can therefore find such an $\eps$-Nash equilibrium, using \cref{cor:concur-Nash},
assuming that both rewards and $\eps$ are encoded in unary.

\begin{corollary}\label{cor:concur-Nash-quit}
For any fixed $k\geq 1$, we get the following:
Given a  $k$-player quitting game (with rewards in unary), satisfying the conditions in \cite{SV2001}, 
and $\epsilon>0$ (in unary), we can find an $\eps$-Nash equilibrium (and output it) in \FNPNP.
\end{corollary}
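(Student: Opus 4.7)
The plan is to invoke \cref{cor:concur-Nash} with a polynomial memory bound $\vec{b}$, once we identify the size of memory required to represent a cyclic $\epsilon$-Nash equilibrium of \cite{SV2001}. The main work is showing that for fixed $k$, unary rewards, and unary $\epsilon$, this memory bound is polynomial in the input.

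First, I would apply \cite[Prop.~2.3]{SV2001} to the given quitting game with approximation parameter $\epsilon/3$ (say), obtaining a cyclic $\epsilon/3$-Nash equilibrium. The cyclic structure immediately yields a representation by public finite-memory strategies: one memory mode for each stage of the cycle, where each mode's update function deterministically advances to the next mode (and the last loops back to the first), while the action function encodes the mixed action prescribed at that stage. Thus all players share the same memory pointer into the cycle, consistent with public memory.

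Next, I would bound the length of the cycle. Following the proof of \cite[Prop.~2.3]{SV2001}, one partitions $[-R,R]^k$ into cells of diameter at most $\epsilon^2$ in the one-norm, which yields on the order of $(R/\epsilon^2)^k$ many cells. A cycle arises by iterating a map on cells, and the cycle length is bounded by the number of cells. For fixed $k$, with $R$ and $\epsilon^{-1}$ polynomial (because rewards are in unary and $\epsilon$ is given in unary), the bound $\vec{b} \eqdef (R/\epsilon^2)^k$ is polynomial in the input size. Therefore, there exists an $\epsilon/3$-Nash equilibrium wrt $\PBSB{\vec{b}}$.

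Finally, I would invoke \cref{cor:concur-Nash}(2) with this polynomial $\vec{b}$ and error parameter $\epsilon$: since $\epsilon/3 < \epsilon/2$, the hypothesis of that corollary is met, and we obtain an $\epsilon$-Nash equilibrium in $\FNPNP$. The only step where care is required is verifying that the strategies constructed in \cite{SV2001} really do fit the $\PBSB{\vec{b}}$ template (as opposed to needing private randomisation over past histories), but the cyclic description explicitly prescribes, for each stage, a mixed action per player depending only on the stage index, which is exactly a public bounded-memory strategy. No additional machinery beyond \cref{cor:concur-Nash,thm:main-2} is needed.
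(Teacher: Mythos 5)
Your proof is correct and follows essentially the same route as the paper: represent the cyclic $\eps$-NE of \cite[Prop.~2.3]{SV2001} as a profile of public bounded-memory strategies (one mode per cycle stage), extract the cycle-length bound $\mathcal{O}((R/\epsilon^2)^k)$ from the cell-partitioning in their proof, note that this is polynomial when $k$ is fixed and both $R$ and $\epsilon^{-1}$ are given in unary, and then invoke \cref{cor:concur-Nash}. The only inessential difference is that you instantiate the slack explicitly as $\epsilon/3 < \epsilon/2$, which the paper leaves implicit.
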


Similarly, given a game and a cyclic $\eps$-Nash equilibrium, one can compute
a finite memory $\eps$-Nash equilibrium that uses less memory if it exists.

\subsection{$2$-player, zero-sum concurrent mean-payoff games}
\subsubsection{Checking the existence of memoryless ($\eps$-)optimal strategies}
Recent work by 
\citeauthor{BBL2021} \cite{BBL2021,BBL2022,BBL2023} 
studies how to restrict the topology of the game graph
to ensure the existence of memoryless optimal strategies for both players.
They show that it is decidable (in exponential time) whether their
properties hold, and thus $\eps$-optimal strategies exist.
The constructions are based on effective reductions to
the first-order theory of the reals, which also yields
an exponential time algorithm to compute {\mbox{($\eps$-)}}optimal memoryless strategies if they exist.
This has been done for reachability \cite{BBL2021}
B\"uchi and coB\"uchi \cite{BBL2022},
and parity \cite{BBL2023} conditions.

We improve on these results in several ways:
Our constructions are directly applicable 
to not only zero-sum concurrent parity games (see \cref{ssec:imp-parity})
but to more general settings, such as with mean-payoff objectives or where bounded-, finite- and public-memory strategies are sought instead of just memoryless ones.
We can decide the existence of ($\epsilon$-)optimal memoryless strategies for both players
without the proxy of deciding sufficient criteria.
Finally, for $\eps>0$, our technique results in a lower complexity: \FNPNP instead of exponential time. 

\begin{corollary}
    For two-player, zero-sum concurrent mean-payoff games,
    \begin{enumerate}
        \item 
            checking if, for any given game and $\eps\ge 0$,
            there exist $\eps$-optimal memoryless strategies for both players (and output one for each player if so) can be done in exponential time.
        \item 
            checking if, for any given game and $\eps>0$,
            there exist $\eps$-optimal memoryless strategies for both players (and output one for each player if so), if there exists an $\epsilon'$-optimal memoryless strategy for each player, for some $\epsilon'\in [0,\epsilon/2)$ is in \FNPNP.
    \end{enumerate}
\end{corollary}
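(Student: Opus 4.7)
The plan is to reduce both parts of the corollary directly to \cref{thm:main-1,thm:main-2} with memory bound $\vec{b}=(1,1)$, so that $\BSB{\vec{b}}$ consists of exactly the memoryless strategy profiles for both players. The two parts correspond to decision/search problems of essentially the same shape as those theorems, with the twist that the statement concerns the \emph{joint} existence of $\eps$-optimal strategies for both players, whereas the theorems speak about one-sided $\eps$-achievement of a fixed value $v$.

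My first step is to reformulate the corollary's condition as a two-sided achievement statement at a common value. Having $\eps$-optimal memoryless strategies for both players is equivalent to the existence of some $v\in\R$ together with memoryless $\sigma$ for player~1 and $\pi$ for player~2 such that $\sigma$ $\eps$-achieves $v$ with respect to $\BSB{\vec{1}}$ and $\pi$ $\eps$-achieves $-v$ with respect to $\BSB{\vec{1}}$ in the dual game (with rewards negated). The ``only if'' direction takes $v$ to be the memoryless value; conversely, the existence of such $\sigma$ and $\pi$ sandwiches the memoryless lower and upper values within $O(\eps)$ of $v$, so that $\sigma,\pi$ are $\eps$-optimal up to absorbing constants into $\eps$.

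For part~(1), I extend the \FOR\ formula from \cref{lem:exFO-eq} (zero-sum variant) by additionally existentially quantifying over a bounded $v\in\R$ and conjoining the two single-sided achievement formulas, one per player in their respective dual game. The alternation depth is unchanged and the size grows by only a constant factor. Decidability therefore remains in $\PSPACE$ as in \cref{thm:main-1}, and since $\PSPACE\subseteq\EXPTIME$, the same algorithm can also extract and output witness strategies in exponential time, for instance via the univariate-representation procedure underpinning \cref{thm:FOR}.

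For part~(2), I adapt the \FNPNP\ algorithm of \cref{thm:main-2}. Applying \cref{lem:small-candidates} symmetrically to both players, I first argue that whenever $\eps'$-optimal memoryless strategies exist for some $\eps'<\eps/2$, there exist $u$-bit representable such strategies with only a constant-factor loss, for some polynomial $u$. The algorithm nondeterministically guesses such a pair $(\sigma,\pi)$ and a value $v$ in polynomial precision, then invokes an $\NP$ oracle once per player to check the one-sided achievement conditions via the best-response schema used in the proof of \cref{thm:main-2}, each oracle call relying on \cref{thm:approx-mc} to estimate the values in the induced Markov chains. The main technical obstacle is the precision accounting: the degradations from \cref{lem:small-candidates}, from the floating-point approximation of the Markov chain value in \cref{thm:approx-mc}, and from quantising $v$ must jointly be bounded by $\eps/2$ so that the returned strategies are certifiably $\eps$-optimal. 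This is of the same nature as the $\eps/3$-vs-$\eps$ gap already handled in the proof of \cref{thm:main-2} and is resolved by choosing $u$ polynomial in the input size and in $1/\eps$.
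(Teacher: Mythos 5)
Your route is genuinely different from the paper's: you reduce directly to \cref{thm:main-1,thm:main-2} via a two-sided ``achievement'' reformulation with a guessed threshold $v$, whereas the paper reformulates the question in terms of finding an $\eps$-Nash equilibrium and instantiates \cref{cor:concur-Nash} with $k=2$ and $b=1$. The difference matters, because your argument has a gap in the equivalence it is built on. You claim that the existence of memoryless $\sigma$ and $\pi$ that $\eps$-achieve $v$ (respectively $-v$ in the dual game) \emph{with respect to $\BSB{\vec{1}}$} implies that $\sigma$ and $\pi$ are $\eps$-optimal up to a bounded constant factor. But $\eps$-achievement wrt $\BSB{\vec{1}}$ only certifies the outcome against \emph{memoryless} opponents, whereas ``$\eps$-optimal'' here means against arbitrary strategies---the BBL setting is the ordinary one, and concurrent mean-payoff games can in general require infinite-memory strategies. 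Closing this requires the observation spelled out in the proof of \cref{cor:concur-Nash}: once a memoryless (hence public-memory) strategy is fixed in a perfect-information concurrent game, the remaining player faces a finite MDP, and finite mean-payoff MDPs admit memoryless $\eps$-optimal strategies, so the infimum against memoryless opponents coincides with the infimum against all opponents. By bypassing \cref{cor:concur-Nash} you also bypass precisely this bridging step, and without it the ``conversely'' direction of your claimed equivalence---and hence the certified optimality of the strategies you would output---does not follow. A secondary concern is the error budget: stacking the $\eps/3$ slack of \cref{thm:main-2} onto the factor-2 loss from the two-sided sandwich does not obviously fit within the corollary's $\eps' < \eps/2$ hypothesis, so ``absorbing constants into $\eps$'' would need to be replaced by an explicit accounting.
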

\begin{proof}
If you have an $\epsilon/2$-optimal strategy for each player in a zero-sum, two-player game, then the strategies form an $\epsilon$-Nash equilibrium and given an $\epsilon$-Nash equilibrium, it must be made up of two strategies that are each $\epsilon$-optimal. Because the strategies are memoryless ($b=1$), they are in particular public memory.
    The two claims thus follow by \cref{cor:concur-Nash} instantiated for $k=2$ and $b=1$. 
\end{proof}

\paragraph{Approximating zero-sum values}
We consider the problem of approximating ordinary (zero-sum) values for two-player concurrent mean-payoff games.
Recall that all strategies getting close to ensuring the value might require infinite memory \cite{HIN2018}.

Different methods (see eg.~\cite{MN1981,KMH2008,HKMT2011,O2021})
have been proposed for this.
The most well-known approach is the value iteration algorithm of
\citeauthor{MN1981} \cite{MN1981}, which has later been show to take double-exponentially many iterations to $\eps$-approximate the values, see~\cite{HKMT2011,HKM2009}. 
The current best complexity upper bound is \PSPACE:
As shown by~\cite{HKMT2011,MN1981} using a double exponentially small discount factor, the corresponding discounted game to a mean-payoff game will have the same value $\pm\epsilon$. As shown by~\cite{EY2008} a discounted game can be solved in \PSPACE\ by reduction\footnote{\citeauthor{EY2008} \cite{EY2008} show how to construct $\exists$\FOR\ formulae for games with fixed discount factors, but one can create double-exponential values using repeated squaring
in the existential fragment.}
to the existential fragment of \FOR.

Our technique yields the current best algorithm for this problem.

\begin{corollary}
    The following is in \FNPNP.
    Given a two-player, zero-sum concurrent mean-payoff game and $\epsilon>0$,
    find the value of the game within an additive error of $\epsilon$.  
\end{corollary}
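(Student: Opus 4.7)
The plan is to adapt the $\FNPNP$ machinery of \cref{thm:main-2}(2) to the \emph{unrestricted} value, exploiting the classical fact that in concurrent mean-payoff games memoryless strategies are $\eps$-optimal for both players \emph{against arbitrary opponents}.

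First, I would invoke the Mertens--Neyman discount approximation (cf.~\cite{MN1981,HKMT2011}): for a suitable $\lambda\in(0,1)$ with $(1-\lambda)$ at most double-exponentially small in $|G|$, the $\lambda$-discounted value lies within $\eps/4$ of the mean-payoff value $v^{\star}$, and the $\lambda$-discounted game admits memoryless optimal strategies for both players. These memoryless strategies therefore $\eps/4$-achieve $v^{\star}$ against arbitrary opponents, not just memoryless ones. Consequently, if $\underline v$ and $\overline v$ denote the lower and upper values of the mean-payoff game when player~1 (respectively player~2) is restricted to memoryless strategies, then $|v^{\star}-\underline v|,|v^{\star}-\overline v|\le \eps/4$. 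Moreover, fixing a memoryless strategy for one side collapses the game into an MDP for the other, whose optimal response is itself memoryless; thus the inner quantifier in the definitions of $\underline v$ and $\overline v$ may be restricted to memoryless strategies without changing values.

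The $\FNPNP$ algorithm then nondeterministically guesses a candidate value $\hat v$ on a grid of spacing $\eps/2$ in $[-C,C]$ (where $C$ bounds the absolute rewards), together with memoryless strategies $\sigma$ for player~1 and $\pi$ for player~2, all $u$-bit represented for the polynomial $u$ supplied by \cref{lem:small-candidates} applied with $b=1$. An $\NP$ oracle then verifies, using \cref{thm:approx-mc}, that (i)~no memoryless $\pi'$ drives the expected payoff of $\sigma$ below $\hat v-\eps/3$, and dually (ii)~no memoryless $\sigma'$ exceeds $\hat v+\eps/3$ against $\pi$. By the first step, valid witnesses exist for some $\hat v$ within $\eps/2$ of $v^{\star}$, and any $\hat v$ passing both checks is within $\eps$ of $v^{\star}$.

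The main obstacle will be avoiding a blow-up in the polynomial hierarchy: a naive binary search that repeatedly calls the $\FNPNP$ procedure of \cref{thm:main-2} lies strictly above $\FNPNP$. The fix above avoids this by folding the value guess and both one-sided strategy witnesses into a single nondeterministic phase followed by a single $\NP$-oracle verification. This relies crucially on the memoryless-best-response property, which allows the opponent quantifiers to be collapsed to memoryless strategies enumerable by the $\NP$ oracle, and on the polynomial floating-point representation of \cref{ssec:floats} to succinctly encode the double-exponentially small probabilities arising from the implicit discounting.
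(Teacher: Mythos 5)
There is a genuine gap: the central claim that the memoryless discount-optimal strategies ``$\eps/4$-achieve $v^{\star}$ against arbitrary opponents'' in the original mean-payoff game is false. The Big Match is a counterexample: its (undiscounted) value is $1/2$, yet \emph{every} memoryless strategy for player~1 — including the $\lambda$-discount-optimal one — can be driven to mean-payoff $0$ by a memoryless opponent (always play the ``safe'' action). Consequently the lower value $\underline{v}$ restricted to memoryless strategies for player~1 is $0$, not within $\eps/4$ of $v^{\star}$, and your verification step~(i) would reject every candidate $\hat v$ near the true value. A memoryless $\lambda$-discount-optimal strategy only guarantees the discounted value in the \emph{discounted} game; this does not transfer to a guarantee on the (undiscounted) mean-payoff, which is exactly the point of the Big Match. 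Your closing remark about ``implicit discounting'' hints at the missing ingredient but never actually implements it.

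The paper closes this gap with one extra reduction (due to Shapley) that you omit: rather than running the $\FNPNP$ machinery on the original mean-payoff game $G$, it first builds a \emph{new} mean-payoff game $G'$ in which each transition additionally goes to a fresh absorbing state (with the step's reward) with probability $d$, for a doubly-exponentially small $d$. The value of $G'$ (as a mean-payoff game) equals the $d$-discounted value of $G$, which is within $\eps$ of $v^{\star}$ by \cite{HKMT2011}, and — crucially — $G'$ genuinely has memoryless optimal strategies because it is a discounted game in disguise. It is then legitimate to invoke \cref{cor:concur-Nash} on $G'$ with $b=1$; the floating-point representation is needed precisely to write down the tiny probability $d$ in the \emph{input} game $G'$, not merely in the guessed strategies. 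Your argument should therefore be restructured to run on $G'$; applied there, the rest of your algorithmic skeleton (guess $\hat v$ and two memoryless witnesses, verify one-sided optimality via \cref{thm:approx-mc} inside an $\NP$ oracle) is sound and matches the paper in spirit.
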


\begin{proof}
As shown already in \cite{MN1981}, the value of a mean-payoff game is equal to the limit of values of the same game but with discount $d$ for $d$ going to $0$.
\citeauthor{HKMT2011} \cite{HKMT2011} show that for any discount $d$ below some double exponentially small number, the value of the  mean-payoff game differs by less than $\epsilon$ from that of the discounted game with discount factor $d$.
For a formal definition of discounted-payoff games, we refer to \cite{HKMT2011,SV2001}.

There is a well-known reduction from discounted games to (non-discounted) mean-payoff games already present in \cite{S1953}:
Given a $d$-discounted game $G$, replace each action $a$ from $s$ to $t$ with a stochastic action that goes to an absorbing state with reward $r(a)$ with probability $d$ and otherwise (with the remaining probability of $1-d$) proceeds to $t$
(it works the same way if $a$ was part of a stochastic action: If it occurred with probability $p$, then you instead go to the absorbing state with probability $pd$ and to $t$ with probability $p(1-d)$). 
This results in a mean-payoff game $G'$, with transition probabilities in the order of $d$. Importantly, the value and witnessing strategies are the same in both games.
As any concurrent discounted game has optimal memoryless strategies for both players\cite{S1953}, the same is true for $G'$.
Such strategies are in particular public memory strategies. As is well-known, in a two-player, zero-sum game, a Nash equilibrium must be made up of two optimal strategies and an optimal strategy for each player forms an Nash equilibrium.

The claim now follows from \cref{cor:concur-Nash}. %
Note that the corollary is applicable even though $d$ is double-exponentially small because our algorithm allows transition probabilities to be represented in floating point notation.
\end{proof}


\bibliographystyle{ACM-Reference-Format}
\bibliography{bib/journals,bib/conferences,bib/references.clean}

\appendix
\onecolumn  %

\section{Proofs for Section 5}
\label{app:state-space-reductions}
\label{app:MC}

\begin{definition}[Cycle values]
    For a set $C\subseteq\states$ of target states
let $\XHT{C}:\runs\to \N\cup\{\infty\}$ be the random variable
denoting the least non-zero number of steps until state in $C$ is reached. That is,
$\XHT{C}(s_0s_1\ldots) = \inf\{k\ge 1\mid s_k \in C\}$.
Further, let
$\XRS{t},\XDS{t}:\runs\to\R\cup\{\infty\}$
denote the sum of rewards, and durations respectively, accumulated up until state $t$ is first reached. That is,
$\XRS{t} = \sum_{k=1}^{\XHT{t}} \rew(s_{k-1},s_{k})$ and 
$\XDS{t} = \sum_{k=1}^{\XHT{t}} \dur(s_{k-1},s_{k})$.
\end{definition}

\begin{restatable}{lemma}{lemMPReqMCV}
    \label{lem:mc:MPR-eq-MCV}
    If $s$ is recurrent then
    $\ExpOf{}{s}{\MPR} = \ExpOf{}{s}{\XRS{s}} / \ExpOf{}{s}{\XDS{s}}$.
\end{restatable}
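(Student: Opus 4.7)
The plan is a standard renewal-theoretic cycle argument based on the strong Markov property and the Strong Law of Large Numbers. Since $s$ is recurrent and the state space is finite, let $T_0=0$ and let $T_k$ be the time of the $k$-th return to $s$. Each $T_k$ is almost surely finite and $\ExpOf{}{s}{T_1}=\ExpOf{}{s}{\XHT{s}}<\infty$. Define the per-cycle reward and duration
\begin{align*}
R_k &\eqdef \sum_{i=T_{k-1}}^{T_k-1} \rew(s_i,s_{i+1}),\\
D_k &\eqdef \sum_{i=T_{k-1}}^{T_k-1} \dur(s_i,s_{i+1}).
\end{align*}
By the strong Markov property applied at each $T_k$, the pairs $(R_k,D_k)_{k\ge 1}$ are i.i.d.\ with means $\ExpOf{}{s}{\XRS{s}}$ and $\ExpOf{}{s}{\XDS{s}}$. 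Both means are finite because the cycle length $T_1$ is integrable and per-edge rewards and durations are bounded constants, and $\ExpOf{}{s}{\XDS{s}}>0$ since all edge durations are strictly positive.

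I would then apply the Strong Law of Large Numbers twice to obtain $\tfrac{1}{n}\sum_{k\le n} R_k\to\ExpOf{}{s}{\XRS{s}}$ and $\tfrac{1}{n}\sum_{k\le n} D_k\to\ExpOf{}{s}{\XDS{s}}$ almost surely. Consequently, along the subsequence of cycle-boundary times $N=T_n$, the ratio
\[
\frac{\sum_{i<N}\rew(s_i,s_{i+1})}{\sum_{i<N}\dur(s_i,s_{i+1})}
\]
converges almost surely to $\ExpOf{}{s}{\XRS{s}}/\ExpOf{}{s}{\XDS{s}}$.

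To upgrade this to convergence at arbitrary $N$, let $k(N)\eqdef\max\{k:T_k\le N\}$. The partial sums $U_N\eqdef\sum_{i<N}\rew(s_i,s_{i+1})$ and $V_N\eqdef\sum_{i<N}\dur(s_i,s_{i+1})$ differ from their cycle-boundary values $A_{k(N)}\eqdef\sum_{j\le k(N)}R_j$ and $B_{k(N)}\eqdef\sum_{j\le k(N)}D_j$ by the contribution of the uncompleted cycle $k(N)+1$, which is bounded in absolute value by $|R_{k(N)+1}|$ and $D_{k(N)+1}$ respectively. Because $(R_k,D_k)$ are i.i.d.\ and integrable, a standard consequence of the SLLN gives $R_k/k\to 0$ and $D_k/k\to 0$ almost surely, while $B_{k(N)}/k(N)\to\ExpOf{}{s}{\XDS{s}}>0$. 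Hence $U_N/V_N$ converges a.s.\ to the same limit, and so the $\liminf$ in the definition of $\MPR$ is in fact an honest limit that equals $\ExpOf{}{s}{\XRS{s}}/\ExpOf{}{s}{\XDS{s}}$ almost surely. Since $\MPR$ is bounded in absolute value by the ratio of the largest edge reward to the smallest duration, dominated convergence yields $\ExpOf{}{s}{\MPR}=\ExpOf{}{s}{\XRS{s}}/\ExpOf{}{s}{\XDS{s}}$.

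The main obstacle I anticipate is the interpolation step between successive cycle boundaries; in particular, one must ensure the single-cycle residual term is negligible relative to $B_{k(N)}$, which reduces to the standard observation that i.i.d.\ integrable increments satisfy $R_k/k\to 0$ almost surely. The rest is routine renewal-theoretic bookkeeping.
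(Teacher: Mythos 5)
Your proof is correct and follows essentially the same renewal-theoretic route as the paper: both arguments decompose the play into i.i.d.\ excursions from $s$ to $s$ (trials / cycles), invoke the law of large numbers for the per-cycle rewards and durations, and then use boundedness of $\MPR$ to pass from almost-sure convergence to an equality in expectation. The main presentational difference is that the paper partitions plays into $\epsilon$-good and $\epsilon$-bad events and lets $\epsilon\to 0$, whereas you apply the SLLN directly and finish with dominated convergence; you are also more explicit about the interpolation between cycle boundaries (bounding the residual of the uncompleted cycle and using $R_k/k\to 0$, $D_k/k\to 0$), a step the paper glosses over. Both proofs implicitly use that all durations are strictly positive so that $\ExpOf{}{s}{\XDS{s}}>0$ and $\MPR$ is bounded, which is consistent with the paper's setting.
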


\begin{proof}
Let $s$ belong to some CCS $S$ with $n$ states.
 Let $p$ be the smallest non-zero probability associated to any edge between vertices in $S$. Let $R$ (resp. $r$) be the most positive (resp.\ most negative) reward per step and $D$ (resp. $d$) be the largest (resp.\ smallest) duration of a step.
 
 By definition of CCS, there is always a path between any two states in $S$ and thus in particular, a simple one (i.e. with no state occurring twice). This means that there is a simple path consisting of at most $n-1$ edges between the states. Each edge has probability of at least $p$ of occurring, so the path has a probability of at least $p^{n-1}$ to be followed. 
 From each $s'$ consider a shortest (in number of steps) directed path $P_{s'}$ from $s'$ to $s$. We start an {\em attempt} whenever we visit $s$ or just after we have moved along an edge not in $P_{s'}$, where $s'$ was the last state in which we started an attempt. In particular, we start a new attempt every at most $n$ steps. Each attempt reaches $s$ with probability at least $p^{n-1}$. Therefore, in expectation, we need at most $p^{1-n}$ many attempts. Each attempt took at most $n$ steps.
 
This means that almost all plays visits $s$ infinitely many times. Let $s_i$ be the random variable denoting the step in which the play visits $s$ the $i$th time. We view the steps inbetween $s_i$ and $s_{i+1}$ as a {\em trial} and we thus have infinitely many trials.
 
Consider any $\epsilon>0$ .
 We partition plays into two sets, $\Good$ and $\Bad$.  
 A play is in $\Good$, if the average reward of the trials is within $(1\pm \epsilon)$ of $\ExpOf{}{s}{\XRS{s}}$ (unless $\ExpOf{}{s}{\XRS{s}}=0$, in which case, we have that the average reward of the trials is in $[-\epsilon,\epsilon]$ for a play to be in $\Good$) and the average duration of the trials is within $(1\pm \epsilon)$ of $\ExpOf{}{s}{\XDS{s}}$. The remaining plays are in $\Bad$.

By linearity of expectation, we have that \[
\ExpOf{}{s}{\MPR}=\ExpOf{}{s}{\MPR\mid \Bad}\ProbOf{}{s}{\Bad}+\ExpOf{}{s}{\MPR\mid \Good}\ProbOf{}{s}{\Good} .
\]
For any play, $\MPR$ is in $[\min(r/d,r/D),\max(R/d,R/D)]$, because that is the bound on the stepwise fraction. In particular, it is true for $\ExpOf{}{s}{\MPR\mid \Bad}$.  By law of large numbers, $\ProbOf{}{s}{\Bad}<\epsilon$. 

 By definition of $\Good$ we see as follows. If $\ExpOf{}{s}{\XRS{s}}>0$, we have that \[
\ExpOf{}{s}{\MPR\mid \Good}\in\left[\frac{(1-\epsilon)\ExpOf{}{s}{\XRS{s}}}{(1+\epsilon)\ExpOf{}{s}{\XDS{s}}},\frac{(1+\epsilon)\ExpOf{}{s}{\XRS{s}}}{(1-\epsilon)\ExpOf{}{s}{\XDS{s}}}\right]
\]
 Conversely, if $\ExpOf{}{s}{\XRS{s}}<0$, we have that \[
\ExpOf{}{s}{\MPR\mid \Good}\in\left[\frac{(1+\epsilon)\ExpOf{}{s}{\XRS{s}}}{(1-\epsilon)\ExpOf{}{s}{\XDS{s}}},\frac{(1-\epsilon)\ExpOf{}{s}{\XRS{s}}}{(1+\epsilon)\ExpOf{}{s}{\XDS{s}}}\right]
\]
 Finally, if $\ExpOf{}{s}{\XRS{s}}=0$, we have that \[
\ExpOf{}{s}{\MPR\mid \Good}\in\left[-\frac{\epsilon}{(1-\epsilon)\ExpOf{}{s}{\XDS{s}}},\frac{\epsilon}{(1-\epsilon)\ExpOf{}{s}{\XDS{s}}}\right]
\]

In any case, since this is true for any $\epsilon>0$, we have that 
$\ExpOf{}{s}{\MPR}=\frac{\ExpOf{}{s}{\XRS{s}}}{\ExpOf{}{s}{\XDS{s}}}$, as wanted. 
\end{proof}

\begin{definition}\label{def:summary}
    Let $M=(\states,\probm,\rew,\dur)$
    and $M'=(\states',\probm',\rew',\dur')$ be two Markov chains with associated rewards and durations,
    where $\states,\states' \subseteq \N_{\le n}$.
    We call $M'$ a \emph{summary} $M$ if 
    \begin{enumerate}
        \item %
            There is a one-to-one correspondence between CCSs
            $C_0,C_1,\ldots$ in $M$ and
            and $C'_0,C'_1,\ldots$ in $M'$ that agrees on states in $\states\cap\states'$.
            That is, for any pair $C_j$ and $C'_j$ of corresponding CCSs, it holds that
            $C_j\cap\states'=C'_j\cap \states$.
        \item
            The probability of reaching
            any a CCS $C_j$ in $M$ is the same 
            as reaching the corresponding $C'_j$ in $M'$.

            That is,
            $\ProbOf{M}{s}{\Reach{C_j}} = \ProbOf{M'}{s}{\Reach{C'_j}}$
            for any $s\in \states\cap \states'$.
        \item
            The expected cumulative rewards and durations on cycles from and to recurrent states are preserved.
            
            That is, for all $s$ recurrent,
            $
            \ExpOf{M}{s}{\XRS{s}}
            =
            \ExpOf{M'}{s}{\XRS{s}}
            $
            and
            $
            \ExpOf{M}{s}{\XDS{s}}
            =
            \ExpOf{M'}{s}{\XDS{s}}.
            $
    \end{enumerate}
\end{definition}

The following lemma shows that summaries preserve the expected \mpr values.

\begin{lemma}\label{lem:mc:summary-sufficient}
    If $M'$ is a summary of $M$ and $s\in\states\cap\states'$.
    Then
    $\ExpOf{M}{s}{\MPR} = \ExpOf{M'}{s}{\MPR}$.
\end{lemma}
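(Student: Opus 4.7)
The plan is to decompose the expected mean-payoff reward according to which CCS the play eventually enters, and then use the summary conditions to show the decompositions agree in $M$ and $M'$.

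First, I would observe that in any finite Markov chain a play starting from state $s$ almost surely enters (and remains within) some CCS. Let $E_j$ be the event that $C_j$ is the entered CCS in $M$, so that $\{E_j\}_j$ partitions almost all plays from $s$. The law of total expectation then gives
\[
    \ExpOf{M}{s}{\MPR} \;=\; \sum_j \ProbOf{M}{s}{E_j} \cdot \ExpOf{M}{s}{\MPR \mid E_j}.
\]

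Next, I would argue that the conditional expectation $\ExpOf{M}{s}{\MPR \mid E_j}$ depends only on the CCS $C_j$, not on $s$ or the particular trajectory before hitting $C_j$. The finite prefix before entering $C_j$ contributes only $O(1/N)$ to the reward-to-duration ratio and so does not influence the $\liminf$; hence conditionally on $E_j$, the $\MPR$ equals (a.s.) the $\MPR$ of the continuation inside $C_j$. Reusing the ``Good''/``Bad'' decomposition from the proof of \cref{lem:mc:MPR-eq-MCV} (strong law of large numbers over return trials to a fixed recurrent state), $\MPR$ is almost surely the constant
\[
    v_j \;\eqdef\; \frac{\ExpOf{M}{t}{\XRS{t}}}{\ExpOf{M}{t}{\XDS{t}}}
\]
for any recurrent $t \in C_j$, and this constant does not depend on the choice of $t$ within $C_j$.

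The identical decomposition applies in $M'$, yielding constants $v'_j$ and probabilities $\ProbOf{M'}{s}{E'_j}$, where $E'_j$ is the event of entering $C'_j$. Summary condition~2 matches the reaching probabilities, so it remains to show $v_j = v'_j$. For this I would pick a state $t_j \in C_j \cap C'_j$, which lies in $\states \cap \states'$ by condition~1 (since $C_j \cap \states' = C'_j \cap \states$); condition~3 then gives $\ExpOf{M}{t_j}{\XRS{t_j}} = \ExpOf{M'}{t_j}{\XRS{t_j}}$ and similarly for durations, so $v_j = v'_j$. Summing, the two decompositions coincide and the lemma follows.

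The main subtlety I anticipate is the justification that $\MPR$ is almost surely constant inside a CCS with value $v_j$: this is essentially the ergodic theorem for finite irreducible Markov chains, but I would prefer to reuse the quantitative Good/Bad argument already developed for \cref{lem:mc:MPR-eq-MCV}, taking $\varepsilon \to 0$ there to upgrade the bound on the expectation into almost-sure equality. A secondary subtlety is the non-emptiness of $C_j \cap C'_j$; this is immediate in the intended applications to iterative state and loop elimination (which never erase an entire CCS, since eliminating the last remaining state of a CCS would require a self-loop, forbidden by the preconditions of \cref{def:state-elim,def:loop-elim}), but in full generality one may need to either include non-emptiness in the summary definition or treat collapsed CCSs as single absorbing states whose ratio $\rew'(u,u)/\dur'(u,u)$ directly encodes $v_j$.
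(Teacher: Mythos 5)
Your proof takes essentially the same route as the paper: decompose the expected $\MPR$ by which CCS the play is absorbed into, use prefix-independence plus the Markov property to reduce to the $\MPR$ from a recurrent state, invoke \cref{lem:mc:MPR-eq-MCV} to write that as a reward/duration ratio, and then match up the three ingredients using the three summary conditions (the paper conditions on reaching a representative state $b\in B$ rather than on the CCS event $E_j$, but these are interchangeable). Your closing observation about the possible emptiness of $C_j\cap\states'$ is a real subtlety that the paper glosses over when it implicitly assumes the representatives in $B$ can be chosen inside $\states\cap\states'$; you are right that this holds in the intended applications because state and loop elimination never erase an entire CCS.
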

\begin{proof}
Let $B\subseteq\states$ be a maximal set of recurrent states that do not communicate pairwise.
That is, $B$ consists of representative states, one for each CCS of $M$.
Then
\begin{equation}\label{eq:mc:CCS-reach}
    \ExpOf{M}{s}{\MPR} = \sum_{b\in B} \ProbOf{M}{s}{\Reach{{b}}} \cdot \ExpOf{M}{b}{\MPR}.
\end{equation}
This holds because in finite Markov chains the absorption probability is $1$, i.e., almost-surely eventually a state in $B$ will be reached, and that the $\MPR$ objective is prefix-independent.
Now from the assumption that $M'$ is a summary of $M$
we get the following.
Notice that a state $b\in B$ represents a CCS both in $M$ and $M'$ by point 1 of \cref{def:summary}.
\begin{align*}
    \ExpOf{M}{s}{\MPR} 
    &= \sum_{b\in B} \ProbOf{M}{s}{\Reach{{b}}}\cdot \ExpOf{M}{b}{\MPR}  &\text{\cref{eq:mc:CCS-reach}}\\
    &= \sum_{b\in B} \ProbOf{M}{s}{\Reach{{b}}}\cdot \frac{\ExpOf{M}{b}{\XRS{b}}}{\ExpOf{M}{b}{\XRS{b}}} &\text{\cref{lem:mc:MPR-eq-MCV}}\\
    &= \sum_{b\in B} \ProbOf{M'}{s}{\Reach{{b}}}\cdot \frac{\ExpOf{M'}{b}{\XRS{b}}}{\ExpOf{M'}{b}{\XRS{b}}} &\text{\cref{def:summary}}\\
    &= \sum_{b\in B} \ProbOf{M'}{s}{\Reach{{b}}}\cdot \ExpOf{M}{b}{\MPR}
    = \ExpOf{M'}{s}{\MPR} &\text{\cref{eq:mc:CCS-reach}.} &\qedhere
\end{align*}
\end{proof}

It remains to show that the operations of eliminating loops and states create summaries.

\begin{lemma}[Edge Collapse]
	\label{lem:edge-collapse}
	Let $M$ be a Markov chain, $i,j \in\states$ two states with $\probm(i,j)=0$
    and $S\subseteq \states$ be a set of intermediate states $x$ all satisfying 
    $\probm(x,j)=1$ and 
    $\probm(k,x)=0$ for all $k\neq i$.
    Let $M'$ be the Markov chain 
    obtained by simultaneously replacing all length-2-paths from $i$ to $j$ via $S$ by just one direct edge
    with the same expected reward and duration.
    That is, in $M'$ we have that
	\begin{itemize}
        \item $\probm'(i,j)= \sum_{x\in S} \probm(i,x)$
        \item $\rew'(i,j)= \sum_{x\in S} \probm(i,x) (\rew(x,i)+\rew(x,j))$
        \item $\dur'(i,j)= \sum_{x\in S} \probm(i,x) (\dur(x,i)+\dur(x,j))$
	\end{itemize}
Then $M'$ is a summary of $M$.
\end{lemma}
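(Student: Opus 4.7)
The plan is to verify the three conditions of \cref{def:summary} for the chain $M'$. First note that every $x\in S$ has in $M$ a unique in-neighbour ($i$) and a unique out-neighbour ($j$), so such an $x$ is recurrent iff there is a path from $j$ back to $i$ in $M$; in that case the CCS containing $x$ also contains $i$ and $j$. Consequently each CCS $C$ of $M$ either is disjoint from $S$ (and is unchanged) or contains all of $\{i,j\}$ together with some $x\in S$ and intersects $S$. In the latter case $\probm'(i,j)=\sum_{x\in S}\probm(i,x)>0$ so $i,j$ still communicate via the new direct edge in $M'$, and it is routine to check that $C\setminus S$ is a CCS of $M'$. This gives the required bijection on CCSs that agrees on $\states\cap\states'=\states\setminus S$, establishing condition~1.

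For conditions~2 and~3 the cleanest device is an explicit coupling between runs. Given a run $\rho=s_0s_1s_2\ldots$ of $M$, define its projection $\pi(\rho)$ by deleting every occurrence of a state in $S$; since states in $S$ can only appear in blocks of the form $i,x,j$, the result is a sequence of states in $\states\setminus S$ whose consecutive transitions have exactly the probabilities prescribed by $\probm'$. Hence $\pi$ pushes forward $\Prob{M}{s}$ to $\Prob{M'}{s}$ for any $s\in\states'$. Condition~2 is then immediate: the probability of reaching a CCS $C$ of $M$ from $s$ equals the probability of reaching $C\cap\states'$ of $M'$ from $s$.

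For condition~3 I would use the same coupling. For a recurrent $s\in\states'$, the random variables $\XRS{s}$ and $\XDS{s}$ computed along $\rho$ agree with those computed along $\pi(\rho)$, because each deleted transition pair $i\!\to\! x\!\to\! j$ in $\rho$ is replaced by a single transition $i\!\to\! j$ in $\pi(\rho)$ whose reward and duration are, by the very definition of $\rew'(i,j),\dur'(i,j)$ together with $\probm'(i,j)$, matched in conditional expectation:
\[
\ExpOf{M}{}{\rew(i,x_k)+\rew(x_k,j)\mid i\!\to\! x_k\!\to\! j}
=\frac{\rew'(i,j)}{\probm'(i,j)},
\]
and likewise for $\dur'$ (and I am reading the stated $\rew(x,i)$ as the typo $\rew(i,x)$, required for the length-$2$ path $i\!\to\!x\!\to\!j$ to make sense). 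Taking expectations and using the tower property of conditional expectation yields $\ExpOf{M}{s}{\XRS{s}}=\ExpOf{M'}{s}{\XRS{s}}$ and the analogue for $\XDS{s}$.

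The main obstacle, and the only place where care is needed, is to argue that the coupling is measure-preserving when the collapsed edge occurs inside a CCS (so the excursion from $s$ back to $s$ may traverse $i\!\to\!S\!\to\!j$ arbitrarily many times), and to make sure that the equalities of expectations above really decompose edge-by-edge so that conditioning on hitting $s$ for the first time does not bias which intermediate $x\in S$ was chosen. Both are handled by the strong Markov property: conditional on visiting $i$, the identity of the next two steps is independent of the past and of the future after $j$, so the local identity $\rew'(i,j)=\sum_{x\in S}\probm(i,x)(\rew(i,x)+\rew(x,j))$ lifts to the cycle-level identity required in condition~3.
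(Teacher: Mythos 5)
Your proof is correct in substance, but it takes a genuinely different route from the paper's. For condition (2) the paper argues via the least-fixed-point characterisation of reachability probabilities, showing the defining system of equations for $M$ rewrites term-by-term into the one for $M'$; you instead construct an explicit measure-preserving projection $\pi$ that deletes the $S$-blocks and argue push-forward of $\Prob{M}{s}$ to $\Prob{M'}{s}$. For condition (3) the paper does an explicit conditional-expectation computation: it first establishes $\ExpOf{M}{i}{\XRS{i}}=\ExpOf{M'}{i}{\XRS{i}}$ by splitting on whether the first step enters $S$, and then extends to an arbitrary recurrent $s$ by decomposing over the random number of visits $\XNV{s}{i}$ to $i$ on the excursion; you instead reuse the coupling $\pi$ and match the contributions of the collapsed $i\!\to\!x\!\to\!j$ blocks in conditional expectation. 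Your approach is conceptually cleaner and unifies (2) and (3) under one coupling, whereas the paper's is more explicit and makes the bookkeeping for the stopping time $\XHT{s}$ visible. Two things are worth flagging in your write-up: (i) the step where you pass from the local identity $\ExpOf{M}{}{\rew(i,x)+\rew(x,j)\mid i\!\to\!x\!\to\!j}=\rew'(i,j)/\probm'(i,j)$ to the cycle-level identity $\ExpOf{M}{s}{\XRS{s}}=\ExpOf{M'}{s}{\XRS{s}}$ involves summing a random number of increments up to the stopping time $\XHT{s}$; invoking only "the tower property'' is a bit thin here, and a Wald-type/optional-stopping argument (or the paper's explicit decomposition by the number of visits to $i$) is what actually closes the gap; (ii) you correctly note the $\rew(x,i)$ typo for $\rew(i,x)$, and the discussion at the end — that the strong Markov property ensures conditioning on the first-return event does not bias the choice of $x\in S$, because the past before $i$ and the future after $j$ are conditionally independent of $x$ — is exactly the right observation and also underlies the paper's decomposition; you could make it a self-contained lemma rather than a remark.
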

\begin{proof}
    For condition (1) in \cref{def:summary}, notice that any two states $s,t\in\states$ communicate in $M$ iff they do in $M'$.

For point (2), recall that the probabilities $R:\states\to\R$ of reaching a set $C\subseteq \states\setminus S$ of states are the least fixed-point satisfying
\[
    R(s) = 
\begin{cases} 1 &\mbox{if } s\in C\\
    \inf\{ \sum_{k\in\states} \probm(s,k) R(k)\} & \mbox{otherwise} \\
\end{cases}
\]
By definition of $M'$ the sum in the second case can be rewritten as
\begin{align*}
    \sum_{k\in\states} \probm(i,k) R(k)
    &= \sum_{\substack{s\neq i\\k\in\states}} \probm(s,k) R(k)
       + \sum_{\substack{s=i\\ k\in\states\setminus S}} \probm(i,k) R(k) 
       + \sum_{\substack{s=i\\ k\in S}} \probm(i,k) R(k)\\
    &= \sum_{\substack{s\neq i\\k\in\states}} \probm'(s,k) R(k)
       + \sum_{\substack{s=i\\ k\in\states\setminus S}} \probm'(i,k) R(k) 
       + \probm'(i,j) R(j)
     = \sum_{k\in S} \probm'(s,k) R(k)
\end{align*}
We conclude that a least solution of the system of equations is the same for $M$ and $M'$, which implies the claim.

For (3), we show the claim for rewards only; the proof that expected total durations on cycles is analogous.
Pick a recurrent state $s$ and consider the expectation 
Clearly, if state $i$ is not part of the same CCS as $s$ then $\ExpOf{M}{s}{\XRS{s}} = \ExpOf{M'}{s}{\XRS{s}}$.
So suppose now that $s,i$ and $j$ are part of the same CCS.

Note that by assumption on $M$, for every state $k\neq i$ in the same CCS it holds that
$\ExpOf{M}{k}{\XHT{i}} < \ExpOf{M}{k}{\XHT{S}}$
and therefore 
\begin{equation}
\ExpOf{M}{k}{\XRS{i}} = \ExpOf{M'}{k}{\XRS{i}}
\end{equation}

We can split $\ExpOf{M}{i}{\XRS{i}}$ into two parts, according to the disjoint cases or not a state of $S$ is visited in the first step.

The first case is %
\begin{equation}
\label{eq:edge-collapse-2}
   \ExpOf{M}{i}{\XRS{i}\mid \XHT{S}=1}
   = 
   \sum_{k\in S} \probm(i,k)\rew(i,k) + \ExpOf{M}{k}{\XRS{i}}
   =
   \probm'(i,j) +  \ExpOf{M}{j}{\XRS{i}}
   = \ExpOf{M'}{i}{\XRS{i}\mid \XHT{j} = 1}
\end{equation}
The second case is %
\begin{equation}
\label{eq:edge-collapse-3}
   \ExpOf{M}{i}{\XRS{i}\mid \XHT{S}>1}
   = 
   \sum_{k\notin S} \probm(i,k)\rew(i,k) + \ExpOf{M}{k}{\XRS{i}}
   =
   \sum_{k\notin S} \probm'(i,k)\rew(i,k) + \ExpOf{M'}{k}{\XRS{i}}
   = \ExpOf{M'}{i}{\XRS{i}\mid \XHT{j} > 1}
\end{equation}
Using that the two events are disjoint we get 
\begin{equation}
    \label{eq:edge-collapse-1}
\begin{aligned}
    \ExpOf{M}{i}{\XRS{i}}
   &= 
   \ExpOf{M}{i}{\XRS{i}\mid \XHT{S}=1}
    + \ExpOf{M}{i}{\XRS{i}\mid \XHT{S}>1}\\
   &=
   \ExpOf{M'}{i}{\XRS{i}\mid \XHT{j} =1}
    +\ExpOf{M'}{i}{\XRS{i}\mid \XHT{j} > 1}
    =\ExpOf{M'}{i}{\XRS{i}}
\end{aligned}
\end{equation}
For arbitrary $s$ we can write

\begin{equation}
    \label{eq:edge-collapse-4}
\ExpOf{M}{s}{\XRS{s}}
=
\sum_{n\ge 0}\ExpOf{M}{s}{\XRS{i} \mid \XNV{t}{i}=n}
\end{equation}
where $\XNV{t}{i}$ denotes the number of times state $i$ is visited
between time $1$ and $\XHT{t}$.
Clearly the expected reward sum is the same in $M$ and $M'$ on paths that do not visit state $i$.That is,
$\ExpOf{M}{s}{\XRS{t} \mid \XNV{t}{i}=0} = \ExpOf{M'}{s}{\XRS{t} \mid \XNV{t}{i}=0}$.
Now each summand on the RHS of \cref{eq:edge-collapse-4}
can be written as
\begin{align*}
    \ExpOf{M}{s}{\XRS{i} \mid \XNV{t}{i}=n}
&=
    \ExpOf{M}{s}{\XRS{i} \mid \XNV{t}{i}=0}
    +\ExpOf{M}{i}{\XRS{i} \mid \XNV{t}{i}=n}
    +\ExpOf{M}{i}{\XRS{t} \mid \XNV{t}{i}=0}\\
&=
    \ExpOf{M}{s}{\XRS{i} \mid \XNV{t}{i}=0}
    +n\cdot \ExpOf{M}{i}{\XRS{i}}
    +\ExpOf{M}{i}{\XRS{t} \mid \XNV{t}{i}=0}\\
&=
    \ExpOf{M'}{s}{\XRS{i} \mid \XNV{t}{i}=0}
    +n\cdot \ExpOf{M'}{i}{\XRS{i}}
    +\ExpOf{M'}{i}{\XRS{t} \mid \XNV{t}{i}=0}\\
&=
    \ExpOf{M'}{s}{\XRS{t} \mid \XNV{t}{i}=n}
\end{align*}
where the second  equality uses the Markov property and the third one
is due to \cref{eq:edge-collapse-1,eq:edge-collapse-4}.
This concludes the proof that $M'$ is a summary of $M$.
\end{proof}

\begin{figure*}[t]
  \centering
  \begin{tikzpicture}[
            node distance=3cm and 1cm,
            on grid=true
    ]
	\node[state] (4) at (0,0) {$s$};
	\node[state,below left=1.5cm and 1cm of 4] (2) {$n$};
	\node[state, below=of 4] (3) {$t$};
	
        \path (4) edge
            node[lbl,swap,pos=0.2] {$\probm(n,s)$} 
            node[lbl,swap,pos=0.5] {$\rew(n,s)$} 
            node[lbl,swap,pos=0.8] {$\dur(n,s)$}
        (2);

        \path (4) edge 
            node[lbl,pos=0.35] {$\probm(n,t)$}
            node[lbl,pos=0.5] {$\rew(n,t)$}
            node[lbl,pos=0.65] {$\dur(n,t)$}
        (3);
        
        \path (2) edge 
            node[lbl,swap,pos=0.2] {$\probm(n,t)$}
            node[lbl,swap,pos=0.5] {$\rew(n,t)$}
            node[lbl,swap,pos=0.8] {$\dur(n,t)$}
        (3);

	\node[state] (i'') at (6,0) {$s$};

    \node[draw,below left=1.5cm and 1cm of i''] (inj'') {$[s,n,t]$};
    \node[draw,below right=1.5cm and 1cm of i''] (ij'') {$[s,t]$};
	\node[state, below=of i''] (j'') {$t$};
	
        \path (i'') edge[bend right=20]
            node[lbl,swap,pos=0.2] {$\probm(s,n)\probm(n,t)$}
            node[lbl,swap,pos=0.5] {$\rew(s,n)+\rew(n,t)$}
            node[lbl,swap,pos=0.8] {$\dur(s,n)+\dur(n,t)$}
        (inj'');

        \path (inj'') edge[bend right=20]
            node[lbl,swap,pos=0.2] {$1$}
            node[lbl,swap,pos=0.5] {$0$}
            node[lbl,swap,pos=0.8] {$0$}
        (j'');
        \path (i'') edge[bend left=20]
            node[lbl,pos=0.2] {$\probm(n,t)$}
            node[lbl,pos=0.5] {$\rew  (n,t)$}
            node[lbl,pos=0.8] {$\dur  (n,t)$}
        (ij'');
        \path (ij'') edge[bend left=20]
            node[lbl,pos=0.2] {$1$}
            node[lbl,pos=0.5] {$0$}
            node[lbl,pos=0.8] {$0$}
        (j'');

	\node[state] (4') at (12,0) {$s$};
        \node[state,draw=none,below left=of 4'] (2') {};
	\node[state, below=of 4'] (3') {$t$};
	
        \path (4') edge 
            node[lbl,pos=0.35] {$\probm'(n,t)$}
            node[lbl,pos=0.5] {$\rew'  (n,t)$}
            node[lbl,pos=0.65] {$\dur'  (n,t)$}
        (3');
\end{tikzpicture}
  \caption{State elimination:
  On the left is (part of a) Markov chain $M$ before removing state $n$ and on the right is the corresponding part of Markov chain $M'$. In the middle is the intermediate step $M''$ as constructed in the proof of \cref{lem:state-elim}. The probability, and expected duration and reward, of moving from state $i$ to $j$ remains untouched.}
  \label{fig:state_elim_full}
\end{figure*}
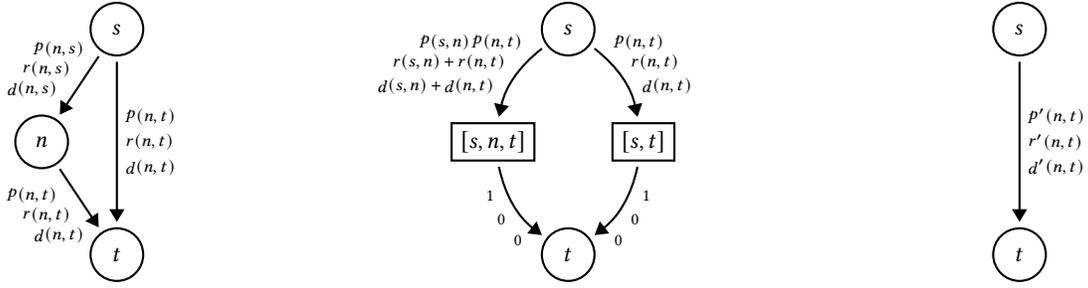
\begin{proof}[Proof of \cref{lem:state-elim}]
    To show that eliminating states preserves the \mpr values we split the transformation from $M$ to $M'$ in two steps via an intermediate Markov chain $M''$ and argue that $M'$ is a summary of $M''$ which in turn is a summary of $M$. 
    The claim then follows from \cref{lem:mc:summary-sufficient}.

    We define $M''$ from $M$ by introducing new intermediate states $[s,n,t]$ and $[s,t]$ between any two states $s,t\in\states$ of $M$, replacing length-two paths from $s$ to $t$ via $n$
    by paths via $[s,n,t]$, and length-one paths from $s$ to $t$ by paths via $[s,t]$.
    See \cref{fig:state_elim_full} for an illustration.
    Formally,
    \begin{align*}
    \probm''(s,n) &= 0;
    \quad
    \probm''(s,[s,n,t]) = \probm(s,n)\probm(n,t); 
    \quad
    \probm''([s,n,t],t) = 1; \text{ and }
    \quad
    \probm''(k,[s,n,t]) = 0 \text{ for all $k\neq s$}.
\end{align*}
The rewards and durations incurred on those steps are
\begin{align*}
    \rew''(s,[s,n,t]) &= \rew(s,t)+\rew(n,t);\\
    \dur''(s,[s,n,t]) &= \dur(s,t)+\dur(n,t); \quad \text{ and }\\
    \rew''([s,n,t],t) &= \dur''([s,n,t],t) = 0.
\end{align*}
Any path from $s$ to $t$ via the new state $[s,t]$ in $M'$
directly corresponds to a length-one path $s\to t$ in $M'$. 
\begin{align*}
\probm''(s,[s,t]) = \probm(s,t);
\text{ and }
\quad
\probm''([s,t],t) = 1
\end{align*}
The rewards and durations incurred on those steps are
\begin{align*}
\rew''(s,[s,t]) = \rew(s,t);
\quad
\dur''(s,[s,t]) = \dur(s,t)
\text{ and }
\quad
\rew''([s,t],t) = \dur''([s,t],t) = 0.
\end{align*}

Notice that for any fixed pair $i,j\neq n$ of states,
$S=\{[i,j], [i,n,j]\}$ satisfies the assumption of \cref{lem:edge-collapse}
and $M'$ is result of collapsing them accordingly. So by multiple applications of \cref{lem:edge-collapse}
we observe that $M'$ is a summary of $M''$.
It remains to show that $M''$ is a summary of $M$.

For (1), just notice that going from $M$ to $M''$ does not change which states in $s,t\in\states$
communicate: they do in $M$ iff they do in $M''$. The newly introduced states $[s,t]$ and $[s,n,t]$
are in a CCS $C$ iff $s,n\in C$.

For (2), first recall that reaching a set $C\subseteq \states$ is an open objective
and thus can be expressed as the countable sum
\begin{equation}\label{eq:se:reachprob}
    \ProbOf{M}{s}{\Reach{C}} = \sum_{\substack{\pi=s_0s_1\ldots s_k\in \states^*\\ s_0=s\land s_k\in C}}\probm(\pi)
\end{equation}
where $\probm(\pi) = \prod_{i=1}^{k} \probm(s_{i-1},s_i)$
is the product of probabilities along $\pi$.
We observe that there is an obvious isomorphism $f:\states^*\to \states(\states'')^*\states$
between finite paths in $M$ that lead from $s$ to $C$, and those
in $M''$ that lead from $s$ to $C$, which preserves probability mass: simply let $f(inj) = i[i,n,j]j$
and $f(ij) = i[ij]j$. Then $\probm(\pi) = \probm''(f(\pi))$.
We can decompose $\ProbOf{M''}{s}{\Reach{C}}$ as in \cref{eq:se:reachprob}
but since neither source nor target consists of states in $\states''\setminus \states$ we have that
\[
    \ProbOf{M}{s}{\Reach{C}}
    = \sum_{\substack{\pi=s_0s_1\ldots s_k\in \states^*\\ s_0=s\land s_k\in C}}\probm(\pi)
    = \sum_{\substack{\pi=s_0s_1\ldots s_k\in \states^*\\ s_0=s\land s_k\in C}}\probm''(f(\pi))
    = \sum_{\substack{\pi''=s_0s_1\ldots s_k\in \states(\states'')^*\states\\ s_0=s\land s_k\in C}}\probm''(\pi'')
    = \ProbOf{M''}{s}{\Reach{C}}.
\]
Point (3) follows analogously to point (2) above. We have that
\[
    \ExpOf{M}{s}{\XRS{s}}
    = \sum_{\substack{\pi=s_0s_1\ldots s_k\in \states^*\\ s_0=s=s_k}}
        \probm(\pi) 
        \cdot \rew(\pi)
    = \sum_{\substack{\pi=s_0s_1\ldots s_k\in \states^*\\ s_0=s=s_k}}
        \probm''(f(\pi)) 
        \cdot \rew''(f(\pi))
    = \ExpOf{M''}{s}{\XRS{s}}
\]
where $\rew(\pi) = \sum_{i=1}^{\len{\pi}} \rew(s_{i-1},s_{i})$
is the total reward along path $\pi$.
\end{proof}
\begin{figure*}[t]
  \centering
  \begin{tikzpicture}[
            node distance=3cm and 0.4cm,
    ]
	\node[state] (2) at (0,0) {$n$};
	\node[state,below left=of 2] (4) {i};
	\node[state, below right=of 2] (3) {$j$};
	
        \draw (2) edge[loop left,looseness=15] node[lbl, align=left] {
            $\probm(n,n)$\\
            $\rew(n,n)$\\
            $\dur(n,n)$
        } (2);
        \path (2) edge
            node[lbl,pos=0.5] {$\probm(n,j)$} 
            node[lbl,pos=0.6] {$\rew(n,j)$} 
            node[lbl,pos=0.7] {$\dur(n,j)$}
        (3);
        
        \path (2) edge 
            node[lbl,swap,pos=0.5] {$\probm(n,i)$}
            node[lbl,swap,pos=0.6] {$\rew(n,i)$}
            node[lbl,swap,pos=0.7] {$\dur(n,i)$}
        (4);
	
	\node[state] (n'') at (6,0) {$n$};
    \node[below left=1.5cm and 1cm of n'' ] (nli'') {$[n,\ell,i]$};
    \node[below right=1.5cm and 1cm of n'' ] (nlj'') {$[n,\ell,j]$};
    \node[state,below left= of n''] (i'') {$i$};
    \node[state,below right= of n''] (j'') {$j$};
	
        \path (n'') edge 
           node[lbl,pos=0.3] {$\probm(n,n)^\ell\probm(n,j)$}
           node[lbl,pos=0.5] {$\ell \cdot\rew(n,n)$}
           node[lbl,pos=0.7] {$\ell\cdot\dur(n,n)$}
        (nlj'');
        \path (nlj'') edge 
            node[lbl,pos=0.1] {$1$}
            node[lbl,pos=0.3] {$\rew(n,j)$}
            node[lbl,pos=0.5] {$\dur(n,j)$}
        (j'');
    
        \path (n'') edge 
           node[lbl,swap,pos=0.3] {$\probm(n,n)^\ell\probm(n,i)$}
           node[lbl,swap,pos=0.5] {$\ell \cdot\rew(n,n)$}
           node[lbl,swap,pos=0.7] {$\ell\cdot\dur(n,n)$}
        (nli'');
        \path (nli'') edge 
            node[lbl,swap,pos=0.1] {$1$}
            node[lbl,swap,pos=0.3] {$\rew(n,i)$}
            node[lbl,swap,pos=0.5] {$\dur(n,i)$}
        (i'');

	\node[state] (30) at (12,0) {$n$};
	\node[state,below left=of 30] (33) {$i$};
	\node[state, below right=of 30] (32) {$j$};
	
        \path (30) edge 
            node[lbl,pos=0.5] {$\probm'(n,j)$}
            node[lbl,pos=0.6] {$\rew'  (n,j)$}
            node[lbl,pos=0.7] {$\dur'  (n,j)$}
        (32);
        
        \path (30) edge 
            node[lbl,swap,pos=0.5] {$\probm'(n,i)$}
            node[lbl,swap,pos=0.6] {$\rew'  (n,i)$}
            node[lbl,swap,pos=0.7] {$\dur'  (n,i)$}
        (33);
\end{tikzpicture}
  \caption{Loop elimination: On the left is (part of the) Markov chain $M$ before eliminating the self-loop in vertex $n$; On the right is the resulting chain $M'$. In the middle is the intermediate chain $M''$ with the countably infinite edges between $n$ and auxiliary state $n'$. Taking the $\ell$th edge represents taking the loop $\ell$ times.}
  \label{fig:loop_elim_full}
\end{figure*}
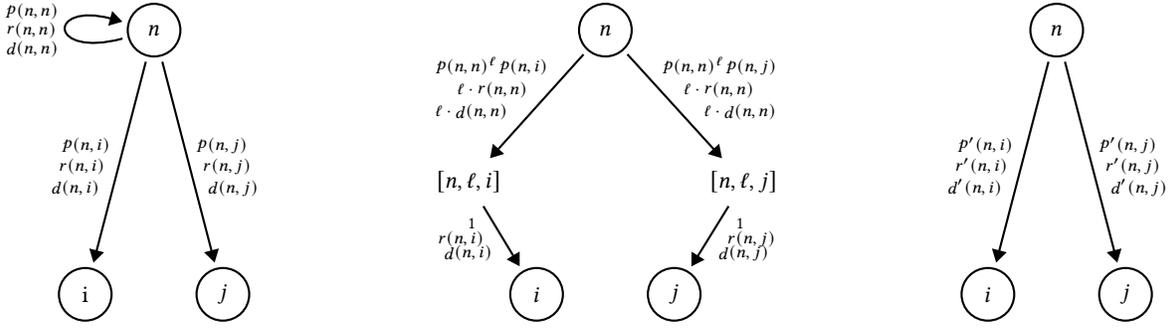

\begin{proof}[Proof of \cref{lem:loop-elim}]
    Consider the Markov chain $M$ and let $n$ be the state whose loop is eliminated.
        We define an intermediate Markov chain $M''$ that replaces the self-loop by
        a countably infinite number of new states, each of which represents a fixed number of iterations of the loop.

        That is $M''$ is obtained from $M$ as follows. 
        For every $\ell\ge 0$ and $j\neq n$ there is a new state $[n,\ell,j]$. %
        We let
        \begin{align*}
            \probm''(n,[n,\ell,j]) &\eqdef \probm(n,n)^{\ell}\probm(n,j)\\
            \rew''(n,[n,\ell,j])&\eqdef\ell\cdot\rew(n,n)\\
            \dur(n,[n,\ell,j]) &\eqdef\ell\cdot\dur(n,n)
        \end{align*}
        and
        $\probm''([n,\ell,j],j)\eqdef 1$ and
        $\rew''([n,\ell,j],j)\eqdef \dur''([n,\ell,j],j)\eqdef 0$.
        
        Every original edge between states $i,j\neq n$ has the same probability, reward and duration as in $M$.
        See also \cref{fig:loop_elim_full} in the middle.

        Intuitively, $M''$ is the (countably infinite) unfolding of the self-loop around state $n$.
        Every path in $M$ that goes from $n$ to $j$ has some fixed number $\ell\ge 0$ of iterations of the loop
        and corresponds to a path $n\to[n,\ell,j]\to j$ in $M''$ which has the same probability, reward and duration. 
        Note in particular that the probability of a path $n\to \cdots \to n\to j$
        that iterates the loop exactly $\ell$ times and then goes to $j$ is
        $\probm(n,n)^{\ell}(1-\probm(n,n)) \cdot \frac{\probm(n,j)}{1-\probm(n,n)}
        = \probm''(n,[n,\ell,j])$.
        One thus readily verifies that $M''$ is a summary of $M$ in the sense of \cref{def:summary}.

        It remains to observe that $M'$ is a summary of $M''$.
        This is because it is the result of removing intermediate states via \cref{lem:edge-collapse}.
        We conclude that $M'$ is a summary of $M$. The claim of the lemma now follows from \cref{lem:mc:summary-sufficient}.
    \end{proof}

\section{Proofs for Section 6}
\label{app:FOR}

\label{app:exFOR}

\lemexFO*
In this section, we show how to find the outcome of an induced Markov chain for a specific player in state 1 in the first order theory over the reals. The problem, in particular, is that depending on which strategy profile we picked, the induced Markov chain could have different (end) strongly connected components and we need the formula to be the same no matter which strategy profile we picked. Still, our solution is to implement loop and state elimination, see Section~\ref{sec:alg-mc}, while taking care that some of these steps might not be possible to do (for example, you cannot do state elimination of a state that is absorbing with incoming edges or if it has already been eliminated,  and you cannot do loop elimination on a state if there is no other successor of the state or if it has been eliminated). 

For each pair of (1)~memory vectors $\vec{m}$ and $\vec{m'}$, (2)~game states $v$ and $v'$ and 
(3)~signal vectors $\vec{s}$ and $\vec{s'}$
and for each iteration $i$ (an iteration consists of either one loop or one state elimination step), 
we have variables corresponding to the (possible) state $(v,m,s)$ and edge $((v,m,s),(v',m',s'))$	 of the induced Markov chain, 
describing the Markov chain at the beginning of iteration $i$. 
For simplicity, we will talk about $(v,m,s)$ as being a state of the Markov chain. 
Therefore, there are at most $N$ states 
and we only need to assign value to the states of the Markov chain initially.

To be explicit about it, we first run a loop and then state elimination on state $x$ for each $x\in\{N,N-1,\dots,1\}$, except we do not do state elimination on state $1$. In other words, for each $1\leq j< 2N$, in iteration $j$ for odd $j$, we do loop elimination on state $N+1-(j+1)/2$ and in iteration $j$ for even $j$, we do state elimination on state $N+1-j/2$.

Specifically, we have variables, each describing the Markov chain at the start of iteration $j$. 
Let $S=(v,\vec{m},\vec{s})$ and $T=(v',\vec{m'},\vec{s'})$. 
The intention of them is as follows:
\begin{enumerate}
\item $x_{S,j}$ should be 1 if state $s$ exists and otherwise 0. 
\item $x_{S,T,j}$ should be 1 if action $(S,T)$ exists and otherwise 0. 
\item $q_{S,T,j}$ should be the probability to use action $(S,T)$ when in $S$. 
\item $n_{S,T,j}$ should be the duration of action $(S,T)$.
\item $r_{S,T,j}$ should be the reward of action $(S,T)$ for the specific player.
\end{enumerate}
We have some expressions (i.e. some boolean combinations of polynomials) giving the value of each of these variables at the start of iteration 1 and depending on the value of the variables at the start of a loop or state elimination iteration, we have expressions giving the value of each of these variables after that iteration. 
The conjunction of all of these expressions give the full formula describing the outcome of the game for the specific player. 

\paragraph{Setting the initial values of the variables}
Note that expressions such as \[\sum_{\vec{a}=(a_1,\dots,a_k)\mid \Delta(v,\vec{a})(*,s',v')>0}\prod_{i=1}^k q_{a_i}^{s_i,m_i,i}b_{m'_i}^{s_i',m_i,i}\] are explicit polynomials given the game and states $(v,m,s)$ and $(v',m',s')$ explicitly because the set of such $a$'s comes directly from the input game ($\Delta(v,a)(*,s',v')$ is the probability to go from $v$ to $v'$ when the players plays $a$, emitting signals $s'$, ignoring the reward) and for a given $a$ the product is explicit. 
The expressions at the beginning, i.e. for iteration 1, are as follows:
\begin{enumerate}
\item The expression for $x_{(v,\vec{m},\vec{s}),1}$ is \[
x_{(v,\vec{m},\vec{s}),1}=1\]
 (one could omit these variables, but including them makes it easier to follow). That is, initially, all states exists.
\item The expression for $x_{(v,\vec{m},\vec{s}),(v',\vec{m'},\vec{s'}),1}$, for $\vec{m}=(m_1,m_2,\dots,m_k)$, $\vec{s}=(s_1,s_2,\dots,s_k)$, $\vec{m'}=(m_1',m_2',\dots,m_k')$ and $\vec{s'}=(s_1',s_2',\dots,s_k')$ is
\begin{align*}
(x_{(v,\vec{m},\vec{s}),(v',\vec{m'},\vec{s'}),1}=1&\wedge \sum_{a=(a_1,\dots,a_k)\mid \Delta(v,a)(*,s',v')>0}\prod_{i=1}^k q_{a_i}^{s_i,m_i}\cdot b_{m'_i}^{s_i',m_i,i}>0) \\
\vee (x_{(v,\vec{m},\vec{m}),(v',\vec{m'},\vec{s'}),1}=0&\wedge \sum_{a=(a_1,\dots,a_k)\mid \Delta(v,a)(*,s',v')>0}\prod_{i=1}^k q_{a_i}^{s_i,m_i,i}\cdot b_{m'_i}^{s_i',m_i,i} =0)
\end{align*}
 In words, the expression sets $x_{(v,m,s),(v',m',s'),1}$ to 1 iff there is an action $a$ that happens with positive probability which has $v'$ while emitting $\vec{s'}$  as an outcome, and the probability to update from $\vec{m}$ to $\vec{m'}$ on signal $\vec{s}$ (i.e. for each player, the probability to update from $m_i$ to $m_i'$ on signal $s_i$) is also positive. 
\item The expression for $q_{(v,\vec{m},\vec{s}),(v',\vec{m'},\vec{s'}),1}$ is \[q_{(v,\vec{m},\vec{s}),(v',\vec{m'},\vec{s'}),1}=\sum_{a=(a_1,\dots,a_k)\mid \Delta(v,a)(*,\vec{s'},v')>0}\prod_{i=1}^k q_{a_i}^{s_i,m_i,i}\cdot b_{m'_i}^{s_i',m_i,i}\cdot \Delta(v,a)(*,s',v')\enspace .\] 
In words, the expression $\prod_{i=1}^k q_{a_i}^{s_i,m_i,i}$ is the probability of using action profile $a$ 
when the signals and memory were $\vec{s}$ and $\vec{m}$ respectively. 
That multiplied by $\prod_{i=1}^{k} b_{m'_i}^{s_i,m_i,a,i}$ is then the probability to update $\vec{m}$ to $\vec{m'}$ as well, on signal $\vec{s}$. 
Finally, that times $\Delta(v,a)(*,s',v')$ is then the probability to go to $v'$, and emitting signal $\vec{s'}$ to the players, when in $v$ with the signal being $s$, and the players used action $a$. 
The full expression is then summing over all possible action profiles $a$.
\item  The expression for $n_{(v,\vec{m},\vec{s}),(v',\vec{m'},\vec{s'}),1}$ is \[n_{(v,\vec{m},\vec{s}),(v',\vec{m'},\vec{s'}),1}=1\enspace .\] I.e. the length of the edge initially is 1.
\item The expression for $r_{(v,\vec{m},\vec{s}),(v',\vec{m'},\vec{s}),1}$ is
 \[r_{(v,\vec{m},\vec{s}),(v',\vec{m'},\vec{s'}),1}q_{(v,\vec{m},\vec{s}),(v',\vec{m'},\vec{s'}),1}=\sum_{r\mid r\text{ is a reward for the selected player}}\sum_{a=(a_1,\dots,a_n)\mid \Delta(v,a)(r,\vec{s'},v')>0}\prod_{i=1}^k q_{a_i}^{s_i,m_i,i}b_{m_i'}^{s_i',m_i,i}\Delta(v,a)(r,\vec{m'},\vec{s'})r\enspace .\] In words, $\prod_{i=1}^k q_{a_i}^{s_i,m_i,i}b_{m_i'}^{s_i',m_i,i}\Delta(v,a)(r,\vec{m'},s')$ expresses the probability that, when in state $v$ and emitting signal $\vec{s}$ and the players have memory $\vec{m}$, we go to $v'$ emitting $s'$ and the players update memory to $\vec{m'}$, while the players used joint action $a$ and the specific player we focus on gets reward $r$. We then sum over all possible $r$'s and $a$'s to get the expected reward.  The reason we have $r_{(v,\vec{m},\vec{s}),(v',\vec{m'},\vec{s'}),1}q_{(v,\vec{m},\vec{s}),
 	(v',\vec{m'},\vec{s'}),1}$ on the left-hand side is that rewards in a Markov chain for an edge are conditional on following that edge (it is perhaps easiest to understand using an example: If the game has only a single action $a$, memory $m$ and signal $s$ for each player in every state and we have that $\Delta(v,a)$, for some state $v$, is the uniform distribution over $(1,s,t)$ and $(1,s,u)$ for some states $t,u$, then the right hand side of the expression for going from $v$ to $t$ is $1/2$ because $\Delta(v,a)(1,s,t)=1/2$ and each other variable in it is 1 (or 0). But clearly, the reward for going from $v$ to $t$ should be 1, because that is the reward for each action in the game!).

\end{enumerate}

For simplicity, in the remainder, we write $S$ for $(v,\vec{m},\vec{s})$ and $T$ for $(v',\vec{m'},\vec{s'})$, because these are the states of the Markov chain.

We will next give the expressions for each variable and each $1<j<2N$, depending on whether $j$ is odd or even (i.e. depending on whether we are doing loop or state elimination).

\paragraph{Loop elimination}

For even $j$ (i.e. we are doing loop elimination from $j-1$ to $j$ on some state $v=N+1-j/2$),  we use the following expressions:
\begin{enumerate}
\item The expression for $x_{S,j}$ is \[x_{S,j}=x_{S,j-1}\enspace .\] In words, each state exists if it did before.
\item 
The expression for $x_{v,v,j}$ is \[
(x_{v,v,j}=0\wedge \sum_{S\mid S\neq v}x_{v,s,j-1}>0)\vee (x_{v,v,j}=1\wedge \sum_{S\mid S\neq v}x_{v,s,j-1}=0)
\enspace .
\]
In words, the loop $(v,v)$ should not exist after this iteration if $v$ has another successor (the sum $\sum_{S\mid S\neq v}x_{v,S,j-1}$ is 0 precisely if for all $S\neq v$, $(v,S)$ does not exist). We can not eliminate a loop if the state does not have another successor though.

For each $S,T$, except $S=T=v$, the expression for $x_{S,T,j}$ is \[
x_{S,T,j}=x_{S,T,j-1} \enspace .
\]
In words, the edge is still there if it was before.
\item 
The expression for $q_{v,S,j}$, for each $S\neq v$ is \begin{align*}
(x_{v,v,j-1}&=1\wedge x_{v,v,j}=0 \wedge q_{v,S,j}(1-q_{v,v,j-1})=q_{v,S,j-1})\vee \\
(x_{v,v,j-1}&=0\wedge x_{v,v,j}=0 \wedge q_{v,S,j}=q_{v,S,j-1})
\end{align*}

In words, if we are doing loop elimination (i.e. the self-loop existed before but not after $x_{v,v,j-1}=1\wedge x_{v,v,j}=0$), then, $q_{v,S,j}=q_{v,S,j-1}/(1-q_{v,v,j-1})\Rightarrow q_{v,S,j}(1-q_{v,v,j-1})=q_{v,S,j-1})$, because $q_{v,v,j-1}\neq 1$ - if the probability of using the self-loop had been 1, we had no other successors of $v$ than $v$ and thus we could not do loop elimination. 
The other part is saying that if there were no self-loop before and after, then the value of $q_{v,S,j}$ is as before. This is fine to do even if the edge $(v,S)$ does not exist, but one could also test that similarly. Note that we do not consider the case where there is a self-loop both before and after even if it can occur: It only occurs in case there are no $S\neq v$ such that $(v,S)$ exists and in that case, we do not need to set $q_{v,S,j}$ to anything in particular.

The expression for $q_{S,T,j}$ for each $S,T$ such that $S\neq v$ is \[
q_{S,T,j}=q_{S,T,j-1} \enspace .
\]
In words, the value of the edge probability is as before. Again, this is fine to do even if the edge $(v,S)$ does not exist, but one could also test that similarly. 

Note that we are not setting $q_{v,v, j}$ at all, since its value does not matter as long as the edge does not exist.
\item 
The expression for $n_{v,S,j}$, for each $S\neq v$ is \begin{align*}
(x_{v,v,j-1}&=1\wedge x_{v,v,j}=0 \wedge n_{v,S,j}(1-q_{v,v,j-1})=q_{v,v,j-1}n_{v,v,j-1}+n_{v,S,j-1}(1-q_{v,v,j-1}))\vee \\
(x_{v,v,j-1}&=0\wedge x_{v,v,j}=0 \wedge n_{v,S,j}=n_{v,S,j-1}) \enspace .
\end{align*}

In words, if we are doing loop elimination (i.e. the self-loop existed before but not after $x_{v,v,j-1}=1\wedge x_{v,v, j}=0$), then, since we took the loop $q_{v,v,j-1}/(1-q_{v,v,j-1})$ many times in expectation, each taking $n_{v,v,j-1}$ many steps, we need to add that in, giving us $ n_{v,S, j}=q_{v,v,j-1}n_{v,v,j-1}/(1-q_{v,v,j-1})+n_{v,S, j}\Rightarrow n_{v,S, j}(1-q_{v,v,j-1})=q_{v,v,j-1}n_{v,v,j-1}+n_{v,S, j}(1-q_{v,v,j-1})$. On the other hand, if we did not have a loop to eliminate, the number of steps for this edge is as before. 
Finally, like in the previous case, if we did not do loop elimination or if this edge does not exist, we can set the value of $n_{v,S, j}$ arbitrarily.

The expression for $n_{S,T,j}$, for each $S\neq v$ is \[
n_{S,T,j}=n_{S,T,j-1} \enspace . 
\]
In words, we do not change the edge at all if it is not going out of $v$.

\item 
The expression for $r_{v,S,j}$, for each $S\neq v$ is \begin{align*}
(x_{v,v,j-1}&=1\wedge x_{v,v,j}=0 \wedge r_{v,S,j}(1-q_{v,v,j-1})=q_{v,v,j-1}r_{v,v,j-1}+r_{v,S,j-1}(1-q_{v,v,j-1}))\vee \\
(x_{v,v,j-1}&=0\wedge x_{v,v,j}=0 \wedge r_{v,S,j}=r_{v,S,j-1}) \enspace .
\end{align*}

In words, this is analogous to how we updated $n_{v,S,j}$

The expression for $r_{S,T,j}$, for each $S\neq v$ is \[
r_{S,T,j}=r_{S,T,j-1} \enspace . 
\]
In words, we do not change the edge at all if it is not going out of $v$.
\end{enumerate}

\paragraph{State elimination}
For odd $j>3$ (i.e. we are doing state elimination from $j-1$ to $j$ on some state $v=N+1-(j-1)/2$),  we use the following expressions:
\begin{enumerate}
\item The expression for $x_{v,j}$ is \begin{align*}
(x_{v,j}&=0\wedge x_{v,j-1}=0)\vee \\
(x_{v,j}&=1\wedge x_{v,j-1}=1\wedge \sum_{S\neq v}x_{S,v,j-1}>0 \wedge \sum_{S\neq v}x_{v,S,j-1}=0)\vee\\ (x_{v,j}&=0\wedge x_{v,j-1}=1\wedge (\sum_{S\neq v}x_{S,v,j-1}=0 \vee \sum_{S\neq v}x_{v,S,j-1}>0))
\end{align*}

In words, we can not eliminate $v$ if it has already been eliminated, or if it is absorbing and have incoming edges.  Note that $\sum_{S\neq v}x_{S,v,j-1}$ is the number of incoming edges to $v$ and $\sum_{S\neq v}x_{v,S,j-1}$ is the number of outgoing from $v$, in both cases ignoring loops.

For each other state $S\neq v$ the expression for  $x_{S,j}$ is \[
x_{S,j}=x_{S,j-1}
\]
In words, it exists if it did before.
\item 
The expression for $x_{v,S,j}$, for each $S\neq v$ is as follows:\[
(x_{v,j}=1\wedge x_{v,S,j}=x_{v,S,j-1})\vee (x_{v,j}=0\wedge x_{v,S,j}=0)
\]
In words, if we did not do state elimination, we do nothing to the edge, otherwise, we remove it.

Similarly, the expression for $x_{S,v,j}$, for each $S\neq v$ is as follows:\[
(x_{v,j}=1\wedge x_{S,v,j}=x_{S,v,j-1})\vee (x_{v,j}=0\wedge x_{S,v,j}=0)
\]

Also, the expression for $x_{v,v,j}$ is as follows:\[
(x_{v,j}=1\wedge x_{v,v,j}=x_{v,v,j-1})\vee (x_{v,j}=0\wedge x_{v,v,j}=0)
\]

Finally, the expression for each other edge $x_{S,T}$ for $S\neq v\neq T$, is as follows: \begin{align*}
(x_{v,j}&=1\wedge x_{S,T,j}=x_{S,T,j-1})\vee \\
(x_{v,j}&=0\wedge x_{S,T,j}=1\wedge (x_{S,T,j-1}=1\vee x_{S,v,j-1}+x_{v,T,j-1}=2))\vee\\
(x_{v,j}&=0\wedge x_{S,T,j}=0\wedge x_{S,T,j-1}=0\wedge x_{S,v,j-1}+x_{v,T,j-1}<2)
\end{align*}
In words, if we did not do state elimination we did nothing, otherwise, there is an edge $(S,T)$ if there were one before or if there were an edge from $S$ to $v$ and one from $v$ to $T$ and otherwise not.
\item 
The expression for $q_{S,T,j}$, for each $S\neq v\neq T$ is \begin{align*}
( x_{v,v,j}&=0 \wedge x_{S,v,j-1}+ x_{v,T,j-1}=2 \wedge x_{S,T,j-1}=1\wedge
q_{S,T,j}=q_{S,T,j-1}+q_{S,v,j-1}*q_{v,T,j-1})\vee\\
(x_{v,v,j}&=0 \wedge x_{S,v,j-1}+ x_{v,T,j-1}<2 \wedge q_{S,T,j}=q_{S,T,j-1})\vee\\
(x_{v,v,j}&=0 \wedge x_{S,v,j-1}+ x_{v,T,j-1}=2 \wedge x_{S,T,j-1}=0\wedge q_{S,T,j}=q_{S,v,j-1}*q_{v,T,j-1})\vee\\
(x_{v,v,j}&=1\wedge q_{S,T,j}=q_{S,T,j-1})
\end{align*}
In words, the first clause is saying that if we do state elimination, we had edges $(S,v),(v,T)$ and $(S,T)$, then the probability to go through $(S,T)$ after the elimination is the probability of going from $S$ to $v$ to $T$ plus the probability to go $(S,T)$ from before we did state elimination. The second clause is saying that if we do state elimination and had $(S,T)$ but not both $(S,v)$ and $(v,T)$ then the probability for $(S,T)$ is as before.
The third is saying that if we do state elimination and had both $(S,v)$ and $(v,T)$, but not $(S,T)$ then the probability for $(S,T)$ after is the probability of going $S$ to $v$ to $T$ before.
The last is saying that if we did not do state elimination, then the probability remains unchanged.

The expression for $q_{v,S,j}$, for each $S$ is \[
(x_{v,v,j}=1\wedge q_{v,S,j}=q_{v,S,j-1})\vee x_{v,v,j}=0
\]
In words, if we did not do state elimination, then the probability remains the same. If we did state elimination we do not care about $q_{v,S,j}$. We could as such equally have used the expression $q_{v,S, j}=q_{v,S,j-1}$, since we do not care about the value of $q_{v,S,j}$ if we did do state elimination, but this makes the construction easier 
to follow.

The expression for $q_{S,v,j}$, for each $S\neq v$ is \[
(x_{v,v,j}=1\wedge q_{S,v,j}=q_{S,v,j-1})\vee x_{v,v,j}=0
\]
In words, if we did not do state elimination, then the probability remains the same. If we did state elimination we do not care about $q_{S,v,j}$. Like above, we could do $q_{S,v,j}=q_{S,v,j-1}$ instead.
\item 
The expression for $n_{S,T,j}$, for each $S\neq v\neq T$ is \begin{align*}
( x_{v,v,j}&=0 \wedge x_{S,v,j-1}+ x_{v,T,j-1}=2 \wedge x_{S,T,j-1}=1 \wedge\\
n_{S,T,i}&(q_{S,T,j-1}+q_{S,v,j-1}q_{v,T,j-1})=n_{S,T,j-1}q_{S,T,j-1}+q_{S,v,j-1}q_{v,T,j-1}(n_{S,v,j-1}+n_{v,T,j-1}))\vee\\
(x_{v,v,j}&=0 \wedge x_{S,v,j-1}+ x_{v,T,j-1}<2 \wedge n_{S,T,j}=n_{S,T,j-1})\vee\\
(x_{v,v,j}&=0 \wedge x_{S,v,j-1}+ x_{v,T,j-1}=2 \wedge x_{S,T,j-1}=0\wedge n_{S,T,j}=n_{S,v,j-1}+n_{v,S,j-1})\vee\\
(x_{v,v,j}&=1\wedge n_{S,T,j}=n_{S,T,j-1})
\end{align*}
In words, the first clause  (which is over the first two lines) is saying that if we do state elimination, we had edges $(S,v),(v,T)$ and $(S,T)$, then the expected length to go through $(S,T)$ after the elimination is \[
\frac{n_{S,T,j-1}q_{S,T,j-1}}{q_{S,T,j-1}+q_{S,v,j-1}q_{v,T,j-1}}+\frac{q_{S,v,j-1}q_{v,T,j-1}(n_{S,v,j-1}+n_{v,T,j-1})}{q_{S,T,j-1}+q_{S,v,j-1}q_{v,S,j-1}}\enspace,\] i.e. if we went through $(S,T)$ after, it corresponded to going from $S$ to $T$ directly before or from $S$ to $v$ to $T$ and those options are not necessarily equally likely, so we need to multiply with the conditional probability of going through those edges. The second clause is saying that if we do state elimination and had $(S,T)$ but not both $(S,v)$ and $(v,T)$ then the length of $(S,T)$ is as before.
The third is saying that if we do state elimination and had both $(S,v)$ and $(v,T)$, but not $(S,T)$ then the length of $(S,T)$ after is the length of the path from $S$ to $v$ to $T$.
The last is saying that if we did not do state elimination, then the length remains unchanged.

The expression for $n_{v,S,j}$, for each $S$ is \[
(x_{v,v,j}=1\wedge n_{v,S,j}=n_{v,S,j-1})\vee x_{v,v,j}=0
\]
In words, if we did not do state elimination, then the length remains the same. If we did state elimination we do not care about $n_{v,S, j}$. We could as such equally have used the expression $n_{v,S, j}=n_{v,S,j-1}$, since we do not care about the value of $n_{v,S, j}$ if we did do state elimination, but like for the probability it seemed harder to follow.

The expression for $n_{S,v,j}$, for each $S\neq v$ is \[
(x_{v,v,j}=1\wedge n_{S,v,j}=n_{S,v,j-1})\vee x_{v,v,j}=0
\]
In words, if we did not do state elimination, then the length remains the same. If we did state elimination we do not care about $n_{S,v, j}$. Like above, we could do $n_{S,v,j}=n_{S,v,j-1}$ instead.

\item 
The expression for $r_{S,T, j}$ is much the same as for $n_{S,T, j}$ but is included for completeness.

The expression for $r_{S,T,j}$, for each $S\neq v\neq T$ is \begin{align*}
( x_{v,v,j}&=0 \wedge x_{S,v,j-1}+ x_{v,T,j-1}=2 \wedge x_{S,T,j-1}=1\wedge\\
r_{S,T,j}&(q_{S,T,j-1}+q_{S,v,j-1}q_{v,S,j-1})=r_{S,T,j-1}q_{S,T,j-1}+q_{S,v,j-1}q_{v,T,j-1}(r_{S,v,j-1}+r_{v,T,j-1}))\vee\\
(x_{v,v,j}&=0 \wedge x_{S,v,j-1}+ x_{v,T,j-1}<2 \wedge r_{S,T,j}=r_{S,T,j-1})\vee\\
(x_{v,v,j}&=0 \wedge x_{S,v,j-1}+ x_{v,T,j-1}=2 \wedge x_{S,T,j-1}=0\wedge r_{S,T,j}=r_{S,v,j-1}+r_{v,T,j-1})\vee\\
(x_{v,v,j}&=1\wedge r_{S,T,j}=r_{S,T,j-1})
\end{align*}
In words, the first clause (which covers the first two lines) is saying that if we do state elimination, we had edges $(S,v),(v,T)$ and $(S,T)$, then the expected reward to go through $(S,T)$ after the elimination is $r_{S,T,j-1}q_{S,T,j-1}/(q_{S,T,j-1}+q_{S,v,j-1}q_{v,T,j-1})+q_{S,v,j-1}q_{v,T,j-1}(r_{S,v,j-1}+r_{v,T,j-1})/(q_{S,T,j-1}+q_{S,v,j-1}q_{v,T,j-1})$, i.e. if we went through $(S,T)$ after, it corresponded to going from $S$ to $T$ directly before or from $S$ to $v$ to $T$ and those options are not necessarily equally likely, so we need to multiply with the conditional probability of going through those edges. The second clause is saying that if we do state elimination and had $(S,T)$ but not both $(S,v)$ and $(v,T)$ then the reward of $(S,T)$ is as before.
The third is saying that if we do state elimination and had both $(S,v)$ and $(v,T)$, but not $(S,T)$ then the reward of $(S,T)$ after is the reward of the path from $S$ to $v$ to $T$.
The last is saying that if we did not do state elimination, then the reward remains unchanged.

The expression for $r_{v,S,j}$, for each $S$ is \[
(x_{v,v,j}=1\wedge r_{v,S,j}=r_{v,S,j-1})\vee x_{v,v,j}=0
\]
In words, if we did not do state elimination, then the length remains the same. If we did state elimination we do not care about $r_{v,S,j}$. We could as such equally have used the expression $r_{v,S, j}=r_{v,S,j-1}$, since we do not care about the value of $r_{v,S,j}$ if we did do state elimination, but like for the probability it seemed harder to follow.

The expression for $r_{S,v,j}$, for each $S\neq v$ is \[
(x_{v,v,j}=1\wedge r_{S,v,j}=r_{S,v,j-1})\vee x_{v,v,j}=0
\]
In words, if we did not do state elimination, then the reward remains the same. If we did state elimination we do not care about $r_{S,v, j}$. Like above, we could do $r_{S,v,j}=r_{S,v,j-1}$ instead.

\end{enumerate}

\paragraph{Finding the value}
Once we did all the above iterations, we are left with a Markov chain where either state 1 is absorbing or it has no self-loop and has edges to absorbing states. All remaining states are absorbing.
We want a variable $v_s$ that, for each state, $s$ is the value of the state (if it exists).
For each state, $s\neq 1$, we have a variable $v_s$ with expression\[
(x_{s,2N}=1\wedge v_sn_{s,s,2N-1}=r_{s,s,2N})\vee x_{s,2N}=0
\]
In words, if the state exists, the mean-payoff value is $v_s=r_{s,s,2N}/n_{s,s,2N}$. Otherwise, if the state does not exist, we do not care.

For state $1$, we have the variable $v_1$ with expression\[
(x_{1,1,2N}=1\wedge v_1n_{1,1,2n}=r_{1,1,2N})\vee 
(x_{1,1,2N}=0\wedge v_1=\sum_{s\neq 1}x_{1,s,2N}q_{1,s,2N}v_{s})
\]
In words, the value of $1$, if it is absorbing, is like the above. If it is not absorbing, it is for each other state $s$ state 1 has an edge to, the probability of going to $s$ from 1 times the value of $s$. Note that $x_{1,s,2N}$ is an indicator variable for whether the edge $(1,s)$ exists.

\paragraph{Proof of lemma}
The correctness of the lemma follows from that each variable, both initially and based on the previous variables, is set correctly. There are $2N-1$ iterations. In each iteration, we use $5$ variables, 4 for edges and 1 for states. There can be at most $N^2$ edges and $N$ states. Because we have variables for before and after each iteration (the before matches the ones for after the previous), we get a total of $8N^3+2N$ variables for this. Additionally, we use $N$ variables for the values, getting us up to $8N^3+3N$. Each variable requires between $1$ and $13$ polynomials for below $100N^3+40N$ polynomials in total ($8\cdot 13=104$, but many of the expressions are much smaller than the worst case). The maximum degree is $|A|\cdot b$ for each of these polynomials (for setting the initial variables). The worst case for the numbers in the coefficients comes from the input game of $\tau$.

\section{Proofs for Section 7}
 \label{app:approx}
 \subsection{A proof of Claim~\texorpdfstring{\ref{claim:tran-diff}}{7.4}}
\label{app:solan}

\diff*
\begin{proof} 
	We first prove that the rewards differ additively.
	Without loss of generality, we assume that for every pair of game states $\state,\state'\in \states$, actions vector $\vec{a}\in \actions^k$, and signal vector $s$, there exists a unique reward vector $\vec{r}$ such
	that $\trans(\state,\vec{a})(\vec{r},s,\state')>0$. If the same action profile produces different rewards on the same action, then we can replace all such transitions in the game by a single transition which produces the expected reward from all such possible transitions.
	Let 
	 \[X\eqdef 
	 \sum_{	 	a\in \actions^k \mid \Delta(\state,a)(\vec{r},\vec{s'},v')>0} 
 	\Delta(\state,a)(\vec{r},\vec{s'},\state')\vec{r}[i]\cdot
 	\left(p(\sigma,\state,\vec{m},s,\vec{m'},\vec{a})- p(\sigma',\state,\vec{m},s,\vec{m'},\vec{a})\right) \]
be the difference in rewards in $M$ and $M'$ for player $i$.
For convenience, we write $p(a)$ instead of $p(\sigma,\vec{m},s,\vec{m'},\vec{a})$ and $p'(a)$ for $p(\sigma',\vec{m},s,\vec{m'},\vec{a})$.%

We have that $\vec{r}[i]\in [-C, C]$ and we will next bound $p(a)-p'(a)$. 
We have that for all action profiles $a$, 
$p(a)-p'(a)= p(a)(1-p'(a)/p(a))$. There are two cases, either $p'(a)>p(a)$ or $p(a)>p'(a)$. 
If $p'(a)>p(a)$, then $1-p'(a)/p(a)=-\delta(p(a),p'(a))$. Otherwise, if $p(a)>p'(a)$ then $\delta(p(a),p'(a))=p(a)/p'(a)-1$, implying that $p'(a)/p(a)=1/(\delta(p(a),p'(a))+1)$ and thus that $1-p'(a)/p(a)=\delta(p(a),p'(a))/(\delta(p(a),p'(a))+1)$. 
We have that $p(a)/p'(a)+1)\leq \delta(p(a),p'(a))$ because $\delta(p(a),p'(a))$ is non-negative and we, therefore, divide with something bigger than 1. Hence, $p(a)-p'(a)\in [-p(a)\delta(p(a),p'(a)),p(a)\delta(p(a),p'(a))]$. Clearly, $\delta(p(a),p'(a))\leq \delta(\sigma,\sigma')$, because it is a maximum over a set containing $\delta(p(a),p'(a))$.
Thus, $\abs{X}\leq \sum_{a=(a_1,\dots,a_k)} C\cdot \trans(\state,a)(\vec{r},s,\state') \cdot p(a) \delta(\sigma,\sigma')=C\cdot \delta(\sigma,\sigma')\sum_{a=(a_1,\dots,a_k)}\trans(\state,a)(\vec{r},s,\state')p(a) $, which is $\leq C\cdot \delta(\sigma,\sigma')$ as the sum of probabilities is $\leq 1$.

We now prove that the transition probabilities differ multiplicatively.
We have that 
$P=\sum_{a=(a_1,\dots,a_k)}p(\sigma,\vec{m},s,\vec{m'},\vec{a})\cdot \trans(\state,\vec{a})(r,\vec{s'},\state')$ and similarly $P'=\sum_{a=(a_1,\dots,a_k)}p(\sigma',\vec{m},s,\vec{m'},\vec{a})\cdot \trans(\state,\vec{a})(r,\vec{s'},\state')$. 
We will look at the sums term by term. 
We have three cases, either $P=P'$, $P>P'$ or $P<P'$. In the first case, we have that $\delta(P,P')=0\leq \delta(\sigma,\sigma')$.
The two remaining cases we show just one as the other is analogous.
Consider that $P>P'$ and thus $\delta(P,P')=P/P'-1$. 
We have that  \[
p(a)/p'(a)-1\leq \delta(\sigma,\sigma')\Rightarrow p(a)\leq (\delta(\sigma,\sigma')+1)p'(a)\]
and therefore that
\begin{align*}
\delta(P,P')+1=P/P'
&=\frac{
\sum_{a=(a_1,\dots,a_k)}p(a)\cdot \trans(\state,a)(\vec{r},s,\state')
}
{
\sum_{a=(a_1,\dots,a_k)}p'(a)\cdot \trans(\state,a)(\vec{r},s,\state')
}\\
&\leq 
\frac{
\sum_{a=(a_1,\dots,a_k)}p'(a)(\delta(\sigma,\sigma')+1)\cdot \trans(\state,a)(\vec{r},s,\state')
}
{
\sum_{a=(a_1,\dots,a_k)}p'(a)\cdot \trans(\state,a)(\vec{r},s,\state')
}\\
&=\delta(\sigma,\sigma')+1\qedhere
\end{align*}

\end{proof}

\subsection{Approximating Markov Chains}
 \label{app:approx}
 We first recall some properties of  $u$-bit floating point numbers $\+Q(u)$ and the finite-precision variants of the  addition, multiplication and division operations $\oplus^u,\oslash^{u},\otimes^{u}$.

\begin{proposition}
	\label{prop:uclose}\
	\begin{enumerate}
		
		\item\label{prop:uclose-exists-close} %
		For every $x\in\nnreals$ and $u\in\N$ there exists $x'\in\+Q(u)$ that is $(u,1)$-close to $x$.
		
		\item\label{prop:uclose-trans} %
		If $x$ is $(u,i)$-close to $y$ and $y$ is $(u,j)$-close to $z$, then 
		$x$ is $(u,i+j)$-close to $z$.
		
		\item\label{prop:uclose-ops} %
		Let $x$, $\tilde{x}$, $y$, $\tilde{y}$ be non-negative numbers 
		such that $x$ is $(u,i)$-close to $\tilde{x}$ and $y$ is
		$(u,j)$-close to $\tilde{y}$. 
		Then, $x+y$ is $(u,\text{max}(i,j)+1)$-close to $\tilde{x}+\tilde{y}$,
		$xy$ is $(u,i+j)$-close to $\tilde{x}\tilde{y}$, and 
		$x/y$ is $(u,i+j)$-close to $\tilde{x}/\tilde{y}$.

		\item\label{prop:uclose-imprec-ops} %
		If $x'\in \+Q(u)$  is $(u,j)$-close to $x$ and $y'\in \+Q(u)$ is $(u,j)$-close to $y$, then
		$x'\oplus^uy'$ is $(u,\max(i,j)+1)$-close to $x+y$;
		$x'\oslash^uy'$ is $(u,ij+1)$-close to $x/y$;
		and
		$x'\otimes^uy'$ is $(u,i+j+1)$-close to $xy$;
		
	\end{enumerate}
\end{proposition}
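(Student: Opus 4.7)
The plan is to set $\alpha := (1-2^{-u+1})^{-1}$ and rewrite ``$x$ and $x'$ are $(u,i)$-close'' in the equivalent multiplicative form $\max(x,x')/\min(x,x')\le \alpha^i$, so that each claim reduces to a short algebraic inequality in ratios. I would prove the four parts in order so that later ones may reuse earlier ones. For (1), I would write any $x>0$ uniquely as $x = m\cdot 2^{-e}$ with real $m\in[2^{u-1},2^u)$ and $e\in\mathbb{Z}$ and take $x' := \lfloor m\rfloor\cdot 2^{-e}$: then $x'\in\+Q(u)$ and $x/x' \le m/\lfloor m\rfloor \le m/(m-1) \le 2^{u-1}/(2^{u-1}-1) = \alpha$, i.e.\ $(u,1)$-close (the case $x=0$ is trivial with $x'=0$). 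For (2), I would case-split on whether $y$ lies between $\min(x,z)$ and $\max(x,z)$: in the interior case, $\max(x,z)/\min(x,z)\le (\max(x,z)/y)(y/\min(x,z))\le \alpha^{i+j}$; in the exterior case one of the two given bounds already dominates directly.

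For part (3), the sandwich inequalities $\alpha^{-i}\tilde x\le x\le \alpha^i\tilde x$ and $\alpha^{-j}\tilde y\le y\le \alpha^j\tilde y$ do all the work. Summing term-by-term and factoring out $\alpha^{\pm\max(i,j)}$ puts $(x+y)/(\tilde x+\tilde y)\in[\alpha^{-\max(i,j)},\alpha^{\max(i,j)}]$, which already implies the claimed $(u,\max(i,j)+1)$-closeness. For multiplication and division I would write $(xy)/(\tilde x\tilde y) = (x/\tilde x)(y/\tilde y)$ and $(x/y)/(\tilde x/\tilde y) = (x/\tilde x)(\tilde y/y)$ and multiply the ratio bounds, placing each in $[\alpha^{-(i+j)},\alpha^{i+j}]$, yielding $(u,i+j)$-closeness.

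For part (4) I would view each finite-precision operation as ``compute the exact real result, then round to $\+Q(u)$''. By part (3) the exact result is $(u,\max(i,j))$-close for addition and $(u,i+j)$-close for multiplication and division; by part (1) the rounding step contributes an additional $(u,1)$; transitivity from part (2) then yields $(u,\max(i,j)+1)$ for $\oplus^u$ and $(u,i+j+1)$ for both $\otimes^u$ and $\oslash^u$. The exponent $(u,ij+1)$ written in the statement for $\oslash^u$ is implied by this bound for $i,j\ge 1$ and is most likely a minor typo for $i+j+1$. The only step with any real content is the boundary check in (1), namely that the truncated mantissa $\lfloor m\rfloor$ still lies in $\{2^{u-1},\ldots,2^u-1\}$; this follows immediately from $m\in[2^{u-1},2^u)$. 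Everything else is elementary ratio manipulation.
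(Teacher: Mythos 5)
Your proof is correct. Note, however, that the paper does not actually prove \cref{prop:uclose} at all: it merely ``recalls'' these properties as folklore facts about floating-point arithmetic, so there is no author proof to compare against. Your argument supplies the missing proof in what is surely the intended way, namely by rewriting $(u,i)$-closeness in its multiplicative form $\max(x,x')/\min(x,x')\le\alpha^i$ with $\alpha=(1-2^{-u+1})^{-1}$, reducing each claim to a one-line ratio inequality, and then assembling part (4) from parts (1)--(3) by viewing each $\circ^u$ as exact arithmetic followed by a downward rounding that costs one extra factor of $\alpha$.

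Two remarks on the edge cases, both of which you handle correctly but which are the only places where a careless reader could slip: in part (1) the containment $\lfloor m\rfloor\in\{2^{u-1},\dots,2^u-1\}$ is exactly what makes $x'$ a legal member of $\+Q(u)$, and the bound $m/\lfloor m\rfloor\le 2^{u-1}/(2^{u-1}-1)=\alpha$ uses the monotonicity of $t\mapsto t/(t-1)$ together with $m\ge 2^{u-1}$; and in part (4), ``rounding down'' can only land on an element of $\+Q(u)$ that is at least as close to the exact result as the explicit witness $\lfloor m\rfloor 2^{-e}$ from part (1), so part (1) does indeed bound the rounding error. Your diagnosis that the $(u,ij+1)$ exponent for $\oslash^u$ is a typo for $(u,i+j+1)$ is almost certainly right: your argument gives $(u,i+j+1)$, the same bound as for $\otimes^u$, and the $ij$ form is not implied by it when $i=1$ or $j=1$ (e.g.\ $i=j=1$ gives $ij+1=2$ but $i+j+1=3$). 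The paper's later applications use $ij$ only with large $i,j$, where $ij\ge i+j+1$, so the weaker $(u,ij)$-type bounds they quote there remain valid under the corrected statement. (Relatedly, the hypothesis in part (4) reads ``$(u,j)$-close'' for both $x'$ and $y'$ but then uses both $i$ and $j$ in the conclusions; this is a second small typo and should read ``$(u,i)$-close'' for $x'$, as your proof assumes.)
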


We also recall some properties of probability distributions represented by \ANR.
\begin{lemma} 
	\label{lem:prob-reps}\
	
	\begin{enumerate}
		\item\label{lem:prob-reps-exists}
		For any probability distribution $q=(q_1,q_2,\ldots,q_m)$
		there exist
		$(p_1,p_2,\ldots,p_m)\in\+P(u)$ so that for all $i$,
		$q_i$ and $p_i$ are $(u,2m+2)$-close.
		
		\item\label{lem:prob-reps-comp}
		Suppose that 
		$a_1,a_2,\ldots,a_m\in\+D(u)$ and 
		for all $i=1\ldots m$ let
		$p'_i=a_i\oslash^u\bigoplus^u_{j=1\ldots m} a_j$
		and 
		$p_i=p'_i / \sum_{j=1\ldots m} p'_j$.
		Then 
		$(p'_1,p'_2,\ldots,p'_m)$ is an \ANR\ of
		$(p_1,p_2,\ldots,p_m)\in\+P(u)$.
		
	\end{enumerate}
\end{lemma}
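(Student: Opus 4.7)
\medskip
\noindent\emph{Proof proposal.}
The plan is to verify each part by an explicit construction and by tracking closeness through \cref{prop:uclose}.

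For part~\ref{lem:prob-reps-exists}, I would start by invoking \cref{prop:uclose}(\ref{prop:uclose-exists-close}) separately for each $i$ to pick some $p'_i\in\+Q(u)$ that is $(u,1)$-close to $q_i$. Since $\sum_j q_j = 1$, iterating the exact-addition rule of \cref{prop:uclose}(\ref{prop:uclose-ops}) shows that $\sum_j p'_j$ is $(u,m)$-close to $1$, which is exactly the normalization requirement in the definition of $\+P(u)$. Hence the vector $(p'_1,\dots,p'_m)$ serves as an \ANR\ of the distribution $(p_i)_{i\le m}\in\+P(u)$ defined by $p_i\eqdef p'_i/\sum_j p'_j$. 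Finally, comparing $p_i$ to $q_i$: the numerator $p'_i$ is $(u,1)$-close to $q_i$, the denominator $\sum_j p'_j$ is $(u,m)$-close to~$1$, and so by the exact-division rule of \cref{prop:uclose}(\ref{prop:uclose-ops}) the ratio $p_i$ is $(u,m+1)$-close to $q_i/1 = q_i$. Since $(u,i)$-closeness is monotone in~$i$, this is within the claimed $(u,2m+2)$-closeness.

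For part~\ref{lem:prob-reps-comp}, membership $(p'_1,\dots,p'_m)\in\+Q(u)$ is immediate because $\oslash^u$ returns values in $\+Q(u)$, and the equation $p_i = p'_i/\sum_j p'_j$ holds by definition of $p_i$. The only nontrivial point is verifying that $\sum_j p'_j$ is $(u,m)$-close to $1$. Writing $S\eqdef\bigoplus^u_j a_j$ and $A\eqdef\sum_j a_j$, I would first apply the imprecise-addition rule of \cref{prop:uclose}(\ref{prop:uclose-imprec-ops}) iteratively (treating each $a_j$ as $(u,0)$-close to itself) to obtain that $S$ is close to $A$, hence $A/S$ is close to $1$ by the exact-division rule of \cref{prop:uclose}(\ref{prop:uclose-ops}). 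Then the imprecise-division rule gives that each $p'_i$ is close to $a_i/S$, and summing these approximations via the exact-addition rule gives that $\sum_j p'_j$ is close to $\sum_j a_j/S = A/S$. Combining via transitivity (\cref{prop:uclose}(\ref{prop:uclose-trans})) produces the desired bound.

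The main obstacle will be purely combinatorial bookkeeping: the constants from \cref{prop:uclose} grow when composed, and need to be kept within the envelope stated by the lemma. Fortunately, monotonicity of the closeness relation in its second parameter absorbs slack, so it suffices to produce any bound polynomial in $m$ rather than a tight one, and the choice of summation order (iterated versus balanced) in evaluating $\bigoplus^u$ only affects these constants, not the qualitative conclusion.
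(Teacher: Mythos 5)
Part~(1) of your proposal is correct: picking $p'_i\in\+Q(u)$ that is $(u,1)$-close to $q_i$ by item~(\ref{prop:uclose-exists-close}) of \cref{prop:uclose}, iterating the addition rule to get $\sum_j p'_j$ being $(u,m)$-close to $1$ (meeting the normalization requirement in the definition of $\+P(u)$), and concluding via the division rule that $p_i$ is $(u,m+1)$-close to $q_i$ lands well inside the claimed $(u,2m+2)$ envelope.

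Part~(2) has a genuine gap, and your closing remark that ``monotonicity absorbs slack'' is exactly where it fails. The bound you must hit here is not the lemma's (slack-tolerant) conclusion but the \emph{definition} of $\+P(u)$: the primed vector qualifies as an \ANR\ only if $\sum_j p'_j$ is $(u,m)$-close to $1$, with $m$ pinned to the number of components — there is no room to settle for ``any bound polynomial in $m$''. Tracking the composition rules of \cref{prop:uclose} along your route: $S=\bigoplus^u_j a_j$ is $(u,m-1)$-close to $A=\sum_j a_j$, hence $A/S$ is $(u,m-1)$-close to $1$; each $p'_i$ is $(u,1)$-close to $a_i/S$; iterated exact addition makes $\sum_j p'_j$ only $(u,m)$-close to $A/S$; and transitivity then yields merely $(u,2m-1)$-closeness of $\sum_j p'_j$ to $1$, which is strictly weaker than the required $(u,m)$.

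The claim is still true, but closing the gap needs a direct argument that exploits the one-sidedness of the floating-point roundings — information the black-box composition rules discard. Since $\oslash^u$ rounds down, $p'_i\le a_i/S$ for every $i$, and $(u,1)$-closeness then forces $p'_i\ge (a_i/S)(1-2^{-u+1})$. These one-sided bounds sum without degradation: $(A/S)(1-2^{-u+1})\le \sum_j p'_j\le A/S$, so $\sum_j p'_j$ is $(u,1)$-close to $A/S$ regardless of $m$. Combined with the $(u,m-1)$-closeness of $A/S$ to $1$, transitivity now gives the required $(u,m)$-closeness. The accumulation you incur by iterating the generic addition rule over the $p'_i$ is precisely what the one-sidedness argument avoids.
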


We recall the following Lemma of Solan~\cite{Sol2003}.
This is very similar to our Lemma~\ref{lem:solan} in that it quantifies
the change in value as a result of perturbations.
Instead of perturbing strategies in MDPs, as we do in \cref{lem:solan},
this considers perturbations of transition probabilities \emph{and reward functions} in Markov chains.

\begin{lemma}[\cite{Sol2003}, Theorem~4]
    \label{lem:solan-original}
	Given a Markov chain $M$ and $M'$ with the same states, let $\rdist(\probm_{ij},\probm_{ij})\leq \eps$,   
	$\rdist(\rew_{ij},\rew'_{ij})\leq C$, and $\rdist(\dur_{ij},\dur'_{ij})\leq C$. Then the mean payoff value of $M$ and $M'$ differ by at most $4\eps NC$. 
\end{lemma}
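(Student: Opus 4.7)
The plan is to reduce the comparison of long-run averages to the comparison of steady-state behaviour inside each ergodic class and the reachability probabilities of those classes. First I would observe that since $\rdist(\probm_{ij},\probm'_{ij})$ is finite for every edge, the two chains have identical support and therefore identical communication structure; in particular they share the same CCSs $C_1,\ldots,C_r$. Writing $v(s)=\sum_k \ProbOf{M}{s}{\Reach{C_k}}\cdot\rho_k$ with $\rho_k$ the stationary mean-payoff inside $C_k$, and analogously $v'(s)$ for $M'$, it then suffices to bound the change in each $\rho_k$ and in each reachability probability separately and combine them via the triangle inequality.

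For the stationary values inside a CCS, I would write $\rho_k = \bigl(\sum_{ij}\pi_i\probm_{ij}\rew_{ij}\bigr)/\bigl(\sum_{ij}\pi_i\probm_{ij}\dur_{ij}\bigr)$ and apply the standard Schweitzer/Meyer perturbation bound: a multiplicative $(1+\eps)$ perturbation of the transition matrix of a finite irreducible chain changes its stationary distribution in $\ell_1$-norm by $O(\eps|C_k|)$. Combined with the hypothesised multiplicative perturbations of the reward and duration functions (with rewards and durations bounded by $C$), a routine triangle-inequality expansion of numerator and denominator gives $|\rho_k-\rho'_k|\le O(\eps|C_k|C)$. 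For the reachability probabilities I would express them as the unique solution of the sub-stochastic linear system $(I-Q)r=b$ on the transient states and invoke the analogous perturbation bound for $(I-Q)^{-1}$, yielding $|p_k-p'_k|\le O(\eps N)$. Plugging these into $|v(s)-v'(s)|\le\sum_k |p_k-p'_k||\rho_k|+\sum_k p_k|\rho_k-\rho'_k|$ produces a bound of order $\eps N C$.

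The hard part will not be the qualitative bound but matching the sharp constant $4$ in $4\eps NC$: the black-box perturbation route sketched above yields $O(\eps N C)$ with an implicit constant depending on the condition number of the fundamental matrix, which can easily dominate $4$. To land on the claimed factor I would instead follow Solan's more hands-on inductive argument, tracking additively the change in expected cumulative reward and expected cumulative duration along regeneration cycles and propagating the multiplicative transition-probability distortion edge by edge. Concretely, I would fix a recurrent reference state $s$ in each CCS, use the identity $\rho_k = \ExpOf{}{s}{\XRS{s}}/\ExpOf{}{s}{\XDS{s}}$ from \cref{lem:mc:MPR-eq-MCV}, and show by induction on the cycle length that both numerator and denominator vary multiplicatively by a factor controlled by $\eps$ per step, with the linear-in-$N$ dependence coming from the bounded expected return time. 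This cycle-based bookkeeping avoids the fundamental-matrix blow-up and delivers the stated constant directly.
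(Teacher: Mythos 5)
This lemma is stated in the paper as a citation to \cite{Sol2003} (Theorem~4 in the lemma header, referenced as Theorem~5 in the proof of \cref{lem:solan}) and is not proved there, so there is no internal argument to compare against. Your high-level reconstruction — identical supports, hence identical CCSs; write the value as $\sum_k p_k\rho_k$; bound the perturbation of reachability probabilities and of ergodic values separately; use the ratio form $\rho_k = \bigl(\sum_{ij}\pi_i\probm_{ij}\rew_{ij}\bigr)/\bigl(\sum_{ij}\pi_i\probm_{ij}\dur_{ij}\bigr)$ — is a sensible way to attack a result of this flavour, and your self-diagnosis that black-box additive perturbation bounds (Schweitzer/Meyer) cannot deliver the explicit constant is correct.

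The genuine gap is in the mechanism you invoke for the linear-in-$N$ dependence. You say it comes from ``the bounded expected return time,'' but expected return times in a finite chain are \emph{not} bounded by any polynomial in $N$: with transition probabilities near $0$ or $1$ they can be exponentially (and, under the hypotheses actually needed in this paper, doubly-exponentially) large. Compounding a per-step multiplicative distortion $(1+\eps)$ along a regeneration cycle of expected length $T$ gives a factor $(1+\eps)^T$, which is useless unless $T=O(N)$ — which fails here. The thing that actually delivers a $(1\pm\eps)^{O(N)}$ multiplicative control on the stationary measure (and on reachability probabilities) is combinatorial, not probabilistic: by the Markov chain tree theorem every stationary probability is a ratio of sums of products of at most $N-1$ edge probabilities, so a relative $\eps$-perturbation of each edge moves each stationary mass by a multiplicative $(1\pm\eps)^{O(N)}$ \emph{regardless of return times}; the analogous spanning-subgraph identity handles the transient part. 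Without substituting this (or an equivalent determinant/co-factor argument) for ``bounded return time,'' your plan does not actually produce a bound linear in $N$.
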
  
We can now bound the error introduced by the imprecise loop elimination and state eliminations procedures.

\lemapploop*
\begin{proof}
	Let $\hat{M}$ be the Markov chain obtain by 
	the precise loop elimination algorithm.
	By Lemma~\ref{lem:loop-elim}, $M$ has the same 
	mean-payoff value as $\hat{M}$. 
	Therefore, we bound the difference in mean-payoff value of
	$M'$ and $\hat{M}$.
	Let $N$ be the number of states in the Markov chain.
	
	Since the probabilities in $M$ and $M'$ are given in \ANR, 
	we write $\probm(i,j)$ and $\probm'(i,j)$ for the numbers 
	giving the \ANR\ of the distributions 
	$\tilde{\probm}(i,j)$ and $\tilde{\probm}'(i,j)$ respectively. 
	We write $\hat{\probm}(i,j)$ to represent the probabilities in $\hat{M}$.

	\begin{enumerate}
		\item Each $\probm(i,j)$ is $(u,N)$-close to the actual probability distribution of $M$, which is $\tilde{\probm}(i,j)$ by \cref{lem:prob-reps}.
		\item Each $\oplus_{k\neq i} \probm(n,k)$ is $(u,2N-1)$-close to $\sum_{k\neq i}  \tilde{\probm} (n,k)$ by $N-1$ applications of \cref{prop:uclose-imprec-ops} of \cref{prop:uclose} and item (1) above.
		\item Each $\probm'(n,j)$ is $(u,(2N-1)N)$-close to $\hat{\probm}(n,j)$ by \cref{prop:uclose-imprec-ops} of \cref{prop:uclose} and items (1) and (2) above.
		\item Each $\tilde{\probm}'(n,j)$ is $(u,N)$-close to $\probm'(n,j)$ by definition of \ANR\ and \cref{lem:prob-reps}.
		\item Each $\tilde{\probm}'(n,j)$ is $(u,2(N-1)N+N)$-close to $\hat{probm}(n,j)$ by \cref{prop:uclose-trans} of \cref{prop:uclose}
		and items (3) and (4), and therefore at least $(u,2N^2)$-close.
		\item Each $(\rew(n,j)\otimes \probm(n,j)\oplus(\rew(n,n))\otimes\probm(n,n))$ is 
		$(u,N+1)$-close to 
		$\rew(n,j) \tilde{\probm(n,j)} + \rew(n,n)\tilde{\probm}(n,n)$ by item (1) above and \cref{prop:uclose-imprec-ops} of \cref{prop:uclose}.
		\item  Each $\rew'(i,j)$ is $(u,2(N-1)(N+1))$-close to $\hat{\rew}(i,j)=(\rew(n,j) \tilde{\probm(n,j)} + \rew(n,n)\tilde{\probm}(n,n))/\sum_{k\neq i} \tilde{\probm}(n,k)$
		by \cref{prop:uclose-imprec-ops} of \cref{prop:uclose} applied to items (6) and (2), so at most $(u,2N^2+N)$-close.
		\item  Each $\dur'(i,j)$ is $(u,2N^2 +N)$-close to $\hat{\dur}(i,j)$ by the same argument as item (7).
	\end{enumerate}
	
	By Lemma \ref{lem:solan-original}, 
	we get that the mean-payoff value of $M'$ differ from the value in $\hat{M}$ by at most $4\eps NC$, where the $\eps$ is the maximum difference in the probabilities of $\hat{M}$ and $M'$,
	and $C$ is the maximum difference of rewards in $\hat{M}$ and $M'$. In our case, we have 
	$\eps= \big(\frac{1}{1-2^{-u-1}}\big)^{2(N)^2}-1$. 
	Since, $u\geq 1000N^2$, we have 
	$\big(\frac{1}{1-2^{-u-1}}\big)^{2(N)^2}\le 5(N^2)2^{-u}$, which gives $\eps\le 5(N^2)2^{-u}$.
	Similarly, we get $C\le 5(N^2+N)2^{-u}$, which together gives 
	$4N\eps C\le 100(N^4+N^3)2^{-u}$. Choosing $\delta_2=100(N^4+N^3)$ completes the proof of item (2).
	
	The exponent for the probabilities are always negative as the values are $\leq 1$. It is easy to see that in our updates $2\tilde{\probm}_{ij}\geq \tilde{\probm'}_{ij}\geq \tilde{\probm}_{ij}$. This implies that the negative exponents in the probabilities decreases by at most $1$. 
	
	The rewards and durations can be assumed to be $\geq 1$ and therefore the exponent can be assumed to be positive. Note that the rewards can be made $>1$ by adding the smallest reward to each step and then subtracting it from the obtained value in the end. In a single step the rewards are less than twice the maximum reward in the input Markov chain. Therefore, the maximum exponent is  at most 
	$1$ more than that in the input. A similar argument works for duration as well assuming that the durations in the input are all $\geq 1$.
\end{proof}

\lemappstate*

\begin{proof}
	
	Let $\hat{M}$ be the Markov chain obtain by 
	the precise state elimination algorithm.
	By Lemma~\ref{lem:state-elim}, $M$ has the same 
	mean-payoff value as $\hat{M}$. 
	Therefore, we bound the difference in mean-payoff value of
	$M'$ and $\hat{M}$.
	We again estimate the relative distance between the precise and imprecise variables using the same notations as above.
	Let $N$ be the number of states of the Markov chain.
	
	\begin{enumerate}
	\item Each $\probm(i,j)$ is $(u,N)$-close to the actual probability distribution of $M$, which is $\tilde{\probm}(i,j)$ by \cref{lem:prob-reps}.
	\item Each $\probm'(i,j)$ is $(u,2N+2)$-close to $\hat{\probm}(i,j)$ by item (1) above and applying \cref{prop:uclose-imprec-ops} of \cref{prop:uclose}.
	\item Each $\rew'(i,j)$ is $(u,2N+3)$-close to $\hat{\rew}(i,j)$ by applying \cref{prop:uclose-imprec-ops} of \cref{prop:uclose} to item (1). The dominating factor is the distance between $\probm(i,n)\otimes \probm(n,j)\otimes (\rew(i,n)\oplus \rew(n,j))$ and
	$\tilde{\probm(i,n)}\tilde{\probm(n,j)}(\rew(i,n)\oplus \rew(n,j))$, which gives the $2N+3$ term.
	\item Similar to item (3), we have $\dur'(i,j)$ is $(u,2N+3)$-close to $\hat{\dur}(i,j)$.
\end{enumerate}
	
	Instead of using the bound $2N+3$ obtained above, we instead use $3N$ for the sake of brevity, as $(u,2N+3)$-closeness implies $(u,3N)$-close assuming $N\ge 3$. 
	By Lemma \ref{lem:solan-original}, we get that the mean-payoff value of $M'$ differ from the value in $\hat{M}$ by at most $4N \eps C$, where the $\eps$, $\eps= \big(\frac{1}{1-2^{-u-1}}\big)^{3N}-1$. Since $u\ge 1000N^2$, we have $\big(\frac{1}{1-2^{-u-1}}\big)^{3N} \leq 1 + 7N 2^{-u}$ which gives $\eps\le 1+ 7N 2^{-u}$. We get the same bound on $C$ and therefore, $4N\eps C\le 200N 2^{-u}$. 
	Choosing $\delta_1=200N$ completes the proof of item (2).
	
	The increase in positive and decrease in negative exponents can be shown to be at most $1$ by the same argument as the loop-elimination case.
\end{proof}

\end{document}